\definecolor{mygreen}{rgb}{0,0.6,0}
\definecolor{mygray}{rgb}{0.1,0.1,0.1}
\definecolor{mymauve}{rgb}{0.58,0,0.82}
\newtheorem{Claim}[theorem]{Claim}
 \newtheorem{observation}[theorem]{Observation}
\renewcommand{\bool}{\{0,1\}}
\renewcommand{\booln}{\{0,1\}^n}
\newcommand{\etal}{{\it et al }}   
\newcommand{\eg}{{\sl e.g.~}}
\newcommand{\eq}{=}
\newcommand{\pl}{+}
\newcommand{\ex}{\mathsf{Ex}}
\renewcommand{\negl}{\textsf{negl}}
\newcommand{\eqdef}{\stackrel{\text{def}}{=}}
\title{Crooked Indifferentiability Revisited}
\author{Rishiraj Bhattacharyya\inst{1} \and Mridul Nandi\inst{2} \and Anik Raychaudhuri\inst{2}}
 \institute{ NISER, HBNI, India, {\tt rishiraj.bhattacharyya@gmail.com} \and Indian Statistical Institute, Kolkata, India, {\tt mridul.nandi@gmail.com,anikrc1@gmail.com}}
\begin{document}
\maketitle             
 \begin{abstract}
   
In CRYPTO 2018, Russell \etal introduced the notion of crooked indifferentiability to analyze the security of a hash function when the underlying primitive is subverted. They showed that the $n$-bit to $n$-bit function implemented using enveloped XOR construction (\textsf{EXor}) with $3n+1$ many $n$-bit functions and $3n^2$-bit random initial vectors (iv) can be proven secure asymptotically in the crooked indifferentiability setting.

\begin{itemize}
\item We identify several major issues and gaps in the proof by Russel \etal, We show that their proof can achieve security only when the adversary is restricted to make queries related to a single message.   
\item We formalize new technique to prove crooked indifferentiability without such restrictions. Our technique can handle function dependent subversion. We apply our technique to provide a revised proof for the \textsf{EXor} construction.
\item We analyze crooked indifferentiability of the classical sponge construction. We show, using a simple proof idea, the sponge construction is a crooked-indifferentiable hash function using only $n$-bit random iv. This is a quadratic improvement over the {\sf EXor} construction and solves the main open problem of Russel \etal \end{itemize}

 \end{abstract}
%

\section{Introduction}
\label{sec:introduction}

\noindent\textsc{Blackbox Reduction and Kleptographic attack.} Many of the modern cryptographic constructions are analyzed in a modular and inherently black-box manner. The schemes or protocols are built on underlying primitives only exploiting the functionality of the primitives. While analyzing the security, one shows a reduction saying, a  successful attack on the construction will lead to an attack against the underlying primitive. Unfortunately, this approach completely leaves out the implementation aspects. While the underlying primitive may be well studied, a malicious implementation may embed trapdoor or other sensitive information that can be used for the attack. Moreover, such implementation may well be indistinguishable from a faithful implementation.  These type attacks fall in the realm of \emph{Kleptography}, introduced by Young and Young\cite{C:YouYun96,EC:YouYun97}. The real possibility of Kleptographic attacks has been confirmed by Snowden's revelation. Recently, starting with the work of Bellare, Paterson and Rogaway \cite{C:BelPatRog14}, the attention to this setting has been rejuvenated. BPR showed that it is possible to mount an algorithm-substitution-attack against almost all known symmetric key encryption scheme to the extent that the attacker learns the secret key.  A series of work has been done in the recent years formalizing approaches resisting algorithm-subversion attacks \cite{EC:DGGJR15,EC:BelHoa15,EC:MirSte15,FSE:DegFarPoe15,C:DPSW16,AC:RTYZ16,CCS:RTYZ17,ACNS:AFMV19,CCS:AteMagVen15}. In his IACR Distinguished lecture, Rogaway urged for community-wide efforts to work on defending against Kleptography. 
%
%
%

\noindent\textsc{Crooked Indifferentiability.} In CRYPTO 2018, Russel \etal\cite{C:RTYZ18} introduced the notion of crooked indifferentiability as a security notion for hash functions in the kleptographic setting. Indeed, hash functions are ubiquitous tool of modern cryptology, and natural choice for instantiating Random Oracles in practical protocols. Subverting internal algorithms of a hash function will be a natural target for kleptographic attack. In the crooked-indifferentiability setting, thee constructions are randomized. The distinguisher can \emph{substitute} the ideal primitive $f$ by a subverted implementation which ``crooks'' the function on $\epsilon$ fraction of the inputs, producing a crooked primitive $\tilde{f}$. In order to prove crooked-indifferentiability of a construction $C^f$, one constructs a simulator $S$ such that $(C^{\tilde{f}}(.,R),f)$ and $(\mathcal{F},S^{\tilde{f}}(R))$ are indistinguishable. As in the case of classical indifferentiability \cite{TCC:MauRenHol04,C:CDMP05}, the notion of crooked indifferentiability and the corresponding composition theorem aims to formalize the idea of instantiating hash functions as Random Oracles in protocols, and prove the kleptographic security.\smallskip

Unfortunately, not much theory is known about crooked indifferentiability except the results by \cite{C:RTYZ18}. Specifically, it is unknown whether the popular modes of operations used in cryptographic hash functions can be proven secure under this notion. Looking at the potential application in the analysis of hash functions in the post-Snowden era, we believe an in-depth study of the crooked-indifferentiability notion is an extremely important problem of theoretical cryptography.
\begin{figure}[!htb]
  \begin{minipage}{0.5\linewidth}
  \centering
  \includegraphics[scale=0.6]{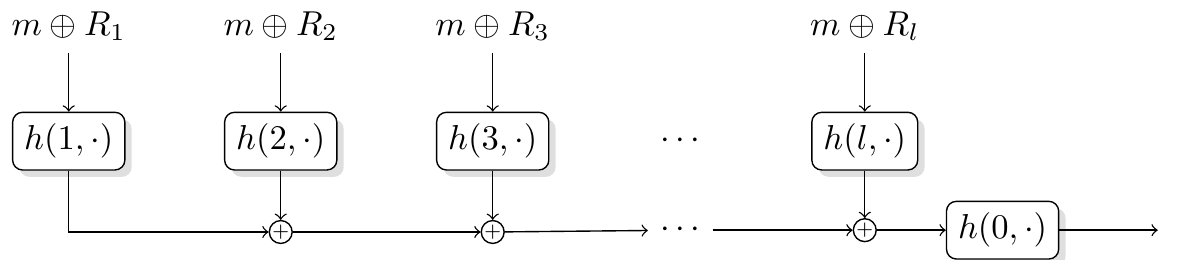}
 
  \label{fig:exor-pic}
\end{minipage}  
  \begin{minipage}{0.5\linewidth}
      \includegraphics[scale=0.5]{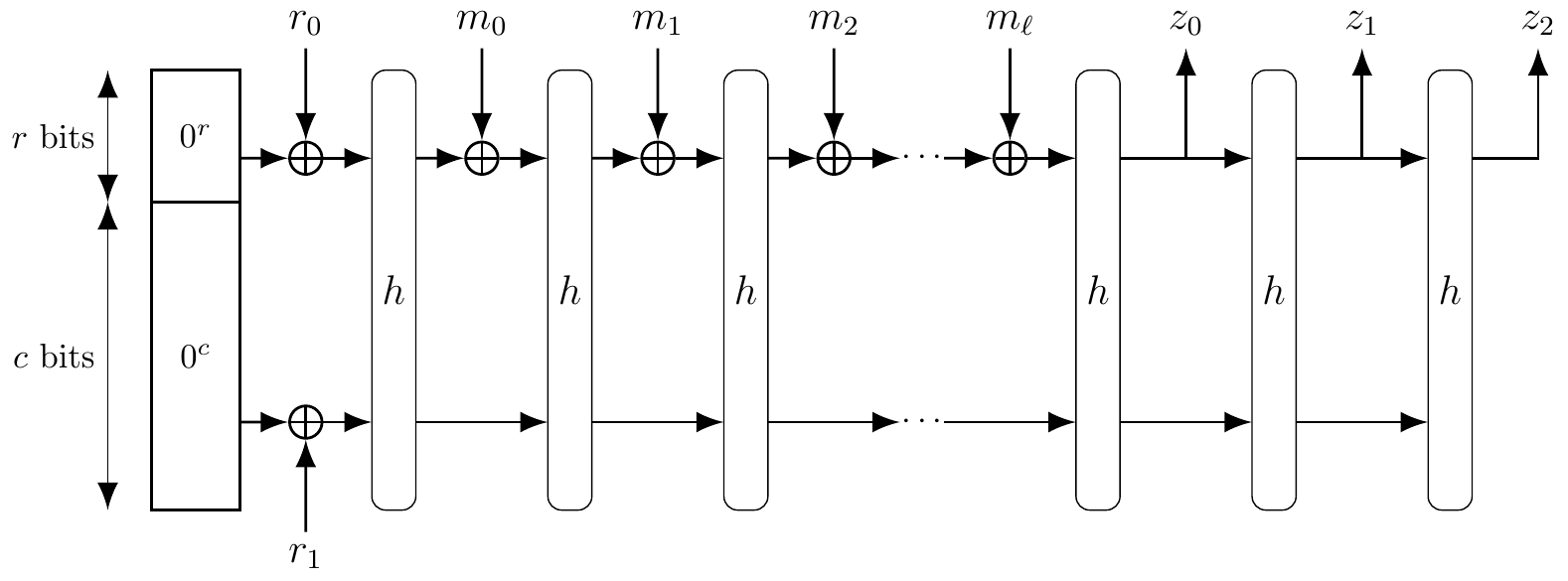}
  \label{fig:spongepic}
\end{minipage}

  \caption{Sponge and {\sf EXor} constructions.}
  \label{fig:results}
\end{figure}
\subsection{Our Contribution}
\label{sec:our-contribution}
In this paper, we introduce new techniques to prove crooked-indifferentiability of domain extension techniques. We proved crooked indifferentiability of {\sf EXor} construction (Figure \ref{fig:results} left) and the sponge construction (Figure \ref{fig:results} right).

\noindent\textbf{Another Look at RTYZ proof.}
We uncover that the proof technique of RTYZ \cite{C:RTYZ18} only works under simplifying conditions. The crooked-indifferentiability proofs of {\sf EXor} do not go through in the general setting. The simplest-to-explain issue is the following. RTYZ silently assumes that it is sufficient to prove indifferentiability showing consistency of the simulator for a single message. In other words, they implicitly restrict the indifferentiability distinguisher to make at most a single query to the construction (resp. RO in the ideal world). When applied with the composition theorem, this restriction limits the \emph{environment}, which includes the protocols and the adversary, to make in total a single query to the Random Oracle (RO). Thus, {\bf the result of \cite{C:RTYZ18} is applicable only if the protocol queries the RO once and the adversary does not even get access to the RO}. This makes the results of \cite{C:RTYZ18} invalid for any reasonable setting and the results built on it, \eg \cite{PKC:CRTYZZ19}, gets nullified. As we show in Section \ref{sec:critic}, the technique collapses if the distinguisher can check consistency for more than one message. Thus the results of \cite{C:RTYZ18}, while involves heavy technical machineries, fail to achieve any meaningful security guarantee.  In Section \ref{sec:critic}, we discuss this and other technical issues in detail. \smallskip

\noindent\textbf{Enveloped Xor Construction and its crooked-indifferentiability.}
In \cite{C:RTYZ18}, Russel \etal proved crooked-indifferentiability of the enveloped-xor construction, albeit as we just discussed, in a very restricted setting. We prove security without any such restriction. Our techniques, in comparison, are also quite simple and use basic tools of probability. \smallskip

\noindent\textbf{Crooked-indifferentiability of Sponge construction.}
While enveloped-xor construction serves as a proof of concept of crooked-indifferentiability, the construction is far from being practical. It uses $3n+1$ \emph{independent} hash functions (one call to each) and $3n^2$ bits of randomness in order to construct an $n$-bit to $n$-bit subversion resilient function! When we instantiate that to construct practical hash functions, then this cost will amplify and we end up with somewhat impractical construction. Constructing a crooked-indifferentiable hash function that uses $\mathcal{O}(n)$ random bits, and makes linear number of calls to the underlying primitive, was left open as the central question by \cite{C:RTYZ18}.

We prove crooked-indifferentiability of sponge construction, the construction underlying the SHA3 hash function, with random initialization vector. We assume the underlying primitive to be a random function. The construction uses only $n$-bit randomness and even with the most conservative parameter choices, makes at most $2n$ many calls to the underlying primitive to construct an $n$-bit to $n$-bit function. The result is not limited to produce length preserving hash functions. In fact, similar to the security enjoyed by the sponge construction in the classical setting, our result is applicable for the general sponge construction hashing $\ell$-bit messages to $s$-bit digests.




\smallskip

\section{Notations and Preliminaries}
\label{sec:notat-prel}
\noindent\textsc{Notations}. For any tuples of pairs $\tau \eq ( (x_1, y_1), \ldots, (x_s, y_s))$ we write $\mathcal{D}(\tau)$ (called domain of $\tau$) to denote the set $\{x_i : 1 \leq i \leq s \}$. $\tau_j$ denotes the sub-list containing the first $j$ entries in $\tau$. $\tau_j \eq ( (x_1, y_1), \ldots, (x_j, y_j))$.  We write $x\sample S$ to denote the process of choosing $x$ uniformly at random from a set $S$ and independently from all other random variables defined so far. For a positive integer $l$, we use $(l]$ to denote the set $\{1,\cdots,l\}$. $[l]$ is used to denote the set $\{0, 1,\ldots, l\}$. \smallskip

\noindent\textsc{Class of Functions}. 
The positive integer $n$ is our security parameter. Let $\mathcal{D} :\eq [l] \times \bin^n$. Let $\mathsf{H}$ denote the set of all functions from $\mathcal{D}$ to $\bin^n$. Similarly, $\mathsf{F}$ denote the set of all functions from $\bin^n$ to $\bin^n$. For any $z :\eq ((a_1, b_1), \ldots, (a_{q_1}, b_{q_1}))$ with $b_1, \ldots, b_{q_1} \in \bin^n$ and distinct $a_1, \ldots, a_{q_1} \in \mathcal{D}$, we write $h \vdash z$ if $h(a_i) \eq b_i$ for all $i \in (q_1]$. We denote the set of all functions $h$ such that $h \vdash z$ as $\mathsf{H}|_z$. \smallskip

\noindent\textsc{Distinguishing Advantage}. In this paper, we measure the efficiency of algorithms by their query complexity.  An oracle algorithm $\adv$ having access of one oracle is called $q$-query algorithm if it makes at most $q$ queries to its oracle. Similarly, an oracle algorithm having access to two oracles is called $(q_1, q_2)$-query algorithm, if it makes at most $q_1$ and $q_2$ queries to its first and second oracles respectively. We use $X^t$ to denote the $t$-tuple $(X_1, \ldots, X_t)$. 

\begin{definition}[Distinguishing Advantage]
\label{def:advantange}
Let $F^l $ and $G^l$ be two $l$-tuple of probabilistic oracle algorithms for some positive integer $l$. We define \emph{advantage} of an adversary $\adv$ at distinguishing $F^l $ from $G^l $ as
$$ \Delta_\adv(F^l\ ;\ G^l) = \left| \Pr[\adv^{F_1,F_2,\cdots, F_l} = 1] - \Pr[\adv^{G_1,G_2,\cdots, G_l} = 1] \right|.$$
\end{definition}

\section{Modeling Subversion Algorithms and Crooked Indifferentiability}
\label{sec:crooked}
We recall the related terms and notations introduced in \cite{C:RTYZ18} in our terminologies. \vskip3pt
\noindent\textsc{Implementor}. A $(q,\tau)$ implementor is a $ q$ query oracle algorithm $\adv^{\mathcal{O}}$ which outputs a $\tau$ query oracle algorithm $\tilde{H}^{\mathcal{O}}$, called the implementation. Let $z$ denote the transcript of oracle queries of $\adv$. For the rest of the paper, we shall assume that $\mathcal{O}$ is a function $h$, sampled uniformly at random from $\mathsf{H}$. The implementation $\tilde{H}$ is correct if for all $h \in \mathsf{H}$ and for all $x \in \mathcal{D}$,  $\tilde{h}(x) \eqdef \tilde{H}^h(x) \eq h(x)$. As $h$ is a random function, we assume without loss of generality that $\tilde{H}^h(x)$ queries $h(x)$ and it is in fact the first query it makes. The transcript $z$ is hardwired in $\tilde{H}$ and all  the $\tau$ queries made by $\tilde{H}$ are different from $\mathcal{D}(z)$. \smallskip


	


\noindent Let $\mathcal{Q}^h(\alpha)$ denote the set of all queries during the computation of $\tilde{h}(\alpha)$. 
We use $\alpha\twoheadrightarrow_h\alpha'$ to denote that $\tilde{h}(\alpha)$ queries $h(\alpha')$. Similarly, $\alpha\not\twoheadrightarrow_h\alpha'$, denotes that $\tilde{h}(\alpha)$ does not query $h(\alpha')$.  $\tilde{\alpha}_j$ denotes the $j^{th}$ query made by $\tilde{h}(\alpha)$.

\begin{definition}[crooked implementor]
	A $(q, \tau)$ implementor $\adv_1$ is called {\em $\epsilon$-crooked} for $\mathsf{H}$, if for every $h \in \textsf{H}$ and for all $0 \leq i \leq l$, $\Pr_{\alpha\sample \bin^n}(\tilde{h}(i, \alpha) \neq h(i, \alpha)) \leq \epsilon$.   
\end{definition}
\textbf{Summary.} A (crooked) implementation $\tilde{h}$, to compute $\tilde{h}(\alpha)$, queries $h(\alpha_1),\cdots, h(\alpha_\tau)$ on $\tau$ many distinct points ($\alpha_1=\alpha$) and its decision of whether to subvert $h(\alpha)$ depends on this transcript and the hardwired string $z$. For an $\epsilon$-crooked implementation, for each $h\in\mathsf{H}$, for at most $\epsilon$ fraction of $\alpha\in\mathcal{D}$, $h(\alpha)$ is subverted.

\noindent\textsc{Crooked Distinguisher.} A crooked distinguisher is a two-stage adversary; the first stage is a subverted implementor and the second stage is a distinguisher. 

\begin{definition}[crooked distinguisher]
We say that a pair $\adv :\eq (\adv_1, \adv_2)$ of probabilistic algorithms {\em $((q_1, \tau,\epsilon), q_2)$-crooked distinguisher } for $\mathsf{H}$ if 

(i) $\adv_1$ is a $\epsilon$-crooked $(q_1, \tau)$ implementor for $\textsf{H}$ and 

(ii) $\adv_2(r,z, \cdot)$ is a $q_2$-query distinguisher where $r$ is the random coin of $\adv_1$, and $z$ is the transcript of interaction of $\adv_1$ with $h$. 

\end{definition}

\noindent\textsc{Crooked Indifferentiability.} Now we state $\textsf{H}$-crooked indifferentiable security definition (as introduced in \cite{C:RTYZ18}) in our notation and terminology.

\begin{definition}[$\mathsf{H}$-crooked indifferentiability \cite{C:RTYZ18}]\label{def:crooked-old}
	Let $\mathcal{F}$ be an ideal primitive and $C$ be an initial value based $\mathcal{F}$-compatible oracle construction. The construction $C$ is said to be {\bf $((q_1, \tau)$, $ (q_2, q_{\mathrm{sim}})$, $ \epsilon, \delta)$-$\mathsf{H}$-crooked indifferentiable from $\mathcal{F}$} if there is a $q_{\mathrm{sim}}$-query algorithm $S$ (called simulator) such that for all $((\epsilon, q_1, \tau), q_2)$-crooked distinguisher $(\adv_1(r)$, $\adv_2(r, \cdot, \cdot))$ for $\mathsf{H}$, we have 
	\begin{equation}
	\Delta_{\adv_2(r,z, R)}\big((h, C^{ \tilde{h} }(R, \cdot))\ ;\ (S^{\mathcal{F},\tilde{h}}(z, R), \mathcal{F}) \big) \leq \delta
	\end{equation} 
	where $z$ is the advise string of $\adv_1^h$ and $R$ is the random initial value of the construction sampled after subverted implementation is set.  
\end{definition}
\begin{figure}[h]
	\begin{center}
		\begin{tikzpicture}[auto, node distance=1cm, >=latex',scale=0.8]
\begin{scope}
\tikzstyle{Attacker} = [draw, fill=black!05, rectangle, 
    minimum height=1.7em, minimum width=8em, drop shadow,thick]
\tikzstyle{Box} = [draw, fill=black!05, rectangle, 
    minimum height=1.7em, minimum width= 1.7em, thick]
\tikzstyle{to} = [->,thick]
\tikzstyle{line}= [-,thick]
\tikzstyle{dotto} = [->,dotted, thick]

\node [Box, name=C] {$C(R, 
	\cdot)$};
\node [Box, left =of C, name=F1] {$\tilde{H}$};
\node[Box, left=of F1, name =F]{$h$};
\node [Box, right=of C, name=S] {$S(z, R)$}; 
\node [Box, right=of S, name=G] {$\mathcal{F}$};
\node[Box, above of =S, name=h]{$\tilde{H}$};
\draw [draw,to] (C) -- (F1);
\draw[draw, to](F1)-- (F);
\draw [draw, to] (S) -- (G);
\draw [draw, to] (S) -- (h);
\draw [draw, to] (h) -- (S);

\coordinate (mid) at ($0.5*(C.east) + 0.5*(S.west)$);

\draw [draw, line] (mid) -- ++(0,0.75);
\draw [draw, line] (mid) -- ++(0,-0.75);

\node [Attacker, below of=mid, node distance=2cm, name=D] {$\mathcal{A}_2(r,z,R)$};

\coordinate (dleftmid) at ($0.5*(D.north west) + 0.5*(D.north)$);
\coordinate (drightmid) at ($0.5*(D.north east) + 0.5*(D.north)$);

\draw [thick] (dleftmid) to[bend right] ($0.5*(F.south)+0.5*(dleftmid)$);
\draw [to] ($0.5*(F.south)+0.5*(dleftmid)$) to[->, bend left, thick] (F.south);
\draw [thick](dleftmid) to[bend left, thick, dotted] ($0.5*(S.south)+0.5*(dleftmid)$);
\draw [to] ($0.5*(S.south)+0.5*(dleftmid)$) to[->, bend right, thick, dotted] (S.south);

\draw [thick] (drightmid) to[bend right, thick, dotted] ($0.5*(C.south)+0.5*(drightmid)$);
\draw [to] ($0.5*(C.south)+0.5*(drightmid)$) to[->, bend left, thick, dotted] (C.south);
\draw [thick](drightmid) to[bend left, thick, dotted] ($0.5*(G.south)+0.5*(drightmid)$);
\draw[to] ($0.5*(G.south)+0.5*(drightmid)$) to[->, bend right, thick, dotted] (G.south);

\end{scope}
\end{tikzpicture}
	\end{center}
	\caption{The crooked indifferentiability notion. In the first phase of real world, $\adv_1$ interacts with $f$ and returns an oracle algorithm $\tilde{f}$ (which would be accessed by the construction $C$ in the second phase). In the second phase the random initial value $R$ will be sampled and given it to the construction $C$ and also to $\adv_2$. In ideal world, simulator $S^{\mathcal{F}}$ gets the advise string of the first phase, blackbox access to the subverted implementation $\tilde{H}$ and the initial value $R$. 
	} 
	\label{fig:indiff}
\end{figure} 

\noindent\textsc{Two-Stage Distinguishing Game.} Now we explain the distinguishing game. In the first stage, $\adv_1^h$ outputs $\tilde{H}$ after interacting with a random oracle $h$. Then, a random initial value, $R$, of the hash construction $C$ is sampled. In the real world, $\adv_2$ interacts with the same $h$ of the firsts stage and the construction $C^{\tilde{h}}(R, \cdot)$. In the ideal world, the simulator $S$  gets the advice-string $z$, the initial value $R$ and blackbox access to the subverted implementation $H$ as inputs,\footnote{If an algorithm $S$ gets a blackbox access to the oracle algorithm $H$ as an input, it can compute $H(x)$ by invoking $H$ with input $x$ and responding to the oracle queries of $H$.} and gets oracle access of a random oracle $\mathcal{F}$. Simulator is aimed to simulate $h$ so that behavior of $(h, C^{\tilde{h}})$ is as close as $(S, \mathcal{F})$ to the distinguisher $\adv_2$. \smallskip


\noindent\textsc{Convention on Crooked Distinguishers}: Note that there is no loss to assume that both $\adv_1$ and $\adv_2$ are deterministic (so we skip the notation $r$) when we consider computational unbounded adversary\footnote{$\adv_1$ can fix the best random coin for which the distinguishing advantage of $\adv_2$ is maximum.}. We also assume that $\adv_2$ makes all distinct queries and distinct from the queries made by $\adv_1$. We skip the notation $z$ as an input of $\adv_2$ as it is fixed throughout the game.

\section{Basic Tools of Crooked Indifferentiability}
\label{sec:basic-tools-crooked}
In this section we develop the basic notations and tools used throughout the rest of the paper. The basic technique for designing an efficient simulator in the classical indifferentiability setting, was lazy sampling maintaining consistency. The presence of possibly subverted points in the crooked indifferentiability setting makes direct application of the idea infeasible. We start by defining the following indicator.

\[
d(\alpha, h)= 
\begin{cases}
1,  & \text{if } \tilde{h}(\alpha) \neq h(\alpha) \mbox{ or } \alpha \in \mathcal{D}(z) \\
0,  & \text{otherwise } 
\end{cases}\] 
In other words, $d$ sets value one for an element $\alpha \in \mathcal{D}\setminus\mathcal{D}(z)$, if it is crooked.\footnote{Note that $\alpha \in \mathcal{D}(z)$ was not considered in \cite{C:RTYZ18}. However, it will be later clear that we need this simple modification.} We say $\alpha$ is \emph{problematic} for $h$ if $d(\alpha, h)=1$.

\noindent For every $\alpha \in \mathcal{D}, \beta \in \bool{n}$, we use $h_{\alpha \rightarrow \beta}$ to denote the function which agrees with $h$ on all points possibly except at $\alpha$  which the function $h_{\alpha \rightarrow \beta}$ maps to $\beta$. 
Note that if $h(\alpha) \eq \beta$ then 
$h_{\alpha \rightarrow \beta} \eq h$. Let $D^1(\alpha, h)  \eq \ex_{\beta}(d(\alpha, h_{\alpha \rightarrow \beta}))$ where the expectation is computed under $\beta \sample \bin^n$.

\subsubsection{Bounding the Probability of Problematic Points}
\label{sec:bound-problematic}
 Our first step is to show that even for the subverted implementation of a randomly chosen function, a randomly chosen point of the domain is \emph{not problematic} with high probability.

We follow the notations developed in the previous sections. For $1 \leq j \leq \tau$, let $D^j(\alpha,h)=\mathsf{Ex}_{\beta}(d(\alpha,h_{\tilde{\alpha}_{j}\rightarrow \beta}))$ (average number of crooked point after we resample the output of the $j$th query made by the subverted implementation). 

\begin{lemma}\label{lemma:Dtilde}
	 Let $\alpha \sample \mathcal{D}$, $h \sample \mathsf{H}|_z$. For any $\epsilon$-crooked implementation $H_z$, for all $1\leq j\leq \ell$ 
	\begin{align}\label{dtile:eq}
	\mathsf{Ex}_{\alpha,h}\big({D}^j(\alpha, h)\big) & \leq \epsilon_1 :\eq (\epsilon \pl q_1 2^{-n})
	\end{align}
      \end{lemma}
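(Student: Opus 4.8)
The plan is to push the inner expectation over $\beta$ out to the front, so that the target collapses to bounding the single quantity $\ex_{\alpha, h, \beta}\big(d(\alpha, h_{\tilde{\alpha}_j \rightarrow \beta})\big)$, where $\alpha \sample \mathcal{D}$, $h \sample \mathsf{H}|_z$ and $\beta \sample \bin^n$ are independent. The whole argument then rests on one distributional claim: the pair $(\alpha, h_{\tilde{\alpha}_j \rightarrow \beta})$ has exactly the same law as $(\alpha, h)$ with $h \sample \mathsf{H}|_z$. Granting this, the left-hand side equals $\ex_{\alpha, h}\big(d(\alpha, h)\big)$, a plain probability I can bound directly.

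For the distributional claim — which I expect to be the crux — I would argue that resampling the output of the $j$th query is measure preserving. The key structural fact is that the query point $\tilde{\alpha}_j = \tilde{\alpha}_j(\alpha, h)$ is adaptive only on the responses to the earlier queries $\tilde{\alpha}_1, \ldots, \tilde{\alpha}_{j-1}$, all of which are distinct from $\tilde{\alpha}_j$ since the implementation makes $\tau$ distinct queries. Consequently, overwriting the value $h(\tilde{\alpha}_j)$ with a fresh $\beta$ does not change which point is being resampled, i.e.\ $\tilde{\alpha}_j(\alpha, h) \eq \tilde{\alpha}_j(\alpha, h_{\tilde{\alpha}_j \rightarrow \beta})$. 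I would make this precise by fiber counting on the map $\Phi_\alpha : (h, \beta) \mapsto h_{\tilde{\alpha}_j(\alpha,h) \rightarrow \beta}$ defined on $\mathsf{H}|_z \times \bin^n$: for each target $g \in \mathsf{H}|_z$, setting $p \eq \tilde{\alpha}_j(\alpha, g)$, the preimage consists precisely of the pairs $(g', g(p))$ where $g'$ agrees with $g$ off $p$, giving a fiber of size exactly $2^n$ for every $g$. Hence $\Phi_\alpha$ maps the uniform distribution to the uniform distribution, which is the claim. Since $\tilde{\alpha}_1 \eq \alpha$ carries no dependence on $h$ at all, the $j=1$ case is the simplest special instance.

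It then remains to bound $\ex_{\alpha, h}\big(d(\alpha, h)\big)$. By the union bound this is at most $\Pr_{\alpha,h}[\tilde{h}(\alpha) \neq h(\alpha)] + \Pr_\alpha[\alpha \in \mathcal{D}(z)]$. Writing $\alpha \eq (i, x)$ with $i$ uniform in $[l]$ and $x \sample \bin^n$, the $\epsilon$-crooked property gives $\Pr_x[\tilde{h}(i,x) \neq h(i,x)] \leq \epsilon$ for every fixed $h$ and every index $i$; averaging over $i$ and over $h \sample \mathsf{H}|_z$ keeps this at most $\epsilon$. For the second term, $\mathcal{D}(z)$ has at most $q_1$ elements while $|\mathcal{D}| \eq (l+1)2^n$, so $\Pr_\alpha[\alpha \in \mathcal{D}(z)] \leq q_1/((l+1)2^n) \leq q_1 2^{-n}$. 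Combining yields the stated bound $\epsilon_1 \eq \epsilon \pl q_1 2^{-n}$.

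The main obstacle is entirely the measure-preservation step: because $\tilde{\alpha}_j$ is itself a random variable depending on $h$, one must check carefully that resampling its output does not bias the location of the resampled point. The distinctness of the $\tau$ queries together with their adaptivity on strictly earlier responses is exactly what makes this go through, and it is the place where an imprecise argument would silently fail.
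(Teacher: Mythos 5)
Your proposal is correct and follows essentially the same route as the paper: you reduce to bounding $\ex_{\alpha,h,\beta}\big(d(\alpha, h_{\tilde{\alpha}_j\rightarrow\beta})\big)$, show the resampling map is measure-preserving by the exact $2^n$-to-one fiber-counting argument the paper isolates as Claim~\ref{claim:sizefb} (including the key use of distinctness of the $\tau$ queries and adaptivity only on strictly earlier responses), and then bound $\ex_{\alpha,h}\big(d(\alpha,h)\big)$ by $\epsilon \pl q_1 2^{-n}$ via the union bound and the $\epsilon$-crooked property. No gaps.
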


\subsubsection{Robust points and Robust functions}
\label{sec:robust-points-robust}

\begin{definition}[{\bf Robust points}]
  We say a point $\alpha\in\mathcal{D}$ is robust for function $h$ if
  \begin{enumerate}
  \item $\alpha$ is unsubverted under $h$; $d(\alpha, h) \eq 0$.
   \item If we resample the output of $j^{th}$ query of $\tilde{h}(\alpha)$ then the probability that $h(\alpha)$ becomes subverted is at most $\epsilon_1^{1/2}$. In other words, it holds that
     \begin{align*}
       {D}^j(\alpha, h) \leq \epsilon_1^{1/2}
     \end{align*}
  \end{enumerate}
\end{definition}

\begin{definition}[{\bf Good point-function pair}]
  \label{def:good-pair}
We call a pair $(\alpha', h)$ \emph{good} if $\alpha'$ is queried only by a few robust points. In other words,   
\begin{enumerate}
       \item for all $\alpha \twoheadrightarrow_h \alpha'$ it holds that $\alpha$ is robust for $h$.
  	\item the number of $\alpha$ which queries $\alpha'$ is at most $1/\epsilon_1^{1/4}$.
        \end{enumerate}
        We call the pair  $(\alpha', h)$ bad, if it is not good. Let $G$ be the collection of all such good pairs. For any function $h$, let $G_h :\eq \{\alpha': (\alpha', h) \in G\}$ and $B_h :\eq \{\alpha': (\alpha', h) \not\in G\}$. 
\end{definition}

In what follows we use $\alpha, \alpha' \sample \mathcal{D}$ and $h \sample \mathsf{H}|_z$ whenever these are used as random variables to compute probabilities. Otherwise, these are considered to be fixed elements from their respective domains. 

\noindent The following lemma says that a randomly sampled point-function pair is good with high probability.
\begin{lemma}
  \label{lemma:point-function-good}
$\Pr_{\alpha', h}( (\alpha', h) \mbox{ is bad}) \leq \epsilon_2 :\eq 3\tau \epsilon_1^{1/4} $
\end{lemma}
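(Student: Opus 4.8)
The plan is to bound the probability that $(\alpha', h)$ is bad by analyzing the two failure conditions of Definition~\ref{def:good-pair} separately and applying a union bound. A pair is bad if either (1) some $\alpha$ with $\alpha \twoheadrightarrow_h \alpha'$ fails to be robust, or (2) more than $1/\epsilon_1^{1/4}$ many points query $\alpha'$. I would set $\epsilon_2$ as the sum of the bounds on these two events, matching the claimed $3\tau\epsilon_1^{1/4}$.

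First I would handle the robustness condition. A point $\alpha$ is non-robust if either it is subverted ($d(\alpha,h)=1$) or if $D^j(\alpha,h) > \epsilon_1^{1/2}$ for some $j$. By Lemma~\ref{lemma:Dtilde}, we have $\mathsf{Ex}_{\alpha,h}(D^j(\alpha,h)) \leq \epsilon_1$ for each of the $\tau$ values of $j$; Markov's inequality then gives $\Pr_{\alpha,h}(D^j(\alpha,h) > \epsilon_1^{1/2}) \leq \epsilon_1^{1/2}$, and a union bound over $j$ yields $\Pr(\alpha \text{ non-robust}) \leq \tau\epsilon_1^{1/2}$ (the subversion event $d(\alpha,h)=1$ is subsumed, since $j=1$ corresponds to resampling $h(\alpha)$ itself and controls it). The subtlety is translating a statement about a random \emph{source} point $\alpha$ into a statement about the random \emph{target} $\alpha'$: I would argue that if we pick $\alpha\sample\mathcal{D}$ and look at a query target, or equivalently bound the chance that $\alpha'$ is queried by some non-robust point, the expected number of non-robust points querying $\alpha'$ is controlled because each point makes only $\tau$ queries, so summing over targets and dividing by $|\mathcal{D}|$ keeps the per-target expectation of order $\tau\epsilon_1^{1/2}$.

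For the second condition, I would bound the probability that $\alpha'$ is queried by more than $1/\epsilon_1^{1/4}$ points. The total number of query-arrows is at most $|\mathcal{D}|\cdot\tau$ (each of the $|\mathcal{D}|$ points makes at most $\tau$ queries), so the average in-degree of a target is at most $\tau$. By Markov applied to the in-degree of a random target $\alpha'$, the probability that the in-degree exceeds $1/\epsilon_1^{1/4}$ is at most $\tau\epsilon_1^{1/4}$. Combining: condition~(1) contributes roughly $\tau\epsilon_1^{1/2}\le\tau\epsilon_1^{1/4}$ (possibly after passing through the high-degree case) and condition~(2) contributes $\tau\epsilon_1^{1/4}$, and accounting for the interplay between them (a bad robustness event restricted to the good-degree case still needs an arrow to $\alpha'$) gives at most $3\tau\epsilon_1^{1/4}$ after the union bound.

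The main obstacle I anticipate is the change of measure from source points to target points in condition~(1). Lemma~\ref{lemma:Dtilde} is naturally phrased over $\alpha\sample\mathcal{D}$, the point being computed, whereas goodness of $(\alpha',h)$ quantifies over all $\alpha$ that \emph{query} $\alpha'$. The clean way to reconcile these is to first condition on $\alpha'$ having bounded in-degree (at most $1/\epsilon_1^{1/4}$, the complement of condition~(2)), and then bound the expected number of non-robust in-neighbours: the in-degree bound limits how much a single heavily-queried $\alpha'$ can amplify a rare per-point non-robustness probability. Handling this amplification correctly, and choosing the threshold exponents $1/2$ and $1/4$ so that the Markov bounds line up, is the delicate step; the rest is routine union-bounding. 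I expect the factor of $3$ to emerge precisely from bundling these three contributions (the subversion/robustness failure, the $D^j$ failure over $j$, and the degree failure).
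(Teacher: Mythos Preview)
Your decomposition into the two failure modes of Definition~\ref{def:good-pair}, the use of Lemma~\ref{lemma:Dtilde} plus Markov for non-robustness, the averaging/Markov argument for the in-degree condition, and the source-to-target conversion via the $\tau$-query bound are exactly the paper's proof. One small correction: the event $d(\alpha,h)=1$ is not literally subsumed by control of $D^1(\alpha,h)$, since $D^1$ is an average over resampled values of $h(\alpha)$ and says nothing about the actual value; the paper handles it as a separate $\epsilon$-term, but this is harmlessly absorbed into the final $2\tau\epsilon_1^{1/4}$ contribution just as you expect.
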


A function $h$ is said to be robust if $\Pr_{\alpha'} (\alpha' \in B_h) \leq \epsilon_2^{1/2}$ A simple application of Markov inequality proves the following lemma which says that a random function is robust with high probability.

\begin{lemma}
  \label{lemma:robusth}
$\Pr_h(h \mbox{ is not robust}) \leq \epsilon_2^{1/2}$
\end{lemma}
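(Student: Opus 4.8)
The plan is to read off the claim as a direct consequence of Lemma \ref{lemma:point-function-good} via Markov's inequality, exactly as the surrounding text hints. The key object is the random variable
\[
X(h) \ :\eq\ \Pr_{\alpha'}(\alpha' \in B_h),
\]
which, for a \emph{fixed} function $h$, records the conditional probability (over $\alpha' \sample \mathcal{D}$) that the pair $(\alpha', h)$ is bad. By definition, $h$ fails to be robust precisely when $X(h) > \epsilon_2^{1/2}$, so the quantity we must bound is $\Pr_h\big(X(h) > \epsilon_2^{1/2}\big)$.

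First I would compute $\Ex_h[X(h)]$. Since $\alpha' \sample \mathcal{D}$ and $h \sample \mathsf{H}|_z$ are sampled independently as stipulated, the tower property of expectation gives
\[
\Ex_h[X(h)] \ =\ \Ex_h\big[\Pr_{\alpha'}(\alpha' \in B_h)\big] \ =\ \Pr_{\alpha', h}\big((\alpha', h)\mbox{ is bad}\big) \ \leq\ \epsilon_2,
\]
where the final inequality is exactly Lemma \ref{lemma:point-function-good}. In words: averaging the conditional bad-probability over the choice of $h$ simply recovers the joint bad-probability, which we have already bounded.

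Then, since $X(h)$ is a nonnegative random variable, Markov's inequality applied at threshold $\epsilon_2^{1/2}$ yields
\[
\Pr_h\big(h \mbox{ is not robust}\big) \ =\ \Pr_h\big(X(h) > \epsilon_2^{1/2}\big) \ \leq\ \frac{\Ex_h[X(h)]}{\epsilon_2^{1/2}} \ \leq\ \frac{\epsilon_2}{\epsilon_2^{1/2}} \ =\ \epsilon_2^{1/2},
\]
which is the desired bound.

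There is essentially no substantive obstacle here; the only point requiring a moment of care is the averaging identity in the second paragraph, namely that $\Ex_h[\Pr_{\alpha'}(\alpha' \in B_h)]$ equals the joint probability $\Pr_{\alpha', h}((\alpha', h)\text{ bad})$. This is immediate from the independence of the two sampling processes and the definition of joint probability, but it is the one place where one must be explicit about which variables are being averaged over, so as not to conflate the fixed-$h$ conditional probability with the joint one.
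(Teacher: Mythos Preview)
Your proof is correct and is exactly the approach the paper intends: the text preceding the lemma explicitly says it follows from ``a simple application of Markov inequality,'' and the appendix spells out precisely this template (take $f(h)=\Pr_{\alpha'}((\alpha',h)\in E)$, note $\Ex_h f(h)=\Pr_{\alpha',h}(E)\le\epsilon_2$, then Markov at threshold $\epsilon_2^{1/2}$).
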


\section{Crooked Indifferentiability of Enveloped XOR construction}
\label{sec:crook-indiff-envel}

In this section we analyze the crooked indifferentiability of security \textsc{EXor}. The construction was analyzed in \cite{C:RTYZ18}, as we shall show, with very restrictive and impractical assumptions. First, we formally define the Enveloped XOR construction with randomized iv. \medskip

\noindent\textsc{Enveloped XOR Construction} or \textsf{EXor}:
We fix two positive integers $n$ and $l$. Let $\mathcal{D} :\eq [l] \times \bin^n$. Let $\mathsf{H}$ be the class of all functions $h : \mathcal{D} \to \bin^n$. 
For every $x \in \bin^n$ and an initial value $R :\eq (r_1, \ldots, r_l) \in (\bin^n)^l$, we define 
$$g_R(m) \eq \bigoplus_{i \eq 1}^{l} h(i, m \oplus r_i)\  \mbox{ and }\ \ \textsf{EXor}(R, m) \eq h(0, g_R(m)).$$ 
Our main result in this section is the following theorem.
\begin{theorem}
  \label{thm:exor}
Let $l\geq n$ and $h : \mathcal{D} \to \bin^n$ be a family of random functions and
$\textsf{EXor}:\bool^{n}\to\bool^{n}$ be the enveloped-xor construction.  Then there exists a simulator $S$ such that for all $(\kappa,\tau, \epsilon)$ crooked distinguisher $\adv=(\adv_1,\adv_2)$
\begin{align*}
  \mbox{\bf Adv}_{\adv,(\textsf{EXor},h)}^{\mbox{\sf crooked-indiff}}\leq  2\epsilon q_2 \pl 2q_2(q_1 \pl q_2)/2^n \pl \sqrt[8]{3\tau\left(\epsilon+\frac{q_1}{2^n}\right)} \pl \frac{1}{2^n}
 \end{align*}
where $q_1$ and $q_2$ are the total number of queries made by the first and the second stage adversaries respectively. The simulator makes one query to the random oracle $\mathcal{F}$ and makes $l$ many calls to the subverted implementation $\tilde{h}$.
\end{theorem}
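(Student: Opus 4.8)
The plan is to exhibit an explicit simulator and then bound the distinguishing advantage through a short sequence of game transitions, isolating one bad event per term of the claimed bound. \textbf{Simulator.} The simulator $S^{\mathcal{F},\tilde h}(z,R)$ maintains a partial table for the primitive $h$. On an inner query $(i,x)$ with $1\le i\le l$ it returns a fresh uniform $n$-bit string (answering repeats consistently). Since $S$ knows $R$ and has blackbox access to $\tilde h$, whenever the current table suffices to evaluate $\tilde h(i,m\oplus r_i)$ for all $i\in(l]$ for a candidate message $m=x\oplus r_i$, it computes $g_R(m)=\bigoplus_{i=1}^{l}\tilde h(i,m\oplus r_i)$ with $l$ calls to $\tilde h$ and stores the pair $(m,g_R(m))$. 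On an outer query $(0,y)$ it checks whether $y=g_R(m)$ for a unique stored $m$; if so it makes the single query $\mathcal{F}(m)$ and programs $h(0,y):=\mathcal{F}(m)$, and otherwise it returns a fresh uniform value. By construction, whenever the distinguisher reassembles $\textsf{EXor}(R,m)=\tilde h(0,g_R(m))$ from simulated answers, the outcome equals $\mathcal{F}(m)$ as long as the outer point $(0,g_R(m))$ is unsubverted.

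\textbf{Game transitions.} Starting from the real game $(h,C^{\tilde h}(R,\cdot))$, I would first condition on the sampled $h$ being robust; by Lemma~\ref{lemma:robusth} this costs the $\sqrt[8]{3\tau(\epsilon+q_1/2^n)}$ term, after which every inner computation that matters touches only good point--function pairs in the sense of Definition~\ref{def:good-pair}. Next I replace the construction oracle by the lazily sampled, $\mathcal{F}$-programmed answers that $S$ produces; this rewrite is perfect unless (i) some $g_R(m)$ for a queried $m$ collides with an outer point already fixed in the table, or (ii) two distinct queried messages satisfy $g_R(m)=g_R(m')$. A union bound over the $q_2$ construction queries against the $q_1+q_2$ table entries gives the $2q_2(q_1+q_2)/2^n$ term. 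The programmed value $\mathcal{F}(m)$ then agrees with the genuine construction value precisely when $(0,g_R(m))$ is unsubverted; bounding this failure over the $q_2$ messages in both worlds yields the $2\epsilon q_2$ term, and a single $1/2^n$ absorbs a degenerate choice of $R$.

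\textbf{Core estimate and main obstacle.} The crux, which I expect to be the hardest step, is to show that for each queried $m$ the value $g_R(m)$ is essentially uniform and hence that $(0,g_R(m))$ is subverted with probability at most $\epsilon+o(1)$, even though the subversion pattern of $\tilde h$ depends on the whole function $h$. Here I would lean on the machinery of Section~\ref{sec:basic-tools-crooked}: robustness of $h$ guarantees that, apart from a negligible fraction of exceptions controlled by Lemmas~\ref{lemma:Dtilde}--\ref{lemma:point-function-good}, the inner points $(i,m\oplus r_i)$ are robust, so resampling the output of a single inner query simultaneously keeps that coordinate unsubverted (the robust-point bound $D^j\le\epsilon_1^{1/2}$) and re-randomizes the XOR $g_R(m)$ into a value independent of the subverted set. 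Conditioned on $g_R(m)$ being uniform, the $\epsilon$-crooked property bounds by $\epsilon$ the chance that $(0,g_R(m))$ is subverted, so that $\tilde h(0,g_R(m))=h(0,g_R(m))$ and the programming is faithful. The delicate point is that $R=(r_1,\ldots,r_l)$ is sampled only \emph{after} $\tilde h$ is fixed, so the implementor cannot pre-align its subverted set with the coordinates $(i,m\oplus r_i)$; making this independence rigorous and pushing it through the XOR so that one freshly resampled robust coordinate dominates is exactly what the good-pair and robust-function lemmas are designed to enable, and combining these per-coordinate estimates into the per-message bound is the technical heart of the argument.
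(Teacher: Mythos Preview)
Your high-level outline tracks the paper's: the same simulator (modulo when programming happens), the same bad event (the outer point $(0,\tilde g_R(m))$ is already fixed or crooked), and the same attribution of the $\epsilon_2^{1/2}$ and $1/2^n$ terms. But your ``core estimate'' reproduces precisely the gap that the paper exposes in \cite{C:RTYZ18}. You write that resampling one robust inner coordinate ``re-randomizes the XOR $g_R(m)$ into a value independent of the subverted set,'' and then invoke the $\epsilon$-crooked bound. This is exactly the inference the paper refutes in Section~\ref{sec:issues-with-approach}: the crooked set $\mathcal{C}^h$ and the list $\mathcal{D}(L_i)$ both \emph{depend on $h$}, so after resampling a coordinate of $h$ they may change; you cannot bound $\Pr(\tilde g_R(m)\in E_i)$ by $|E_i|/2^n$ when $E_i$ is a function of the very coordinate you resampled. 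The paper's fix is not a sharper uniformity estimate; it is an \emph{invariance} argument. For each $m$ they pick a specific index $i$ (the ``index of interest'') and a large set $\mathcal{S}\subseteq\{0,1\}^n$ such that for every $\beta\in\mathcal{S}$, replacing $h(\alpha_i)$ by $\beta$ leaves (i) the whole adversarial transcript unchanged, (ii) the level-$0$ crooked set unchanged ($C_0^{h_\beta}=C_0^h$), and (iii) the index of interest itself unchanged (Lemma~\ref{lemma:index-independent-beta}). Only with all three invariances does $\tilde g_R^{h_\beta}(m)=\beta\oplus\text{const}$ vary over a target set that is literally fixed as $\beta$ ranges over $\mathcal{S}$, yielding Proposition~\ref{prop:key}.

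Two concrete pieces your sketch is missing. First, robustness of $h$ alone is not enough: the index of interest must additionally satisfy $\alpha_j\not\twoheadrightarrow_h\alpha_i$ for all $j<i$, so that the other XOR summands do not see the resampled value; this is why the paper works with the joint critical set $\mathcal{G}^*$ over $(R,h)$ (Lemma~\ref{lemma:goodRh}) rather than just Lemma~\ref{lemma:robusth}, and where the $1/2^n$ genuinely enters. Second, Lemma~\ref{lemma:index-independent-beta} is essential and nontrivial: if the index of interest for $(R,h_\beta)$ could differ from that for $(R,h)$, the conditioning on ``$i$ is the index of interest'' would bias the distribution of $\beta$ and the counting in Proposition~\ref{prop:key} would collapse. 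Without these two ingredients, your plan to ``push independence through the XOR'' is Claim~\ref{claim1:old}, which the paper explicitly invalidates.
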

Restricting $q_1,q_2<<2^n$ $q_2<<1/\epsilon$, and $\tau q_1 << 2^{n/8}$ and $\tau \epsilon << 2^{n/8}$, we make the crooked-indifferentiability advantage of any adversary negligible.

\subsection{Proof of Theorem \ref{thm:exor}}

\subsubsection{A brief detour: classical indifferentiability simulator for {\sf EXor}}
\label{sec:brief-deto-class}
Before describing the crooked indifferentiability simulator, we would like to briefly recall the principle behind the indifferentiability simulator and proof principles behind {\sf EXor} construction in the classical setting. \\
The goal of the simulator is to simulate each $h(i,\cdot)$ honestly so that for every queried message $m$, it holds that $\textsf{EXor}(R, m) \eq \mathcal{F}(m)$ for all queried $m$. Without loss of generality, assume that whenever the adversary makes queries $h(i, x)$ for $i > 0$, it also makes queries $h(j, x \oplus {r}_i \oplus {r}_j)$ for all $j > 0$ simultaneously. In other words, it makes a batch query of the form $({h}(j, m \oplus {r}_j))_{1 \leq j \leq l}$ for some $m \in \bin^n$. We simply say that the adversary $\adv$ queries $m$ to $g_R$ and obtains responses $({h}(j, m \oplus {r}_j))_{1 \leq j \leq l}$.

\noindent On receiving a batch query $g_R(m)$, the simulator will honestly sample outputs for the corresponding $h(i,m\xor R_i)$ queries for all $i\in (l]$, and compute $g_R(m)$ by xoring those sampled outputs. Also, the simulator will save the queried $m$ along with the computed $g_R(m)$ in a list $L$. For a $h(0,x)$ query, the simulator will first search in $L$, whether for some $m$, it has given $x=g_R(m)$ as output. If yes, the simulator simply returns $\mathcal{F}(m)$. If no such entry exists, the simulator samples an output $z$ uniformly at random and returns $z$.

\noindent Now, we briefly recall how the indifferentiability is proved for this simulator. There are two bad events.
\begin{itemize}
\item for distinct $m,m'$, it holds that $g_R(m)=g_R(m')$. In this case, the simulator, on query $h(0,g_R(m))$ can not be consistent with both $\mathcal{F}(m)$ and $\mathcal{F}(m')$ with any significant probability.
  \item For a batch query $g_R(m)$ the output is such that it matches with a previous $h(0,.)$ query. In this case, the simulator has already given output to the $h(0,.)$ query which, with all but negligible probability, is not equal to $g_R(m)$. 
\end{itemize}

One can indeed summarize these bad events as one; $g_R(m)\in E$, where $E$ is the set of $h(0,.)$ queries made by the adversary.

\subsubsection{The Simulator for Crooked Indifferentiability}
\label{sec:simul-crook-indiff}

We now describe the simulator in the crooked indifferentiability setting. The same simulator is used in  \cite{C:RTYZ18}. We define the functions, when instantiated with subverted algorithms.
$$\tilde{g}_R(m) :\eq  \bigoplus_{i \eq 1}^{l} \tilde{h}(i, m \oplus \textsf{r}_i) \mbox{ and } \widetilde{\textsf{EXor}}(R, x) \eq \tilde{h}(0, \tilde{g}_R(x)).$$
In other words, if we express 
${\textsf{EXor}}(R, x)$ as ${\textsf{EXor}}^h(R, x)$ then
$\widetilde{\textsf{EXor}}(R, x)$ represents ${\textsf{EXor}}^{\tilde{h}}(R, x)$. \medskip

Note, here the main goal of the simulator is different. It needs to simulate $h \sample \textsf{H}$ as honestly \footnote{perfectly simulating a random function. If the responses are already in the list it returns that value, otherwise, it samples a fresh random response and includes the input and output pairs in the list.} as possible such that $\widetilde{ \textsf{EXor}}(R, m) \eq \mathcal{F}(m)$ for all queried $m$. Thus the simulator needs to ensure that the output of the random oracle matches with the \emph{subverted implementation} of $\textsf{EXor}$.

The simulator maintains a list $L$ of pairs $(\alpha, \beta)$ to record $h(\alpha) \eq \beta$ for $\alpha \in \mathcal{D}$ and $\beta \in \bin^n$. It also maintains a sub-list $L^A \subseteq L$ consisting of all those pairs which are known to the distinguisher. Both lists are initialized to $z$ (the advice-string in the first stage which we fix to any tuple of $q_1$ pairs). $L_0=L_0^A=z$. Now we describe how the simulator responds.

\begin{enumerate}
\item (Query $h(0, w)$) We call this query a Type-1 Query.  Type-1 Queries are returned honestly. If $((0, w),y)\in L$ for some $y$, the simulator returns the same $y$. Otherwise, it samples $y$ uniformly from $\bin^n$, updates the list $L$ and $L^A$, and returns $y$. \smallskip

\item (Query $g_R(m)$) We call this Type-2 Query. For a query $g_R(m)$ (i.e. batch query) the simulator computes $\tilde{h}(\alpha_j)$ for all $j$, one by one by executing the subverted implementation $\tilde{H}$, where $\alpha_j \eq (j, m \oplus R_j)$. During this execution, simulator responds honestly to all queries made by the subverted implementation and updates the $L$-list by incorporating all query responses of $h$. However, it updates $L^A$ list only with $(\alpha_j, h(\alpha_j))$ for all $j$. Let  $\tilde{\tt g} :\eq \bigoplus_j \tilde{h}(\alpha_j)$. If $(0, \tilde{\tt g})  \in \mathcal{D}(L)$, the simulator {\bf aborts} . If the simulator does not abort, it makes a query $\mathcal{F}(m)$ and adds $((0, \tilde{\tt g}), \mathcal{F}(m))$ into the both lists $L$ and $L^A$. 
\end{enumerate}

\noindent For ${h}(0, w)$ made by $\adv_2$ where $w \eq \tilde{g}_R(m)$ for some previous query $m$ to $g_R$, the simulator responds as $\mathcal{F}(m)$. \smallskip

\noindent\textsc{Cautionary Note}. Even though $\mathcal{F}$ is a random oracle, we cannot say that the probability distribution of the response of $(0, \tilde{\tt g})$ in the ideal world is uniform. Note that, the adversary can choose $m$ after making several consultations with $\mathcal{F}$. In other words, $m$ can be dependent on $\mathcal{F}$. For example, the adversary can choose a message $m$ for which the last bit of $\mathcal{F}(m)$ is zero. Thus, the response for the query $(0, w)$ always has zero as the last bit (which diverts from the uniform distribution). \medskip

\noindent\textbf{Transcript}: Now we describe what is the transcript to the distinguisher and for the simulator in more detail. First, we introduce some more relevant notations.
\begin{enumerate}
\item Let $L^F$ denote the set of all pairs $(m' , \tilde{z})$ of query response of $\mathcal{F}$ by $\adv_2$. \smallskip

\item Let $L^{g}$ denote the set of all pairs $(m, \beta^l)$ of query response of $g_R$ oracle (batch query) made by $\adv_2$ to the simulator where $\beta^l :\eq (\beta_1, \ldots, \beta_l)$ and $\beta_j \eq h(j, m \oplus R_j)$ for all $j$. According to our convention all these $m$ must be queried to $\mathcal{F}$ beforehand. \smallskip

\item As we described, we also have two lists, namely $L$ and its sublist $L^A$, keeping the query responses of $h$ oracle.  
\end{enumerate}

Now we define the transcript and partial transcript of the interaction. 
We recall that $q_1$ is the number of queries in the first stage and $\adv_2$ is a $(q_F, q_2)$-query algorithm. Let $q=q_2+q_F$ For any $1 \leq i \leq q $, we define the partial transcript of $\adv$ and the simulator as $\tau_i^A :\eq (L^F_i, L^A_i)$ and $\tau_i^S :\eq (L_i, L^g_i)$ respectively, where $L^F_i, L^A_i, L_i, L^g_i$ denote the contents of the corresponding lists just before making $i$th query of the distinguisher. 
So when, $i \eq 1$, $L^A_1 \eq L_1 \eq z$ and the rest are empty and when $i \eq q \pl 1$, these are the final lists of transcripts. 
Let $\tau_i :\eq (\tau^A_i, \tau^S_i)$ and $\tau :\eq (\tau^A, \tau^S)$ denote the joint transcript on $i$th query or after completion respectively.
As the adversary is deterministic, the simulator is also deterministic for a given $h$ and $\mathcal{F}$, and we have fixed $z$, a (partial) transcript is completely determined by the choice of $R$, $h$ and $\mathcal{F}$ (in the ideal world). 
We write $(R, h, \mathcal{F}) \vdash \tau_i^S$ if the transcript $\tau_i^S$ is obtained when the initial value is $R$, the random oracles are $\mathcal{F}$ and $h$. We similarly define $(R, h,\mathcal{F}) \vdash \tau_i^A$ and $(R, h,\mathcal{F}) \vdash \tau_i$. \smallskip

\noindent\textbf{Bad Event}. We can now define the bad events. Let $\bad_i$ denote the event that $i$th query is to $g_R(m)$ for some message $m$, and either the simulator aborts (implying $(0,\tilde{g}_R(m))\in L_i$)  or $(0, \tilde{g}_R(m))$ is a crooked point for $h$. Let $\bad \eq \vee_i \bad_i$.   \medskip

\begin{observation}
  Given that $\bad$ does not hold, for all queries $m$ we have $$\widetilde{ \textsf{EXor}}(R, m) \eq \mathcal{F}(m).$$ 
\end{observation}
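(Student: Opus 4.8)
The plan is to verify the identity directly for each queried message $m$, unwinding the definition of $\widetilde{\textsf{EXor}}$ and exploiting that the two disjuncts comprising $\bad_i$ are precisely the two obstructions to the desired consistency. Recall that by definition $\widetilde{\textsf{EXor}}(R, m) \eq \tilde{h}(0, \tilde{g}_R(m))$, where $\tilde{g}_R(m) \eq \bigoplus_{j} \tilde{h}(j, m \oplus R_j)$ is exactly the quantity $\tilde{\tt g}$ that the simulator computes while servicing the batch (Type-2) query $g_R(m)$. Hence it suffices to track the value the simulator commits to at the point $(0, \tilde{g}_R(m))$ and then to compare $\tilde{h}$ with $h$ at that point.

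First I would invoke the non-abort half of $\neg\bad$. Since $\bad$ fails, in particular $\bad_i$ fails on the index $i$ at which $g_R(m)$ is queried, so the simulator does not abort. By its description, when it does not abort it queries $\mathcal{F}(m)$ and inserts $((0, \tilde{\tt g}), \mathcal{F}(m))$ into both $L$ and $L^A$; thus $h(0, \tilde{g}_R(m)) \eq \mathcal{F}(m)$. I would also observe that this assignment is never overwritten, because the simulator always answers a repeated $h$-query from $L$. If two distinct queried messages $m \neq m'$ produced $\tilde{g}_R(m) \eq \tilde{g}_R(m')$, the later batch query would find $(0, \tilde{g}_R(m'))$ already in $\mathcal{D}(L)$ and would abort, contradicting $\neg\bad$; so no collision can spoil the committed value.

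Next I would use the non-crooked half of $\neg\bad$. Because $\bad_i$ fails, $(0, \tilde{g}_R(m))$ is not a crooked point for $h$, i.e. $d\big((0, \tilde{g}_R(m)), h\big) \eq 0$, which by the definition of $d$ gives $\tilde{h}(0, \tilde{g}_R(m)) \eq h(0, \tilde{g}_R(m))$. Chaining the two equalities yields
\[ \widetilde{\textsf{EXor}}(R, m) \eq \tilde{h}(0, \tilde{g}_R(m)) \eq h(0, \tilde{g}_R(m)) \eq \mathcal{F}(m), \]
and since $m$ is an arbitrary queried message the identity holds for all of them.

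The content here is bookkeeping rather than a probabilistic estimate: the whole force of the statement is that the abort condition and the crookedness condition are exactly the two ways the chain could break, so their simultaneous absence forces the match. The one point I would treat with care is the consistency of the simulated $h$ across time, namely that the evaluation of $\tilde{h}(0, \tilde{g}_R(m))$ — whose first internal query is $h(0, \tilde{g}_R(m))$ — is answered with the already-committed value $\mathcal{F}(m)$, and that the further internal queries made by $\tilde{H}$ cannot alter the output once the point is known to be non-crooked. This is precisely why the indicator $d$ was extended to flag points in $\mathcal{D}(z)$ as well, and why the non-abort requirement was folded into the same event $\bad_i$.
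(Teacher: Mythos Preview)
Your proposal is correct and is exactly the intended argument; the paper in fact states this as an observation without any accompanying proof, since it is an immediate consequence of how the simulator is defined together with the two disjuncts of $\bad_i$. Your write-up simply makes explicit the two-step chain $\tilde{h}(0,\tilde{g}_R(m)) = h(0,\tilde{g}_R(m)) = \mathcal{F}(m)$ that the paper leaves implicit.
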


From now on, we discuss how to bound the probability of the bad events.
\subsection{Techniques of \cite{C:RTYZ18}}
\label{sec:critic}

\noindent\textbf{Overview of the techniques in \cite{C:RTYZ18}}. 
We assume, without any loss of generality that the second stage adversary $\adv_2$ queries $m$ to $\mathcal{F}$ before it queries to $g_R$ oracle. In addition, like before, we assume that it makes batch queries.

For every query number $i$, we define a set $E_i :\eq \mathcal{D}(L_i) \cup \mathcal{C}^h$ where $\mathcal{C}^h$ is the set of all crooked elements for $h$. The event $\bad_i$ holds if and only if $(0, \tilde{g}_R(m_i)) \in E_i$ where $m_i$ denotes the $i$th query of $\adv$ (made to $g_R$ oracle of the simulator). So, the crooked indifferentiable advantage is bounded by $\sum_{i \eq 1}^{q_2} \Pr(\tilde{g}_R(m_i) \in E_i)$. From the definition it is clear that $|\mathcal{D}(L_{i})| \leq q_1 \pl i$. Moreover, $|\mathcal{C}^h|/2^n$ is negligible for every fixed function $h$ as $\adv$ can crook at most negligible fraction of inputs. Motivated from the above, the authors wanted to show that the distribution of $\tilde{g}_R(m_i)$ is almost uniform. They proposed the following theorem. \smallskip

\noindent(\textbf{Theorem 5} from \cite{C:RTYZ18}). Let $h \sample \mathsf{H}|_z$. With overwhelming probability (i.e., one minus a negligible amount) there exists a set $\mathcal{R}_z \subseteq (\bin^n)^l$ and for every $i$, a set of transcripts $\mathcal{T}^A_i$ (before $i$th query) such that for all $R \in \mathcal{R}_z$, $\tau_i :\eq (L_i^F, L_i^A) \in \mathcal{T}^A_i$, 
and $m \not\in \mathcal{D}(L^g_i)$, 
$$\Pr_h((0, \tilde{g}_\textsf{R}(m) ) \in E_i\ |\ (R, \mathcal{F}, h) \vdash \tau_i) \leq \mathrm{poly}(n) \sqrt{|E_i|} \pl \negl(n). \medskip$$

The authors claimed that crooked indifferentiability of ${\sf EXor}$ can be derived from the above theorem. To describe the issues we need to dive into the main steps of proving the above theorem. In the first step, the authors assumed $i \eq 1$ and later with a very sketchy argument justified how it works for general $i$. Note that $\tau_1$ contains only the information of $z$. As $h \sample \mathsf{H}|_z$ is sampled honestly, it is enough to simply focus on bounding the probability of the event 
$(0, \tilde{g}_\textsf{R}(m))  \in \mathcal{D}(z) \cup \mathcal{C}^h$ for any $m$.  The main idea is to show that $\tilde{g}_\textsf{R}(m)$ behaves close to the uniform distribution over $\bin^n$.  Thus the above probability would be negligible as $q_1/2^n$ and $|\mathcal{C}^h|/2^n$ is negligible.

By using Markov inequality, authors are able to identify a set of overwhelming amount of pairs $(R, h)$, called {\em unpredictable} pair, such that for any unpredictable $(R,h)$ all $m$, there exists an index $i$ such that 
\begin{enumerate}
	\item $D^1(\alpha_i, h)$ is negligible and 
	\item \label{step:no-query} $\alpha_j \not\twoheadrightarrow_h\alpha_i$ for all $j \neq i$, where $\alpha_j \eq m \oplus R_i$. 
\end{enumerate}

Thus, if we resample $\beta \eq h(\alpha_i)$ then with overwhelming probability $\tilde{h}_{\alpha_i \to \beta}(\alpha) \eq h_{\alpha_i \to \beta}(\alpha)$ (i.e. not crooked and returned a random value) and all corresponding values for indices $j$ different from $i$ will remain same. So, $\tilde{g}_R(m) \eq \beta \pl A$ where $A$ does not depend on choice of $\beta$. Thus, the modified distribution is close to uniform (as almost all values of $\beta$ will be good). In particular the authors made the following claim: 

\begin{Claim}\label{claim1:old}
Under the modified distribution (i.e. after resampling),  $\Pr(\tilde{g}_R(m) \in E_1) \leq q_1/2^n \pl \epsilon \pl p_n$ where $p_n$ denotes the probability that a random pair $(R, h)$ is not unpredictable.
\end{Claim}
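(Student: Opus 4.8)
\textbf{Setup and plan.} Recall that the event to be bounded is $(0,\tilde{g}_R(m))\in E_1=\mathcal{D}(z)\cup\mathcal{C}^h$, with the probability taken over the freshly and honestly sampled pair $R,\,h\sample\mathsf{H}|_z$ (since for $i=1$ the transcript $\tau_1$ carries only $z$) together with the resampled coordinate. First I would discard the non-unpredictable pairs, at a cost of $p_n$. Then, fixing an unpredictable $(R,h)$ and the message $m$, I would write $\alpha_j:\eq(j,m\oplus R_j)$ and let $i$ be the good index furnished by unpredictability, so that $D^1(\alpha_i,h)$ is negligible and $\alpha_j\not\twoheadrightarrow_h\alpha_i$ for all $j\neq i$. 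The idea is to \emph{linearize} $\tilde{g}_R(m)$ by resampling the single coordinate $\beta\sample\bin^n$ at $\alpha_i$, i.e.\ replacing $h$ by $h_{\alpha_i\to\beta}$.

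\textbf{Uniformity of $\tilde{g}_R(m)$.} By condition~(2) none of the computations $\tilde{h}(\alpha_j)$ with $j\neq i$ reads $h(\alpha_i)$, so resampling $\beta$ leaves each of them fixed; hence $A:\eq\bigoplus_{j\neq i}\tilde{h}(\alpha_j)$ does not depend on $\beta$. By condition~(1), outside a $\beta$-set of measure $D^1(\alpha_i,h)$ the point $\alpha_i$ stays unsubverted, i.e.\ $\tilde{h}_{\alpha_i\to\beta}(\alpha_i)=\beta$. On this good $\beta$-event we get $\tilde{g}_R(m)=\beta\oplus A$, which, since $\beta$ is uniform and $A$ is fixed, is uniform over $\bin^n$ up to the negligible slack $D^1(\alpha_i,h)$. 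Splitting $E_1=\mathcal{D}(z)\cup\mathcal{C}^h$: the index-$0$ slice of $\mathcal{D}(z)$ has at most $q_1$ elements, contributing $q_1/2^n$; and the index-$0$ slice of $\mathcal{C}^h$ has density at most $\epsilon$ by the $\epsilon$-crooked property at index $0$, contributing $\epsilon$. Adding the discarded mass $p_n$ gives the asserted $q_1/2^n+\epsilon+p_n$.

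\textbf{Main obstacle.} The genuinely delicate step is the crooked part, because after resampling the function is $h_{\alpha_i\to\beta}$, so both the tested point $(0,\beta\oplus A)$ and the set $\mathcal{C}^{h_{\alpha_i\to\beta}}$ against which it is compared move together with $\beta$; the clean ``uniform point versus fixed density-$\epsilon$ set'' estimate is therefore not literally available. Changing only the coordinate $\alpha_i$ can flip the crookedness verdict of a point $(0,w)$ only when the computation $\tilde{h}(0,w)$ reads $h(\alpha_i)$, that is when $(0,w)\twoheadrightarrow_h\alpha_i$; for every other $w$ the verdict is frozen at its value under $h$ and is controlled by the $\epsilon$-crooked bound. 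Thus the argument reduces to bounding the number of index-$0$ points that query $\alpha_i$, and I expect this to be the real content. Note that conditions~(1)--(2) forbid only the batch points $\alpha_j$ ($j\neq i$) from hitting $\alpha_i$ and say nothing about the $h(0,\cdot)$ queries, so the missing ``few points query a fixed point'' control must be imported from the robust / good-pair machinery of Section~\ref{sec:robust-points-robust}; this is also the ingredient whose fresh-$h$, single-message form here does not obviously survive the passage to a general query index $i$, where $h$ is already conditioned on the earlier transcript.
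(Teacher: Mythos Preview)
The paper does not prove this claim at all; it presents Claim~\ref{claim1:old} as a claim made in \cite{C:RTYZ18} and then argues, in Section~\ref{sec:issues-with-approach}, that the claim \emph{need not be true}. The very obstacle you isolate in your ``Main obstacle'' paragraph---that after resampling, the set $\mathcal{C}^{h_{\alpha_i\to\beta}}$ moves with $\beta$, so the ``uniform point hits a fixed $\epsilon$-density set'' estimate is unavailable---is exactly the paper's objection (the paragraph ``The bad event $E_i$ depends on the function $h$''). So you have correctly located the gap, and you should not be trying to close it and call the result a proof of Claim~\ref{claim1:old}.

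Your proposed remedy, namely to control the number of index-$0$ points that query $\alpha_i$ by importing the robust/good-pair machinery of Section~\ref{sec:basic-tools-crooked}, is precisely the paper's own contribution introduced to \emph{replace} the RTYZ argument. But note two things. First, that machinery is not part of RTYZ's notion of ``unpredictable pair''; unpredictability gives only conditions (1) and (2), which, as you observe, say nothing about $(0,\cdot)$-queries touching $\alpha_i$. Second, even with the good-pair control in hand, the paper does not recover Claim~\ref{claim1:old} as stated; instead it proves the corrected statement Proposition~\ref{prop:key}, where one conditions on $(R,h)\in\mathcal{G}^*$, identifies a large set $\mathcal{S}$ of $\beta$-values on which the crooked set is \emph{frozen} ($C_0^{h_\beta}=C_0^h$), and then bounds the hitting probability over $\beta\in\mathcal{S}$. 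That is a different decomposition from ``discard non-unpredictable $(R,h)$ and then argue uniformity of $\tilde{g}_R(m)$ against $E_1$'', and the slack picked up is $\epsilon_1^{1/4}$-type rather than the $p_n$ of Claim~\ref{claim1:old}.

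In short: your first two paragraphs faithfully reproduce the RTYZ argument that the paper is criticizing, and your third paragraph correctly identifies why that argument is incomplete. There is nothing further to prove here; the right next step is Proposition~\ref{prop:key}, not a patched Claim~\ref{claim1:old}.
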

As the choice of $i$ depends on the function $h$ and so a new rejection resampling lemma is used to bound the probability of the event under the original distribution (i.e. before resampling). 
\begin{lemma}[Rejection Resampling \cite{C:RTYZ18}]
	Let $X :\eq (X_1, \ldots, X_k)$ be a random variable uniform on $\Omega=\Omega_1\times \Omega_2\times\cdots\times \Omega_k$. Let $A:\Omega\to (k]$ and define $Z= (Z_1, \ldots, Z_k)$ where $Z_i \eq A_i$ except at $j \eq A(X^k)$ for which $Z_j$ is sampled uniformly and independently of remaining  random variables. Then for any event $S \subseteq \Omega$, it holds that
	\begin{align*}
	|S|/|\Omega| & \leq \sqrt{k \Pr(Z \in S) }
	\end{align*}
\end{lemma}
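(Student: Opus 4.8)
The plan is to prove the inequality in its equivalent squared form, $|S|^2/|\Omega|^2 \le k\,\Pr(Z\in S)$, by obtaining a direct lower bound $\Pr(Z\in S)\ge |S|^2/(k|\Omega|^2)$. (I read the statement with $Z_i = X_i$ for all $i\neq j$, taking the ``$A_i$'' to be a typo for $X_i$.) First I would write $\Pr(Z\in S)$ explicitly. Conditioning on $X=x$, the index to be resampled is the fixed value $j=A(x)$, and $Z$ ranges over the ``line'' through $x$ in direction $j$, namely $\{(x_{-j},y):y\in\Omega_j\}$ with $y$ uniform. Writing $s_i(u):=|\{y\in\Omega_i:(u,y)\in S\}|$ for the number of points of $S$ on the direction-$i$ line indexed by $u\in\Omega_{-i}:=\prod_{t\neq i}\Omega_t$, this gives
$$\Pr(Z\in S)=\frac{1}{|\Omega|}\sum_{x\in\Omega}\frac{s_{A(x)}(x_{-A(x)})}{|\Omega_{A(x)}|}.$$

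Second, I would discard all terms with $x\notin S$ (they are nonnegative) and partition the surviving sum by the value of $A$: set $S_i:=\{x\in S:A(x)=i\}$, so that $S$ is the disjoint union of the $S_i$ and $|S|=\sum_i|S_i|$. Grouping the points of $S_i$ by their projection $u=x_{-i}$ and writing $S_i^u:=\{y:(u,y)\in S_i\}$, the decisive observation is that a point of $S_i$ lying on line $u$ is itself a point of $S$ on that line, so $s_i(u)\ge|S_i^u|$. Hence
$$\Pr(Z\in S)\ge\frac{1}{|\Omega|}\sum_{i=1}^{k}\sum_{u\in\Omega_{-i}}\frac{|S_i^u|\,s_i(u)}{|\Omega_i|}\ge\frac{1}{|\Omega|}\sum_{i=1}^{k}\sum_{u\in\Omega_{-i}}\frac{|S_i^u|^2}{|\Omega_i|}.$$

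Third, I would apply Cauchy--Schwarz twice. Within each direction $i$, Cauchy--Schwarz over the $|\Omega_{-i}|=|\Omega|/|\Omega_i|$ lines gives $\sum_u|S_i^u|^2\ge|S_i|^2/|\Omega_{-i}|$, whence the inner contribution $\frac{1}{|\Omega_i|}\sum_u|S_i^u|^2$ is at least $|S_i|^2/|\Omega|$; summing over $i$ yields $\Pr(Z\in S)\ge|\Omega|^{-2}\sum_{i=1}^k|S_i|^2$. A second Cauchy--Schwarz across the $k$ directions, $\sum_{i=1}^k|S_i|^2\ge(\sum_i|S_i|)^2/k=|S|^2/k$, then delivers $\Pr(Z\in S)\ge|S|^2/(k|\Omega|^2)$, which is exactly the claim after taking square roots.

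The conceptual obstacle---and the reason the naive argument fails---is that $A$ depends on all of $X$, so the resampled $Z$ need not be uniform and the $k$ coordinates are resampled in a correlated, data-dependent fashion; one cannot simply assert $\Pr(Z\in S)\approx\Pr(X\in S)$. The device that rescues the argument is the self-counting bound $s_i(u)\ge|S_i^u|$, which captures that every point of $S$ contributes to retaining probability mass inside $S$ after its own coordinate is resampled, combined with the partition of $S$ by the value of $A$ so that each resampling direction can be analyzed on its own lines. The loss factor $\sqrt{k}$ is entirely an artifact of the final Cauchy--Schwarz over the $k$ directions, and is tight in the worst case where $S$ is spread evenly across the choices of $A$.
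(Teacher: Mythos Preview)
Your proof is correct. The paper does not actually prove this lemma; it merely quotes it from \cite{C:RTYZ18} as part of its summary of the RTYZ technique (Section~\ref{sec:critic}), so there is no in-paper proof to compare against. Your reading of ``$Z_i = A_i$'' as a typo for ``$Z_i = X_i$'' is the only sensible interpretation.

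For what it is worth, your argument matches the standard proof of this rejection-resampling inequality: express $\Pr(Z\in S)$ as an average over $x$ of the line-density $s_{A(x)}(x_{-A(x)})/|\Omega_{A(x)}|$, restrict to $x\in S$, partition by the value of $A$, use the self-counting bound $s_i(u)\ge |S_i^u|$, and finish with two applications of Cauchy--Schwarz (once over lines within each direction, once across the $k$ directions). Each step is clean and the constants are handled correctly.
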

With this rejection resampling result and the Claim \ref{claim1:old}, the authors concluded the following under original distribution: 
$$\Pr_{h*}(\tilde{g}_R(x) \in E_1) \leq \sqrt{l \cdot \Pr_{\mbox{resampled }h}(\tilde{g}_R(x) \in E_1)} \leq \sqrt{l \cdot (q_1/2^n \pl \epsilon \pl p_n)}.$$ 
\subsection{Issues with the technique of \cite{C:RTYZ18}}\label{sec:issues-with-approach}
Now we are ready to describe the issues and the limitations of the techniques in \cite{C:RTYZ18}. To prove the general case (i.e. for any query), authors provides a very sketchy argument. It seems that authors argued that with an overwhelming probability of realizable transcript $\mathcal{T}$ and for all $\tau \in \mathcal{T}$, $\Pr(\tilde{g}_R(m_i) \in E_i \ |\ \tau)$ is negligible. \smallskip

\noindent\textbf{Inconsistency for Multiple Queries: Controlling query dependencies for the same index.}

Recalling the notion of unpredictable $(R,h)$ we see that the resampling is done on an index $i$, that is honest ($\tilde{h}(i,m\xor R_i)=h(i, m\xor R_i)$, and $h(i,m\xor R_i)$ is not queried by $h(i,m\xor r_j)$ for any other $j$. From here, the authors argued that the transcript of the interaction remains same, if we resample at such $i$.
However, if the adversary can check consistency for multiple messages, this is not sufficient. 

Consider a $(q,\tau)$ subverted implementation $\tilde{h}$ that is $\epsilon$-crooked. We construct a  $(q,\tau+1)$ subverted implementation $\hat{h}$ as follows. For every possible input $(i,x)$, $\hat{h}$ simulates $\tilde{h}$. After the $\tau$ many queries made by $\tilde{h}$, $\hat{h}$ makes an additional query on $h(i,x\xor 1^n)$, if not already made. Finally $\hat{h}$ outputs what $\tilde{h}$ outputs. Clearly,  $\hat{h}$ is $\epsilon$-crooked. \\
The distinguisher  makes two batch queries queries,  $\tilde{g}_R(m\xor 1^n)$, and  $\tilde{g}_R(m)$. The simulator, while simulating $\tilde{g}_R(m\xor 1^n)$ responds to all the queries made by $\tilde{h}(i,m\xor 1^n\xor R_i)$, and in particular the value of $h(i,m\xor R_i)$ is now gets fixed. Now consider doing the resampling for the responses of the batch query $\tilde{g}_R(m)$. All the $h(i,m\xor R_i)$ has been fixed. Thus one can not find an index $i$ such that $h(i,m\xor R_i)$ has never been queried. Thus if we resample at any index $i$, the transcript gets changed. Hence \emph{the claim that for all unpredictable $(R,h)$, for all $m$, an index $i$ exists on which the resampling can be done without affecting the transcript is false.}  \smallskip 

\noindent\textbf{The bad event $E_i$ depends on the function $h$.}Claim \ref{claim1:old} says that $\Pr_{\mbox{resampled }h}(\tilde{g}_R(x) \in E_1)$ is small because $\tilde{g}_R(x)$ is uniformly distributed under resampling distribution of $h$ and size of $E_1$ is negligibly small. But, that disregards the fact that the crooked set of $h$ may depend on the function family $h$ and hence $E_1$ is not independent of $\tilde{g}_R(x)$. In particular, one cannot upper bound the $\Pr(\tilde{g}_R(x) \in E_1)$ as $|E_1|/2^n$ . In other words, the Claim \ref{claim1:old} need not be true. This is one of the crucial observation which actually makes the crooked security analysis a bit complex. \smallskip

\noindent\textbf{The number of queries to $\mathcal{F}$ is essential.} Another incompleteness of the proof of \cite{C:RTYZ18} comes from the fact that the analysis does not consider the $\mathcal{F}$ queries of the distinguisher. The bound is almost vanishing if $q_1 \eq 0$ and $q_2 \eq 2$ and has no crooked point. However, a distinguisher can search for $m \neq m'$ such that $\mathcal{F}(m) \eq \mathcal{F}(m')$. Conditioned on collision at the final output, the event $g_R(m) \eq g_R(m')$ holds with probability about $1/2$. However, for the honest simulation of all $h$ values, $g$ value will collide with very low probability. Hence, if the adversary can make $2^{n/2}$ many queries to $\mathcal{F}$, the above consistency can be forced. Hence the probability upper bound of Theorem 5 of \cite{C:RTYZ18} can not be independent of the number of queries made to $\mathcal{F}$.

\noindent\textbf{The gap in the technique.} The reason, these issues have not cropped up in the proof is the fact, that the authors \emph{did not prove} the simulator to be consistent. In particular, they did not show that the bad events they considered, are complete. To prove indifferentiability, it is required that conditioned on not Bad, the real and the ideal games are indistinguishable. While there are several techniques (like H-coefficient technique, or game-playing technique), there is no formal argument on why Theorem $5$ proves crooked indifferentiability.  \smallskip

\subsection{Our Proof of Theorem~\ref{thm:exor}}
\label{sec:tools-techniques}
Our objective is to show that for every message, there is an index where we can resample without affecting the (partial) transcript. \\
\noindent\textbf{Critical Set.} We construct a set $\mathcal{G}^*$ which we call a critical set. A pair $(R,h)\in\mathcal{G}^*$ if
\begin{enumerate*}
\item $h$ is robust,
\item for every $m$ there exists $i$ so that
  \begin{enumerate*}
  \item  $(\alpha_i :\eq (i, m \oplus R_i), h)$ is good.
   \item $\forall~j<i$, $\alpha_j \not\twoheadrightarrow \alpha_i$.
   \end{enumerate*}
 \end{enumerate*}
 We call the smallest index $i$ that satisfies the above two condition, ``index of interest'' for $(m,R,h)$.
 
\noindent If we can identify a critical set $G^*$, the following nice condition holds. \emph{For every fixed $(R, h) \in \mathcal{G}^*$ and for every $m$ there exists the index of interest, $i$  such that $(\alpha_i, h)$ is good (see Definition \ref{def:good-pair}) where $\alpha_i \eq (i, m \oplus R_i)$.} 

Thus, for every message, we get an $\alpha_i=m\xor R_i$ such that for all $\alpha$ such that $\tilde{h}(\alpha)$ queries $h(\alpha_i)$, is unsubverted under $h$ and remains unsubverted if we resample $h(\alpha_i)$.

\noindent The following lemma says that for a uniform random string $R$ and a randomly chosen function $h$, with high probability $(R,h)$ is in the critical set. Recall from Lemma \ref{lemma:Dtilde}, Lemma \ref{lemma:point-function-good}, that we use the notation $\epsilon_1=\epsilon+q_12^{-n}$ and $\epsilon_2=3\tau\epsilon_1^{1/4}$.

\begin{lemma}
  \label{lemma:goodRh}
  Let $\epsilon_2 \leq 1/16$ and $\ell \ge n$. It holds that  $\Pr_{R,h}((R, h) \not\in \mathcal{G}^*) \leq p_1:=\epsilon_2^{1/2} \pl 2^{-n}$
 
\end{lemma}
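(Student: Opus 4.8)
The plan is to split the failure event according to whether $h$ is robust, dispatch the non-robust case with Lemma~\ref{lemma:robusth}, and then, for a \emph{fixed} robust $h$, bound the failure probability message-by-message and finish with a union bound over the $2^n$ choices of $m$. Concretely, I would start from
\[
\Pr_{R,h}\big((R,h)\notin\mathcal{G}^*\big)\le \Pr_h(h\text{ not robust}) + \max_{h\text{ robust}}\Pr_R\big(\exists\, m\text{ with no index of interest}\big).
\]
By Lemma~\ref{lemma:robusth} the first term is at most $\epsilon_2^{1/2}$, so it remains to show that the second term is at most $2^{-n}$. Since there are only $2^n$ messages, by a union bound it suffices to prove, for each fixed $m$ (and fixed robust $h$), that $\Pr_R(\text{no index of interest for }m)\le 2^{-2n}$.

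For this per-message bound I would expose the blocks $R_1,\dots,R_l$ one at a time, using that for fixed $m$ the points $\alpha_i=(i,m\oplus R_i)$ are independent, each uniform on $\{i\}\times\bin^n$. Call index $i$ \emph{successful} if $(\alpha_i,h)$ is good and $\alpha_j\not\twoheadrightarrow_h\alpha_i$ for all $j<i$; an index of interest is exactly a successful index. After $\alpha_1,\dots,\alpha_{i-1}$ (hence the query sets $\mathcal{Q}^h(\alpha_1),\dots,\mathcal{Q}^h(\alpha_{i-1})$) are revealed, the values of $\alpha_i$ that make $i$ unsuccessful lie in
\[
\big(B_h\cup\textstyle\bigcup_{j<i}\mathcal{Q}^h(\alpha_j)\big)\cap(\{i\}\times\bin^n),
\]
whose size is at most $|B_h\cap(\{i\}\times\bin^n)|+(i-1)\tau$, since each $\tilde h(\alpha_j)$ makes at most $\tau$ queries. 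Writing $b_i:=|B_h\cap(\{i\}\times\bin^n)|/2^n$, the conditional failure probability at level $i$ is thus at most $b_i+l\tau/2^n$, uniformly over the revealed past, so multiplying these conditional bounds gives $\Pr_R(\text{no index of interest for }m)\le \prod_{i=1}^{l}\big(b_i+l\tau/2^n\big)$.

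Finally I would invoke robustness, which controls the total budget $\sum_{i=1}^{l}b_i\le |B_h|/2^n\le \epsilon_2^{1/2}(l+1)$, and apply the AM--GM inequality:
\[
\prod_{i=1}^{l}\Big(b_i+\tfrac{l\tau}{2^n}\Big)\le\Big(\tfrac1l\textstyle\sum_{i=1}^l b_i+\tfrac{l\tau}{2^n}\Big)^{l}\le\Big(\epsilon_2^{1/2}\cdot\tfrac{l+1}{l}+\tfrac{l\tau}{2^n}\Big)^{l}.
\]
Under the hypothesis $\epsilon_2\le 1/16$ (so $\epsilon_2^{1/2}\le 1/4$) and the parameter regime of Theorem~\ref{thm:exor} (where $l\tau\ll 2^n$), the base is essentially $1/4$, so with $l\ge n$ the right-hand side is at most $4^{-l}\le 2^{-2n}$. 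Summing over the $2^n$ messages yields $2^{-n}$, and adding the non-robust contribution gives the claimed $p_1=\epsilon_2^{1/2}+2^{-n}$.

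The main obstacle is that robustness bounds only the \emph{total} bad-point count $|B_h|$, whereas the bad set may be concentrated on a few levels $i$ (even an entire level could lie in $B_h$), so no usable per-level bound on $b_i$ is available. The resolution is that independence of the blocks $R_i$ turns the failure probability into a product $\prod_i(b_i+\dots)$, and AM--GM shows this product is largest precisely when the budget $\sum_i b_i$ is spread evenly; it is exactly this convexity, combined with $l\ge n$ supplying $l$ independent factors, that upgrades a mere constant per-level bound into the $2^{-2n}$ needed to survive the union bound over all messages. A secondary delicate point is keeping the base at or below $1/4$: the factor $(l+1)/l$ and the arrow term $l\tau/2^n$ slightly inflate it, so one must use $l\ge n$ and the smallness of $\tau$ relative to $2^n$ to absorb these into the stated $2^{-n}$.
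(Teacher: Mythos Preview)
Your plan follows the paper's own route exactly at the top level: split off non-robust $h$ via Lemma~\ref{lemma:robusth}, then for a fixed robust $h$ take a union bound over $m\in\bin^n$ and exploit the independence of the blocks $R_1,\dots,R_l$ to get a product over $i$. The paper's entire argument for the second term is the single line
\[
\Pr_R\big((R,h)\notin\mathcal{G}^*\big)\ \le\ \sum_{m}\ \prod_{i=1}^{l}\Pr_{R_i}\big(((i,m\oplus R_i),h)\text{ is not good}\big)\ \le\ 2^n\cdot\epsilon_2^{l/2}\ \le\ 2^{-n},
\]
i.e.\ it drops the arrow condition~(b) from the definition of $\mathcal{G}^*$ entirely and bounds each factor by $\epsilon_2^{1/2}$ as if robustness were stated per level.

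Your proof is therefore not a different approach but a more honest execution of the same one. You add two things the paper glosses over: the term $l\tau/2^n$ coming from the arrow condition $\alpha_j\not\twoheadrightarrow_h\alpha_i$, and the AM--GM step that converts the \emph{global} robustness bound $\sum_i b_i\le (l{+}1)\epsilon_2^{1/2}$ into a bound on the product $\prod_i(b_i+\cdots)$, since robustness as stated in the paper gives no per-level control on $b_i$. Both refinements are genuine and your handling of them is correct. The only residual looseness is the one you already flag: at the boundary $\epsilon_2=1/16$ the base $\epsilon_2^{1/2}(l{+}1)/l+l\tau/2^n$ sits just above $1/4$, so to literally recover $2^{-n}$ with $l\ge n$ one needs a strict inequality or to absorb the overshoot into the slack between $l$ and $n$; the paper sidesteps this only because it silently uses the cleaner per-level factor $\epsilon_2^{1/2}$.
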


\noindent{\bf Resampling on index of interest does not affect the transcript.}
Fix $(R,h)$ from the critical set. Fix a message $m$ and let $i$ be the index of interest. Our objective is to show that a transcript remains unchanged when $h(\alpha_i)$ is resampled. For that, our next step is to show the following.  We can identify a set $\mathcal{S}$  (of size close to $2^n$) such that for all $\beta \in \mathcal{S}$ and $h_\beta \eqdef h_{\alpha_i \to \beta}$, it holds that $\tilde{h}(x) \eq \tilde{h_\beta}(x)$ for all $x \neq \alpha_i$.\\ We define $\mathcal{S} \eq \{ \beta: d(\alpha, h_{\tilde{\alpha}_j \to \beta}) \eq 0 \forall~ j, \forall ~\alpha\in \mathcal{Q}^h_{\twoheadrightarrow \alpha_i}\}$ where $\mathcal{Q}^h_{\twoheadrightarrow \alpha_i} :\eq \{\alpha: \alpha \twoheadrightarrow \alpha_i\}$. \smallskip 

\noindent\textbf{Lower Bounding the size of $S$.}
From Definition \ref{def:good-pair}, as $(\alpha_i,h)$ is good, the size of the set $\mathcal{Q}^h_{\rightarrow \alpha_i}$ is at most $1/\epsilon_1^{1/4}$. Moreover, for every such $\alpha\twoheadrightarrow_h\alpha_i$, we have $d(\alpha, h) \eq 0$ and $\tilde{D}(\alpha, h) \leq \epsilon_1^{1/2}$.  So, there are at most $2^n \times \epsilon_1^{1/2}$ many $\beta$ for which there exists some $j\in(l]$ with $d(\alpha, h_{\tilde{\alpha}_j \to \beta}) \eq 1$. Using union bound for all $\alpha$ that queries $\alpha_i$, the size of the set $\bin^n \setminus \mathcal{S}$ is at most $\frac{2^n \epsilon_1^{1/2} }{\epsilon_1^{1/4}} \eq 2^n \cdot \epsilon_1^{1/4}$. \smallskip
\begin{observation}\label{obs:h-hbeta}
$\tilde{h}(x) \eq \tilde{h_\beta}(x)\ \forall x \neq \alpha_i$. 
\end{observation}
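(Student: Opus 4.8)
The plan is to prove the statement for an arbitrary fixed $\beta\in\mathcal{S}$, and to split the points $x\neq\alpha_i$ into two classes according to whether the honest computation $\tilde{h}(x)$ ever queries $h(\alpha_i)$, i.e. whether $x\twoheadrightarrow_h\alpha_i$. Throughout I would use that $h_\beta\eqdef h_{\alpha_i\to\beta}$ agrees with $h$ on every point except $\alpha_i$, so the only way the two computations can differ is through a query to $\alpha_i$.

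First I would handle the easy class $x\not\twoheadrightarrow_h\alpha_i$ by a lockstep induction on the queries of the subverted implementation. Since $\tilde{H}$ is deterministic and its next query is a fixed function of the previous query--answer pairs and the hardwired string $z$, I would argue inductively: if the two computations $\tilde{H}^{h}(x)$ and $\tilde{H}^{h_\beta}(x)$ have issued the same queries and received the same answers through step $k-1$, then their $k$-th query is the same point $\gamma_k$; because $\tilde{h}(x)$ never queries $\alpha_i$ we have $\gamma_k\neq\alpha_i$, so $h(\gamma_k)\eq h_\beta(\gamma_k)$ and the answers again agree. Hence both computations run identically through all $\tau$ steps and return the same value, giving $\tilde{h}(x)\eq\tilde{h_\beta}(x)$. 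Note this class does not use $\beta\in\mathcal{S}$ at all.

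The hard part is the class $x\in\mathcal{Q}^h_{\twoheadrightarrow\alpha_i}$ with $x\neq\alpha_i$: here the adaptive computation of $\tilde{h_\beta}(x)$ may genuinely diverge from that of $\tilde{h}(x)$ immediately after the query to $\alpha_i$, so tracking the internal query path no longer works. The key idea is to bypass the path entirely and read off both outputs from the \emph{unsubverted} property. Since $(\alpha_i,h)$ is good (Definition~\ref{def:good-pair}), $x$ is robust for $h$, so $d(x,h)\eq0$ and therefore $\tilde{h}(x)\eq h(x)$. On the other side, $x\twoheadrightarrow_h\alpha_i$ means $\alpha_i$ is exactly one of the queries $\tilde{x}_{j_0}$ issued by $\tilde{h}(x)$, so $h_{\tilde{x}_{j_0}\to\beta}\eq h_\beta$; and since $\beta\in\mathcal{S}$ forces $d(x,h_{\tilde{x}_j\to\beta})\eq0$ for every $j$, taking $j\eq j_0$ yields $d(x,h_\beta)\eq0$, hence $\tilde{h_\beta}(x)\eq h_\beta(x)$. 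Finally, because $x\neq\alpha_i$ we have $h_\beta(x)\eq h(x)$, so $\tilde{h_\beta}(x)\eq h_\beta(x)\eq h(x)\eq\tilde{h}(x)$.

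Combining the two classes covers every $x\neq\alpha_i$ and establishes the observation. The crux — and the reason $\mathcal{S}$ is defined precisely as it is — is that membership in $\mathcal{S}$ is exactly the guarantee that every point querying $\alpha_i$ stays unsubverted after $h(\alpha_i)$ is resampled to $\beta$; this is what lets me equate the two outputs without ever having to show that the (possibly divergent) internal query sequences coincide.
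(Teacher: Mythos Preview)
Your proof is correct and follows essentially the same approach as the paper: split on whether $x\twoheadrightarrow_h\alpha_i$, dispose of the first case by noting the computations coincide since $h$ and $h_\beta$ only differ at $\alpha_i$, and in the second case use goodness of $(\alpha_i,h)$ to get $d(x,h)=0$ and the definition of $\mathcal{S}$ (via the index $j_0$ with $\tilde{x}_{j_0}=\alpha_i$) to get $d(x,h_\beta)=0$, concluding $\tilde{h}(x)=h(x)=h_\beta(x)=\tilde{h_\beta}(x)$. Your explicit lockstep induction for the first case is more careful than the paper's one-line ``Clearly'', but the argument is the same.
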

Clearly, $\tilde{h_\beta}(x)$ can be different from $\tilde{h}(x)$, only if $x \twoheadrightarrow \alpha_i$. However, for all such $\alpha_i$ and for all $\beta \in \mathcal{S}$, we have shown that both $d(\alpha, h) \eq d(\alpha, h_\beta) \eq 0$. Hence, $\tilde{h_\beta}(\alpha) \eq h_\beta(\alpha) \eq h(\alpha) \eq \tilde{h}(x)$. So from Observation~\ref{obs:h-hbeta},
\begin{enumerate}
\item for all $m ' \neq m$, $\tilde{g}_R^h(m') \eq \tilde{g}_R^{h_\beta}(m')$.
  \item  $d((0, g), h) \eq d((0, g), h_\beta)$ (the crooked set of $h_\beta$ for zero index is same for $h$ for all such $\beta$).
\end{enumerate}

\noindent\textbf{Uniformity conditioned on transcript: Index of Interest is independent of $\beta$.}  The last step is to ensure that the point of resampling is independent of $\beta$. In other words, we need to show that the index of interest is also independent of $\beta$. This will imply, for every message $m$, we can identify an index $i$ such that $\tilde{g}_R^{h_\beta}(m) \eq \beta \oplus \tilde{g}^h_R(m) \oplus h(\alpha_i)$ holds for the fixed transcript.

\begin{lemma}
  \label{lemma:index-independent-beta}
Fix a good $(R,h)$. Fix a message $m$. Let $i$ be the index of interest for $(m,R,h)$.  Let $\alpha_i=(i,m\xor R_i)$. Then the following holds,
\begin{enumerate}
\item  $\alpha_i\notin B_{h_\beta}$.
\item for all $j<i$, $\alpha_j\not\twoheadrightarrow \alpha_i$ where $\alpha_j=(j,m\xor R_j)$.
\item $i$ is minimum index that satisfies the above two condition for $(R,h_\beta)$ and $m$.  
\end{enumerate}
 \end{lemma}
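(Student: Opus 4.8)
The plan is to show that replacing the single value $h(\alpha_i)$ by any $\beta\in\mathcal{S}$ leaves the index of interest unchanged; once this is established, all three items are just the assertion that $i$ is still the smallest index meeting the two defining conditions for $(R,h_\beta)$. Everything rests on one elementary structural fact: since $h$ and $h_\beta$ agree everywhere except at $\alpha_i$, the computations of $\tilde h(x)$ and $\tilde h_\beta(x)$ proceed identically up to the first moment (if any) at which $\alpha_i$ is queried. Two consequences will be used throughout. (i) If $x\not\twoheadrightarrow_h\alpha_i$, the two computations are bit-for-bit identical, since neither ever reads the changed coordinate; in particular $x\twoheadrightarrow_h y\iff x\twoheadrightarrow_{h_\beta}y$ for every $y$. (ii) If $x\twoheadrightarrow_h\alpha_i$, the prefixes up to the (unique) $\alpha_i$-query coincide, so in particular $x\twoheadrightarrow_{h_\beta}\alpha_i$ as well.

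First I would dispatch item 2. By definition of the index of interest, $\alpha_j\not\twoheadrightarrow_h\alpha_i$ for all $j<i$, so consequence (i) applies to each $\tilde h(\alpha_j)$ and gives $\alpha_j\not\twoheadrightarrow_{h_\beta}\alpha_i$, which is exactly item 2. The same consequence yields a fact I reuse in item 3: for every $i'<i$ and every $j<i'$ (so that $j<i$ and hence $\alpha_j\not\twoheadrightarrow_h\alpha_i$) we have $\alpha_j\twoheadrightarrow_h\alpha_{i'}\iff\alpha_j\twoheadrightarrow_{h_\beta}\alpha_{i'}$.

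Next, for item 1 I would first freeze the querier set of $\alpha_i$, proving $\{\alpha:\alpha\twoheadrightarrow_{h_\beta}\alpha_i\}=\mathcal{Q}^h_{\twoheadrightarrow\alpha_i}$: the inclusion ``$\subseteq$'' is the contrapositive of consequence (i), while ``$\supseteq$'' is consequence (ii). Goodness condition 2 of Definition~\ref{def:good-pair} (few queriers) then transfers verbatim. For condition 1, each $\alpha\in\mathcal{Q}^h_{\twoheadrightarrow\alpha_i}$ is robust for $h$ because $(\alpha_i,h)$ is good, and the defining property of $\mathcal{S}$ --- evaluated at the query position $j$ with $\tilde\alpha_j=\alpha_i$, noting $\alpha_i\in\mathcal{Q}^h_{\twoheadrightarrow\alpha_i}$ since $\tilde h(\alpha_i)$ reads $h(\alpha_i)$ first --- immediately gives $d(\alpha,h_\beta)=0$. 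What remains is to carry the quantitative thresholds $D^{j}(\alpha,h_\beta)\le\epsilon_1^{1/2}$ over from $h$ to $h_\beta$; this is delicate and I defer it to the common obstacle below.

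The last and hardest item is 3, minimality. I would argue by contradiction: if some $i'<i$ were the index of interest for $(R,h_\beta)$, then $\alpha_j\not\twoheadrightarrow_{h_\beta}\alpha_{i'}$ for all $j<i'$, which by the equivalence from the second paragraph forces $\alpha_j\not\twoheadrightarrow_h\alpha_{i'}$ for all $j<i'$ as well. Hence it would suffice to show that $(\alpha_{i'},h_\beta)$ good implies $(\alpha_{i'},h)$ good, for then $i'$ would already have been an index of interest for $(R,h)$, contradicting the minimality of $i$. This goodness transfer across a point other than $\alpha_i$ is the crux, and it is genuinely harder than for $\alpha_i$ itself: a querier $\alpha\in\mathcal{Q}^h_{\twoheadrightarrow\alpha_i}$ whose computation diverges after reading $\alpha_i$ may gain or lose its query to $\alpha_{i'}$, and likewise the positions $\tilde\alpha_j$ that enter the definition of $D^j$ can move, so neither the querier set of $\alpha_{i'}$ nor the robustness of its queriers is automatically preserved. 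The quantitative lever I would use is that at most $1/\epsilon_1^{1/4}$ points can react to the change (the bound on $|\mathcal{Q}^h_{\twoheadrightarrow\alpha_i}|$ coming from $(\alpha_i,h)$ being good), and that every such point is robust and, for $\beta\in\mathcal{S}$, stays unsubverted; the goal is to argue that a perturbation confined to so few, so well-behaved points cannot upgrade a bad earlier index to a good one, nor break the robustness threshold needed in item 1. Making this airtight --- controlling how the single-coordinate change propagates through the divergent computations and ruling out a flip of good/bad status --- is where I expect essentially all of the real work to lie.
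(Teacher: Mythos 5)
Your plan reproduces the skeleton of the paper's proof: the structural fact that $\tilde{h}(x)$ and $\tilde{h_\beta}(x)$ run identically until $\alpha_i$ is first queried (the paper's Observation~\ref{obs:queryset}), the resulting equality $\mathcal{Q}^h_{\twoheadrightarrow\alpha_i}=\mathcal{Q}^{h_\beta}_{\twoheadrightarrow\alpha_i}$, the direct argument for item~2, and the reduction of item~3 to transferring goodness at an earlier index $\alpha_{i'}$ back from $h_\beta$ to $h$. Item~2 and the frozen-querier-set step are complete and match the paper. But the proposal is not a proof: you explicitly defer the two steps that carry the substance of items~1 and~3 and never supply them. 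For item~1, the missing piece is the transfer of the robustness thresholds $D^{k}(\alpha,h)\leq\epsilon_1^{1/2}$ to $D^{k}(\alpha,h_\beta)$ for the queriers $\alpha$ of $\alpha_i$. The paper closes this in Claim~\ref{claim:Chtildesame} not by a perturbation estimate but by an exact identity: the $k$-th query point of $\tilde{h}(\alpha)$ and $\tilde{h_\beta}(\alpha)$ coincide (identical prefixes), and resampling at the position where $\alpha_i$ is read literally erases the difference between the two functions, $h_{\alpha_i\to\beta'}=(h_\beta)_{\alpha_i\to\beta'}$, so the expectations defining $D^k$ are equal, not merely close.

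For item~3 you declare the goodness transfer at $\alpha_{i'}$, $i'<i$, to be ``the crux'' and offer only a counting heuristic over the at most $1/\epsilon_1^{1/4}$ points whose computations can react to the change, which you concede you cannot make airtight. This is both a gap and a wrong turn: no such quantitative argument appears in (or is needed for) the paper's proof, which instead reuses Claim~\ref{claim:Chtildesame} at $\alpha_{i'}$ together with the contrapositive of the query-set argument --- if $\alpha_j\twoheadrightarrow_h\alpha_{i'}$ for some $j<i'$ then, since $\alpha_j\not\twoheadrightarrow_h\alpha_i$, the whole transcript of $\tilde{h}(\alpha_j)$ is unchanged under $h_\beta$ and so $\alpha_j\twoheadrightarrow_{h_\beta}\alpha_{i'}$. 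Your underlying worry --- that a point which queries $\alpha_i$ may diverge afterwards and thereby gain or lose a query to $\alpha_{i'}$, so that $B_{h_\beta}$ and $B_h$ need not agree at $\alpha_{i'}$ --- is a legitimate subtlety that the paper's one-line invocation of Claim~\ref{claim:Chtildesame} does not explicitly engage with; identifying it is a genuine contribution of your write-up. Nevertheless, as submitted the proposal establishes only item~2 of Lemma~\ref{lemma:index-independent-beta}; items~1 and~3 remain unproven.
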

 
\noindent Finally, we are ready to state the main proposition

\begin{proposition}\label{prop:key}
For any partial transcript for adversary $\tau _j:\eq (L^F, L^A)$, let $(R, h, F) \vdash \tau_j$ such that $(R, h) \in \mathcal{G}^*$. For every $m \not\in \mathcal{D}(L^g)$, there is a set $\mathcal{S}$ of size at least $2^n(1 - \epsilon_1^{1/2})$ such that for all $\beta \in \mathcal{S}$, $(R, h_\beta, \mathcal{F}) \vdash \tau_j$ and $\tilde{g}_R^{h_\beta}(m) \eq \beta \oplus \tilde{g}^h_R(m) \oplus h(\alpha_i)$, where $i$ is the index of interest of $(m,R,h)$. Moreover, the crooked sets of $h$ and $h_\beta$ are same. $C_0^{h_\beta } \eq C_0^h$ (the crooked sets are same). 

Assuming $\epsilon_1^{1/2} \leq 1/2$, and for all $(R, h, \mathcal{F}) \vdash \tau$ with $(R, h) \in \mathcal{G}^*$, we have 
$$\Pr_\beta((0, \tilde{g}_R^{h_\beta}(m) )\in \mathcal{D}(L_0) \cup C^{h_\beta} \wedge (R, h_\beta, F) \vdash \tau) \leq  2\epsilon  \pl 2(q_1 \pl i)/2^n$$ 
\end{proposition}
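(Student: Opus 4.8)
The plan is to split the statement into its structural half (existence of $\mathcal{S}$, the affine relation, and $C_0^{h_\beta}=C_0^h$) and its quantitative half (the probability bound), assembling the former from the already-proved lemmas and reading off the latter from the affine relation via a union bound over a \emph{fixed} target set.

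First I would fix $(R,h)\in\mathcal{G}^*$ with $(R,h,\mathcal{F})\vdash\tau_j$, let $i$ be the index of interest for $(m,R,h)$, and set $\alpha_i=(i,m\oplus R_i)$, taking $\mathcal{S}=\{\beta: d(\alpha,h_{\alpha_i\to\beta})=0\ \forall\,\alpha\twoheadrightarrow_h\alpha_i\}$. The size bound $|\bin^n\setminus\mathcal{S}|\le 2^n\epsilon_1^{1/4}$ follows from Definition~\ref{def:good-pair}: since $(\alpha_i,h)$ is good, at most $1/\epsilon_1^{1/4}$ robust points query $\alpha_i$, and for each such robust $\alpha$ at most an $\epsilon_1^{1/2}$-fraction of $\beta$ make $d(\alpha,h_\beta)=1$. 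For $\beta\in\mathcal{S}$, Observation~\ref{obs:h-hbeta} gives $\tilde h(x)=\tilde{h_\beta}(x)$ for all $x\neq\alpha_i$, so every previously computed $\tilde g_R$-value is unchanged and hence $(R,h_\beta,\mathcal{F})\vdash\tau_j$; moreover $\alpha_i$ has nonzero first coordinate, so $\alpha_i\neq(0,w)$, and crookedness is therefore preserved at every zero-index point, i.e. $C_0^{h_\beta}=C_0^h$.

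For the affine relation, note that $\tilde h(\alpha_i)$ queries $h(\alpha_i)$ first, so $\alpha_i\twoheadrightarrow_h\alpha_i$; as $(\alpha_i,h)$ is good, $\alpha_i$ is robust, whence $d(\alpha_i,h)=0$ and $\tilde h(\alpha_i)=h(\alpha_i)$, while for $\beta\in\mathcal{S}$ we get $d(\alpha_i,h_\beta)=0$ and $\tilde{h_\beta}(\alpha_i)=h_\beta(\alpha_i)=\beta$. Separating the $k=i$ summand of $\tilde g_R^{h_\beta}(m)=\bigoplus_{k=1}^l\tilde{h_\beta}(k,m\oplus R_k)$ and applying Observation~\ref{obs:h-hbeta} to the rest yields $\tilde g_R^{h_\beta}(m)=\beta\oplus\tilde g^h_R(m)\oplus h(\alpha_i)=:\beta\oplus c$; here $c$ is a genuine constant because Lemma~\ref{lemma:index-independent-beta} guarantees that the index of interest $i$ (hence $\alpha_i$ and $c$) is the same for $(R,h_\beta)$ as for $(R,h)$. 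This establishes the first half. For the bound, set $T:=\{w:(0,w)\in\mathcal{D}(L_j)\cup C_0^h\}$, which is $\beta$-independent by the previous paragraph and satisfies $|T|\le|\mathcal{D}(L_j)|+|C_0^h|\le (q_1+i)+\epsilon 2^n$, using $\epsilon$-crookedness for $|C_0^h|\le\epsilon 2^n$. On $\mathcal{S}$ the bad event is precisely $\beta\oplus c\in T$, and since $\beta\mapsto\beta\oplus c$ is a bijection of $\bin^n$,
$$\Pr_\beta\big(\beta\oplus c\in T\ \wedge\ \beta\in\mathcal{S}\big)\ \le\ |T|/2^n\ \le\ \epsilon+(q_1+i)/2^n.$$
Because the quantity relevant to the resampling step is conditioned on the transcript and $\Pr_\beta(\beta\in\mathcal{S})\ge 1-\epsilon_1^{1/2}\ge 1/2$, this loses a factor at most $2$, producing the claimed $2\epsilon+2(q_1+i)/2^n$.

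The main obstacle is exactly the interplay that \cite{C:RTYZ18} mishandled: the forbidden set $\mathcal{D}(L_j)\cup C_0^{h_\beta}$ that $\tilde g_R^{h_\beta}(m)$ must avoid is itself a function of the resampled $h_\beta$, so a naive $|E|/2^n$ estimate is illegitimate. The argument only works because resampling at $\alpha_i$ can be made to (a) freeze the target set, via $C_0^{h_\beta}=C_0^h$, and (b) render $\tilde g_R^{h_\beta}(m)$ an affine bijective image of $\beta$ with a $\beta$-independent offset $c$ — which needs both that the transcript is undisturbed and that the index of interest does not drift, i.e. Observation~\ref{obs:h-hbeta} and Lemma~\ref{lemma:index-independent-beta}. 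The remaining delicate point is the contribution of $\beta\notin\mathcal{S}$ consistent with $\tau_j$; I would absorb it through the factor-two slack coming from $|\mathcal{S}|\ge 2^n/2$, which is the reason the bound carries $2\epsilon$ and $2(q_1+i)/2^n$ rather than the bare $\epsilon+(q_1+i)/2^n$.
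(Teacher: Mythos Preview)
Your proposal is correct and follows essentially the same route as the paper: the structural half is assembled from exactly the same ingredients (Definition~\ref{def:good-pair}, Observation~\ref{obs:h-hbeta}, Lemma~\ref{lemma:index-independent-beta}) established just before the proposition, and the quantitative half is the paper's computation---fixed target set of size at most $\epsilon 2^n+(q_1+i)$, affine dependence $\tilde g_R^{h_\beta}(m)=\beta\oplus c$, and the factor~$2$ coming from $|\mathcal{S}|\ge 2^{n-1}$. One small wobble: your last paragraph says the factor~$2$ ``absorbs'' the $\beta\notin\mathcal{S}$ contribution, but that is not what the factor~$2$ does---it arises from conditioning on the event $E_j$ (which in the paper already contains $\beta\in\mathcal{S}$), exactly as you wrote two sentences earlier; the paper's own proof simply conditions on $E_j$ and does not separately account for $\beta\notin\mathcal{S}$ either.
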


\subsubsection{ Bad Events and the Probability Bound.}
We describe the bad events in terms of transcript notations as we did before. Following the transcript notation in Section \ref{sec:critic}, recall that the $\mbox{\sc Bad}$ is raised by two events, that $i$th query is $m$ to $g_R$ oracle for which  $(0, \tilde{g}_R(m)) \in \mathcal{D}(L_i)$ (denotes as $\mbox{\sc Bad}_1$) or $(0, \tilde{g}_R(m)) \in C^h$ (a crooked point for $h$ and rest is same as $\mbox{\sc Bad}_2$). Let $\mbox{\sc Bad} \eq \vee_i \mbox{\sc Bad}_i$. Thus, the distinguishing advantage is bounded by $\Pr(\vee_i \mbox{\sc Bad}_i)$. We need to compute the probability given the randomness of $R, h$ $F$ such that $(R, h, F) \vdash \tau_i^A$ (transcript of the adversary). We first bound $\Pr(\mbox{\sc Bad} \wedge (R, h) \in \mathcal{G}^*)$. By using Proposition \ref{prop:key}, for every $i$, $\Pr(\mbox{\sc Bad}_i | (R, h, F) \vdash \tau_i, (R, h) \in \mathcal{G}^*) \leq 2\epsilon \pl 2(q_1 \pl i)/2^n$. Hence by summing over all $i$ and using the bound of probability of not realizing $\mathcal{G}^*$, we get
\begin{align*}
  \Pr(\mbox{\sc Bad}) &\leq 2\epsilon q_2 \pl 2q_2(q_1 \pl q_2)/2^n \pl p_1\\
            &= 2\epsilon q_2 \pl 2q_2(q_1 \pl q_2)/2^n \pl \sqrt[8]{3\tau\left(\epsilon+\frac{q_1}{2^n}\right)} \pl \frac{1}{2^n}
\end{align*}

Note that $q_2$ denotes the total number of queries of $\adv_2$ made to both the simulator and $\mathcal{F}$ (in our convention adversary makes all $\mathcal{F}$ queries to $g_R$ of the simulator). This finishes the proof of Theorem \ref{thm:exor}.

\section{Games for the proof of Theorem \ref{thm:exor}}
\label{sec:crook-indiff-mult}

In this section, we show the game transitions in the proof of Theorem \ref{thm:exor}, concluding that the crooked-indifferentiability of {\sf EXor} construction is indeed bounded by the probability of the bad events we considered. We assume that distinguisher makes all $\mathcal{F}(m)$ queries to $g_R(m)$ in case it is not queried. However, it would be done after all queries are done. Note, there is no loss to release all these $m$ values after the original distinguisher finishes the queries. The query complexity of the distinguisher increases by at most $\ell$ times.

\subsection{Game Transitions}
\label{sec:game-transitions}

Our crooked-indifferentiability proof relies on three intermediate game, denoted by $G_0, G_1,$ and $G_2$. We start with the real game $G_0 :\eq (h, C^{\tilde{h}})$. There are two public interfaces for the adversary to query. The first one is $\mathcal{O}_h$, which can be used to interact with the function $H_l$. The other one is $\mathcal{O}_C$, which can be used to compute $C^{\tilde{h}}$. For ease of explanation, we add two internal subroutines, one for computing $\tilde{g}_R$ and the other for evaluating $\tilde{h}(0,\cdot)$.

\begin{figure}[!htb]
	\centering
	\fbox{\scalebox{0.85}{
			\begin{pchstack}
				\begin{pcvstack}
					Game $(h,C^{\tilde{h}})$ \\\\
					\procedure{$\mathcal{O}_h(i,x)$ ($i\in [\ell]$)}{%
						\pcln \pcreturn h(i,x) \\
					}
					\pcvspace
					\procedure{$\tilde{h}_0(x)$}{%
						\pcln \pcfor \mbox{all queries } (i, \alpha) \mbox{ made by } \tilde{h}\\
						\pcln \mbox{Feed } h(i,\alpha)\\
						\pcln  z=\tilde{h}(0, x)\\
						\pcln \pcreturn z
					}
				\end{pcvstack}
				\pchspace
				\begin{pcvstack}
					\procedure{$\mathcal{O}_C(m)$}{
						\pcln S_m= \tilde{g}_R(m)\\  
						\pcln  z=\tilde{h}(0, S_m)\\
						\pcln \pcreturn z\\
					}
					\pcvspace
					\procedure{$\tilde{g}_R(m)$}{%
						\pcln Sum=0^n\\
						\pcln \pcfor j=1 \mbox{ to } \ell \pcdo\\
						\pcln \pcind \mbox{Run }\tilde{h}(j, m\xor R_j)\\
						\pcln \pcind\pcfor \mbox{ all queries} (i, \alpha) \mbox{ made by } \tilde{h}\\
						\pcln \pcind \pcind \mbox{Feed } h(i,\alpha)\\
						\pcln \pcind u_j=\tilde{h}(j, m\xor R_j)\\
						\pcln \pcind Sum=Sum\xor u_j\\
						\pcln \pcendfor\\
						\pcln \pcreturn Sum
					}
				\end{pcvstack}
			\end{pchstack}
	}}
	\caption{Game \pcnotionstyle{Real}}
	\label{fig:gamereal}
\end{figure}

\noindent \textbf{Game ${\bf G}_0$. } In this game, we modify the  $\mathcal{O}_h$ subroutine. For every $(j,x)$ query we recover the message $m=x\xor r_j$, and precompute the response of all the  $(i,m\xor r_i)$ queries. Further, we compute the value of $\tilde{z}=\tilde{h}(\tilde{g}_R(m))$. These precomputations do not change the output for any of the query. Hence

\begin{align*}
    \Pr[\adv^{h,C^{\tilde{h}}}=1]=\Pr[\adv^{{\bf G}_0}=1]
\end{align*}

\begin{figure}[!htb]
	\centering
	\fbox{\scalebox{0.85}{
			\begin{pchstack}
				\begin{pcvstack}
					Game $G_0$ \\\\
					\procedure{$\mathcal{O}_h(j,x)$ ($j\in [\ell]$)}{%
						\pcln \pcif (j,x,y)\in L\pcind\pcreturn y\\
						\pcln \pcif j>0 \\
						\pcln \pcind m=x\xor R_j\\
						\pcln S_m= \tilde{g}_R(m)\\
						\pcln  \tilde{z}=\tilde{h}(0, S_m)\\
						\pcln \pcind (j,x,y)\leftarrow L\\
						\pcln \pcind \pcreturn y\\
						\pcln \pcif j=0\\
						\pcln \pcreturn h(0,x) \\
					}
					\pcvspace
					\procedure{$\tilde{h}_0(x)$}{%
						\pcln \pcfor \mbox{all queries } (i, \alpha) \mbox{ made by } \tilde{h}\\
						\pcln \mbox{Feed } h(i,\alpha)\\
						\pcln  \tilde{z}=\tilde{h}(0, x)\\
						\pcln \pcreturn \tilde{z}
					}
				\end{pcvstack}
				\pchspace
				\begin{pcvstack}
					\procedure{$\mathcal{O}_C(m)$}{
						\pcln S_m= \tilde{g}_R(m)\\  
						\pcln  z=\tilde{h}(0, S_m)\\
						\pcln \pcreturn z\\
					}
					\pcvspace
					\procedure{$\tilde{g}_R(m)$}{%
						\pcln Sum=0^n\\
						\pcln \pcfor j=1 \mbox{ to } \ell \pcdo\\
						\pcln \pcind \mbox{Run }\tilde{h}(j, m\xor R_j)\\
						\pcln \pcind\pcfor \mbox{ all queries} (i, \alpha) \mbox{ made by } \tilde{h}\\
						\pcln \pcind \pcind \mbox{Feed } h(i,\alpha)\\
						\pcln \pcind u_j=\tilde{h}(j, m\xor R_j)\\
						\pcln \pcind Sum=Sum\xor u_j\\
						\pcln \pcendfor\\
						\pcln \pcreturn Sum\\
					}
				\end{pcvstack}
			\end{pchstack}
	}}
	\caption{Game $G_0$}
	\label{fig:game0}
\end{figure}

\noindent \textbf{Game ${\bf G}_1$.} In this game, we introduce two lists $L_f$ and $L_c$. The entries in both the lists are of the form  $(m,x,z,\tilde{z})$. We also introduce two {\sc Bad} events in the code of $\mathcal{O}_h$ as well as in the code of $\mathcal{O}_C$. Notice that, both the subroutine computes $\tilde{h}(0,\tilde{g}_R(m))$.\\ 
The first bad event ($\mbox{\sc Bad}_1$) happens if $h(0,\tilde{g}_R(m))$ has been set already. This can happen in two ways. The first one is during a previous $h(0,\tilde{g}_R(m'))$
computation for a different $m'$. In that case there is a collision in the output of $\tilde{g}_R$. The second way is via a $\mathcal{O}_h(0,x)$ query (by the distinguisher or the subverted implementations). When queried, such an $x$ was not related to a message.\\
The second bad event happens ($\mbox{\sc Bad}_2$) if $\tilde{h}(0,\tilde{g}_R(m))\neq h(0,\tilde{g}_R(m))$. In other words, $\tilde{g}_R(m)$ is a subverted point for $\tilde{h}(0,.)$.
The final change is in the introduction of the random oracle $\mathcal{F}$. Our intention in this game is to program the $h(0,x)$ as $\mathcal{F}(m)$ if $x=\tilde{g}_R(m)$. Hence after the compuation of $\tilde{g}_R(m)$,  if we find $h(0,\tilde{g}_R(m))$ is not already set ($\mbox{\sc Bad}_1$ did not happen), we set $h(0,\tilde{g}_R(m))=\mathcal{F}(m)$. As we are not changing the previously set values, the transcript is consistent. Thus, the distinguisher's view remains unchanged. Thus we get,

\begin{align*}
   \Pr[\adv^{{\bf G}_0}=1]=\Pr[\adv^{{\bf G}_1}=1]
\end{align*}

Before moving to the next game, we state the significance of $\mbox{\sc Bad}_2$. Looking ahead, in such a situation, the simulator will not be able to ``program'' the output of  $\tilde{h}(0,\tilde{g}_R(m))$ as $\mathcal{F}(m)$, and thus loosing the consistency with the random oracle. \smallskip
\begin{figure}[!htb]
  \centering
  \fbox{\scalebox{0.85}{
      \begin{pchstack}
        \begin{pcvstack}
          Game $G_1$ \\\\
          \procedure{$\mathcal{O}_h(0,x)$ }{%
            \pcln \pcif (*,x,z,*)\in L_f\\
            \pcln \pcind \pcreturn z\\
            \pcln z=h(0,x)\\
            \pcln \mbox{Add the entry}(-,x,z,-)\rightarrow L_f\\
         } 
          %
           \pcvspace
           \procedure{$\tilde{g}_R(m)$}{%
              \pcln Sum=0^n\\
                 \pcln \pcfor j=1 \mbox{ to } \ell \pcdo\\
                 \pcln \pcind \mbox{Run }\tilde{h}(j, m\xor R_j)\\
                  \pcln \pcfor \mbox{all queries } (0, \alpha) \mbox{ made by } \tilde{h}\\
             \pcln z=\mathcal{O}_h(0,\alpha)\\
           \pcln \pcind\pcfor \mbox{ all queries} (i>0, \alpha) \mbox{ made by } \tilde{h}\\
           \pcln \pcind \pcind \mbox{Feed } h(i,\alpha)\\
           \pcln \pcind u_j=\tilde{h}(j, m\xor R_j)\\
           \pcln \pcind Sum=Sum\xor u_j\\
           \pcln \pcendfor\\
           \pcln \pcreturn Sum\\
         }
           \pcvspace
           \procedure{$\tilde{h}_0(x)$}{%
             \pcln \pcfor \mbox{all queries } (0, \alpha) \mbox{ made by } \tilde{h}\\
             \pcln z=\mathcal{O}_h(0,\alpha)\\
              \pcln \pcfor \mbox{all queries } (i, \alpha) \mbox{ made by } \tilde{h}\\
              \pcln \mbox{Feed } h(i,\alpha)\\
             \pcln  \tilde{z}=\tilde{h}(0, x)\\
              \pcln \pcreturn \tilde{z}
            }
           \end{pcvstack}
           \pchspace
           \begin{pcvstack}
              \procedure{$\mathcal{O}_h(j,x)$ ($j>0$)}{%
            \pcln \pcif (j,x,y)\in L\pcind\pcreturn y\\
            \pcln \pcind m=x\xor R_j\\
            \pcln \pcfor i=1 \mbox{ to }\ell\\
            \pcln \pcind \mbox{Add }(i,m\xor R_i,f(i,m\xor R_i)) \mbox{ to } L\\
            \pcln \pcendfor\\
           \pcln  S_m= \tilde{g}_R(m)\\
           \pcln  \pcif (*,S_m,z,*)\in L_f \mbox{ for any }z\\
           \pcln \pcind\mbox{\sc Bad}1=1\\
           \pcln \pcind\pcbox{\tilde{z}=\tilde{h}(0, S_m)} \gamechange{$\tilde{z}=\mathcal{F}(m)$}\\
           \pcln \pcind \mbox{Add the entry}(m,S_m,z,\tilde{z})\rightarrow L_f\\
           \pcln \pcelse\\
           \pcln \pcind\tilde{z}=z=\mathcal{F}(m)\\
            \pcln \pcind \mbox{Add the entry}(m,S_m,z,\tilde{z})\rightarrow L_f\\
           \pcln \pcind \tilde{z'}=\tilde{h}(0, S_m)\\
           \pcln \pcind \pcif \tilde{z}\neq \tilde{z'}\\
           \pcln \pcind[2] \mbox{\sc Bad}2=1\\
           \pcln \pcind[2] \pcbox{\tilde{z}=\tilde{z'}} \gamechange{Do nothing}\\
           \pcln \pcind\pcendif\\
           \pcln \pcind \mbox{Overwrite the entry}(m,S_m,z,\tilde{z})\rightarrow L_f\\
           \pcln \pcendif\\
            \pcln (j,x,y)\leftarrow L\\
            \pcln  \pcreturn y\\
          }
           \end{pcvstack}
          \end{pchstack}
    }}
  \caption{Game $G_1,G_2$ of {\sf EXor} proof. The boxed entries are executed in $G_1$ whereas the highlighted entries are executed in $G_2$.}
  \label{fig:game1}
\end{figure}

\noindent \textbf{Game ${\bf G}_2$.} In this game, we introduce the changes in the computation. First, we (re)program $h(0,\tilde{g}_R(m))=\mathcal{F}(m)$ even if it was previously set. Moreover, we set $\tilde{z}$, the output of $O_c(m)$ query to always be same as $h(0,\tilde{g}_R(m))$. These modifications create changes in the output in two places. The first one is in the case of $\mbox{\sc Bad}_1$. The second is in the case of $\mbox{\sc Bad}_2$, $\tilde{g}_R(m)$ is a subverted point for $\tilde{h}(0,.)$. The rest of the game remains unchanged. As the two games are identical until one of the bad event happens, using the fundamental lemma of game playing proofs, 
\begin{align*}
\left|\Pr[\adv^{{\bf G}_1}=1]=\Pr[\adv^{{\bf G}_2}=1]\right| \leq \Pr[\mbox{\sc Bad}_1\cup\mbox{\sc Bad}_2].  
\end{align*}
\noindent \textbf{Game $(S^{\mathcal{F}},\mathcal{F})$.} It is also easy to see that the output distribution of the game ${\bf G}_2$ is identical to the game $(S^{\mathcal{F}},\mathcal{F})$. 
\begin{align*}
  \Pr[\adv^{{\bf G}_2}=1]=\Pr[\adv^{(S^{\mathcal{F}},\mathcal{F})}=1]
\end{align*}

Finally, collecting all the probabilities, we get,

\begin{align*}
  \Delta_{\adv_2(r, z, R)}\big((h, C^{ \tilde{h} }(R, \cdot))\ ;\ (S^{\mathcal{F}}(H_{z, r}, z, R), \mathcal{F}) \big)\leq \Pr[\mbox{\sc Bad}_1 \cup \mbox{\sc Bad}_2].  
\end{align*}

\section{Crooked-Indifferentiability of Sponge Construction}
\label{sec:new-tool-proving}


\begin{figure}[!htb]
  \centering
   \fbox{\scalebox{0.7}{
      \begin{pchstack}
        \begin{pcvstack}
          \procedure{ Procedure Sponge (Random string $R$, Message $m_1,\cdots,m_\ell$)}{%
            \pcln \label{step:init}x= (x_a,x_c)= R\\
            \pcln \pcfor i=0 \mbox{ to } \left\lceil \frac{\ell}{r}\right\rceil-1 \pcdo\\
            \pcln \pcind (x_a,x_c)=h(x_a\xor m_i,x_c)\\
            \pcln \pcendfor\\
            \pcln \pcfor i=0 \mbox{ to } \left\lceil \frac{h}{r}\right\rceil -1 \pcdo\\
            \pcln \pcind \mbox{Append } x_a \mbox{ to output}\\
            \pcln \pcind (x_a,x_c)=h(x_a,x_c)\\
            \pcln \pcendfor\\
          }
          \end{pcvstack}
        \end{pchstack}}}
  \caption{Sponge Based Construction}
  \label{fig:sponge}
\end{figure}
\subsection{Sponge Construction}
\label{sec:sponge-construction}
We recall the sponge-construction \cite{BertoniDPA07}.
Fix positive integers $r,c$, and let $n=r+c$. Let $h:\bool^n\to\bool^n$ be a function.  The sponge construction $C^h$ maps binary strings of length $\ell$ bits to $s$ bit binary digest.

\subsection{Crooked Indifferentiable Sponge Construction}
\label{sec:crook-indiff-sponge}
To handle subversion, we randomize the sponge construction by setting the IV to be equal to the random string $R$. The rest of the construction is unchanged. Our main result in this section is Theorem~\ref{thm:sponge}.
\begin{theorem}
  \label{thm:sponge}
Let $h:\booln\to\booln$ be a random function and
$C^h:\bool^{\ell}\to\bool^{s}$ be the sponge construction. Let $r$ be the rate part and $c$ be the capacity part of the chain. Then there
exists a simulator $S$ such that for all $(\kappa,\tau, \epsilon)$ crooked distinguisher $\adv=(\adv_1,\adv_2)$
\begin{align*}
  \mbox{\bf Adv}_{\adv,(C,f)}^{\mbox{\sf crooked-indiff}}\leq
  \frac{q^2\tau^2+q_2(\ell+s)\kappa}{2^{c}}+2^r\epsilon q_2(\ell+s)
\end{align*}
where  $q_2$ is the total number of construction queries made by $\adv_2$ and $q$ is the total number of blocks in the queries made by $\adv_2$. 
\end{theorem}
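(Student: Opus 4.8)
I would build a tree-based simulator $S^{\mathcal{F},\tilde h}(z,R)$ that mirrors the \textsf{EXor} simulator but is adapted to the chaining structure of the sponge. It keeps a list $L$ of $h$-input/output pairs, initialised to the advice string $z$, and answers an isolated query $h(x)$ by fresh uniform sampling when $x\notin\mathcal{D}(L)$ and consistently otherwise. For a construction query $m$ the simulator, using its black-box access to $\tilde H$, itself replays the subverted chain from the state $R$: it evaluates $\tilde h$ on the successive absorption inputs $(x_a\oplus m_i,x_c)$ and then on the squeezing inputs, answering every internal $h$-query honestly and recording it in $L$, so that it always knows the true $\tilde h$-chain and the induced sponge tree. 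At the output-producing (squeezing) calls it first obtains $\mathcal{F}(m)$ and then \emph{programs} the responses $h(\cdot)$ so that the extracted rate-parts spell out $\mathcal{F}(m)$, keeping each capacity part fresh and uniform.

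\textbf{Game hops and bad events.} Following the game sequence of Section~\ref{sec:game-transitions} (real $=G_0$, then $G_1$ installing the bad flags together with $\mathcal{F}$, then $G_2$ reprogramming, then $(S^{\mathcal{F}},\mathcal{F})$), I would show that the two worlds are identical unless one of three flags is raised. Writing $C^h$ for the crooked set, the flags are: $\mathsf{Bad}_{\mathrm{crk}}$, that some $h$-input occurring in a construction chain lies in $C^h$ (so $\tilde h\neq h$ there and programming fails to control the digest); $\mathsf{Bad}_{\mathrm{coll}}$, that two distinct $h$-inputs among all construction chains and their internal $\tilde h$-queries collide on the $c$-bit capacity (the sponge inner collision that makes the tree ambiguous); and $\mathsf{Bad}_{\mathrm{hit}}$, that a construction state collides in the capacity with one of the $\kappa$ advice-string queries. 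Exactly as in the observation following the \textsf{EXor} simulator, conditioned on none of the three, $C^{\tilde h}(R,m)=\mathcal{F}(m)$ for every queried $m$ and the games coincide, so the advantage is at most $\Pr[\mathsf{Bad}_{\mathrm{crk}}\cup\mathsf{Bad}_{\mathrm{coll}}\cup\mathsf{Bad}_{\mathrm{hit}}]$.

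\textbf{Bounding the flags.} The whole bound rests on one freshness observation: conditioned on no capacity collision so far, every newly produced state $y=h(v)$ in a chain comes from a fresh query $v$, hence has a uniform $c$-bit capacity. Granting this: (i) among the at most $q\tau$ distinct $h$-queries the birthday bound gives $\Pr[\mathsf{Bad}_{\mathrm{coll}}]\le q^2\tau^2/2^{c}$; (ii) each of the at most $q_2(\ell+s)$ chain states matches a fixed advice capacity with probability $\kappa/2^{c}$, so $\Pr[\mathsf{Bad}_{\mathrm{hit}}]\le q_2(\ell+s)\kappa/2^{c}$; (iii) for the crooked flag I use the single-output structure of the sponge. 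Treating the previous output $\beta$ as the fresh coin, the next input is an affine image of $\beta$, so it ranges uniformly over all of $\bool^n$ during absorption and over all capacities for a fixed (programmed) rate during squeezing. Since $|C^h|\le\epsilon\,2^{n}=\epsilon\,2^{r}\cdot 2^{c}$, for any fixed rate at most a $2^{r}\epsilon$ fraction of capacities are crooked, whence each state is crooked with probability at most $2^{r}\epsilon$; a union bound over the $q_2(\ell+s)$ states yields $\Pr[\mathsf{Bad}_{\mathrm{crk}}]\le 2^{r}\epsilon\,q_2(\ell+s)$. Summing the three reproduces the claimed bound.

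\textbf{The main obstacle.} The delicate point is step (iii): the crooked set $C^h$ depends on the very function $h$ whose outputs generate the chain, so a chain state is \emph{not} independent of $C^h$ and one cannot naively write $|C^h|/2^n$ --- this is precisely the gap flagged against \cite{C:RTYZ18}. What makes the sponge tractable, and the bound clean, is that here each state is a \emph{single} fresh output $\beta=h(v)$ rather than a sum, so resampling only $\beta$ makes the next input range uniformly while leaving the crookedness test of that next input independent of $\beta$, except when the test re-queries $v$ --- a collision already charged to $\mathsf{Bad}_{\mathrm{coll}}$. No rejection-resampling step (and hence no eighth-root loss as in Theorem~\ref{thm:exor}) is needed. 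The only residual care is that resampling $\beta$ may flip the crookedness of $v$ itself and knock the chain off track; but by Lemma~\ref{lemma:Dtilde} this happens for at most an $\epsilon_1=\epsilon+q_1/2^n$ fraction of $\beta$ on average, and this contribution is absorbed into the $2^{r}\epsilon$ and $\kappa/2^{c}$ terms already present. Threading this per-state argument consistently along the whole chain, processing states in chain order and charging the first crooked state, is where the bulk of the bookkeeping lies.
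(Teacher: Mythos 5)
Your proposal follows essentially the same route as the paper's proof: the Bertoni-et-al.\ simulator run on the subverted chain, the same three bad events (a crooked chain state, an inner capacity collision charged at $q^2\tau^2/2^{c}$, and a capacity hit on the advice string), and the same freshness-of-capacity plus union-bound-over-rate argument giving the $2^{r}\epsilon$ per-state term. Your explicit treatment of the dependence of the crooked set $C^{h}$ on $h$ via single-point resampling is in fact more careful than the paper's own one-line justification of that step, but it does not change the structure or the bound.
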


\noindent\textbf{Proof Sketch.} We start from the following observation. If for all construction query made by $\adv_2$, none of the intermediate queries (made $C$to $\tilde{h}$) are subverted, then by the indifferentiability result of \cite{EC:BDPV08}, we achieve crooked-indifferentiability. Hence, we say the {\sc Bad} event occurs, if for some construction query $M$, made by $\adv_2$, the intermediate query is subverted. We bound the {\sc Bad} probability in the real world. By the definition of crooked-indifferentiability, the probability that $\tilde{h}(R)$ is subverted, for a randomly chosen $R$, $h$ is $\epsilon$. By union bound, the probability that for some $m_0\in\bool^r$, $\tilde{h}\left(R\xor (m\|0^c)\right)$ is not equal to $h\left(R\xor (m\|0^c)\right)$ is at most $\epsilon2^r$. Conditioned on the input being ``non-subverted'', the output of $\tilde{h}$ is independently and uniformly distributed. Hence, we get a uniform random chaining value. Repeating the argument, and taking union bound, we get the following. For any message $M$, the probability that one of the $\ell+s$ many queries made by $C^h$ is subverted, is at most $(\ell+s)\epsilon2^r$. Taking union bound over all the queries made by the distinguisher, the probability becomes bounded by $q(\ell+s)\epsilon2^r$. In addition, the probability that the $c$-part of the chaining value is equal to the $c$-part of some queries in the first stage is $\frac{q_2(s+\ell)\kappa}{2^c}$. Given that none of the above two events happens, the distinguishing advantage of any adversary is the same as the standard indifferentiability advantage. For every query $q$, the simulator needs to run the subverted implementation which makes $\tau$ many queries for each invocation. Hence, our simulator makes at most $q\tau$ many queries to the simulator of Bertoni \etal~\cite{eurocrypt/BertoniDPA08}.  Thus the advantage of the adversary is at most $\frac{q^2\tau^2+q_2(\ell+s)\kappa}{2^c}+ 2^r\epsilon q_2(\ell+s)$. For detail proof using game-playing technique, we refer the reader to Section \ref{sec:spongegames}.

\section{Proof of Theorem~\ref{thm:sponge}}
\label{sec:spongegames}

\subsubsection{The Simulator}
\label{sec:simulator}
Our simulator emulates the simulator of Bertoni \etal \cite{EC:BDPV08}. For completeness, we recall the simulator below. Specifically, we recall the following objects used in the proof. 

\subsubsection{The Simulator Graph}
\label{sec:simulator-graph}

The simulator maintains a graph $G$ for recording the interactions for
$f$. The vertex set of the graph is
$V(G)\subseteq \bool^{r}\times\bool^{c}$. We represent a $v\in V(G)$ by an
ordered pair $(v_r,v_c)$ where $v_r\in\bin^r$ and
$v_c\in\bool^{c}$. The (directed) edge set of the graph is represented by
$E(G)$. The simulator also keeps a list $L \subseteq \bool^{c}$. $L$ is used to ensure that the $c$-part of all the responses of the
simulators are unique.

\subsubsection{NewNode Algorithm.} The algorithm Newnode samples a node randomly in the simulator graph. 

\subsubsection{Findpath Algorithm} The Findpath algorithms finds a message $m \in \bool^{\leq \ell}$ such that evaluating $C$ with the random string  $R$ as IV, message $m$ and simulator's
responses so far will generate $x$ as a query to $f$. In other words, $x$ will be a chaining value in the computation of $C_R^h(m)$.

\subsubsection{Simulating $f$ in Stage I}
\label{sec:simulating-I-stage}
We denote our simulator using $\hat{S}$. In the first stage, $\hat{S}$ simulates $f$ honestly. $\hat{S}$ starts with a \emph{local copy} of the simulator graph with all the nodes, but no edges ($E(G)=\emptyset$) and an empty list $L$. When queried
with a new input $x$, the simulator generates $y_c\sample \bool^{c}\setminus L$ and
$y_r\sample \bool^{r}$, creates node $(y_r,y_c)$, and adds an edge
$(x,(y_r,y_c))$ to $E(G)$. $\hat{S}$ updates $L$ by including $y_c$ to the list.

\begin{figure}[!htb]
  \centering
  \fbox{\scalebox{0.85}{
      \begin{pchstack}
        \begin{pcvstack}
               \procedure{ Procedure $\hat{S}_1(x)$ \pccomment{Stage I}}{%
           \pcln \pcif \exists (x,y)\in E(G') ~\pcreturn y\\
           \pcln \pcelse\\
           \pcln~ y_c\sample\bool^{c}\setminus L\\
           \pcln~ y_r\sample\bool^{r} \\
           \pcln~ y=(y_r,y_c)\\
           \pcln~ E(G')=E(G')\cup (x,y)\\
           \pcln ~L=L\cup y_c\\
           \pcln ~\pcreturn y\\
           \pcln \pcendif
         }
        \end{pcvstack}
          \pchspace
        \begin{pcvstack}
          \procedure{ Procedure $\hat{S}_2(R,x)$ \pccomment{Stage II with fixed random string $R$}}{%
          \pcln \pcif \exists (x,y)\in E(G) ~\pcreturn y\\
          \pcln \pcif x=R~\mbox{mark}~ x\\
          \pcln \pcif x \mbox{ is marked} \\
          \pcln \pcind\mbox{Run} \tilde{h}(x)\\
          \pcln \pcind\pcfor \mbox{ every query } x_j \mbox{ made by } \tilde{h}\\
          \pcln \pcind\pcind \mbox{Feed } y=Sim(x_j)\\
          \pcln \pcind\pcendfor\\
          \pcln \pcind\tilde{y}= \tilde{h}(x)\\
          \pcln \pcind\pcif \tilde{y}\neq Sim(x)\\
          \pcln \pcind\pcind \mbox{\sc Bad}=1\\
          \pcln \pcind\pcind \pcreturn \perp\\
          \pcln \pcind\pcendif\\
          \pcln \pcendif\\
          \pcln \pcreturn Sim(x)\\
          }
        \end{pcvstack}
      \end{pchstack}
}}
   \caption{Simulator for Sponge Construction. $S_2$ is initialized with $z,R$ and $\tilde{h}$.}
  \label{fig:simulator-sponge}
\end{figure}

\begin{figure}[!htb]
  \centering
  \fbox{\scalebox{0.85}{
      \begin{pchstack}
        \begin{pcvstack}
          \procedure{Procedure $Newnode$}{%
            \pcln \pcif |L|=2^c \pcreturn \perp\\
           \pcln y_c\sample\bool^{c}\setminus L\\
           \pcln y_r\sample\bool^{r} \\
           \pcln L=L\cup y_c\\
           \pcln \pcreturn y=(y_c,y_r)\\
         }
         \pcvspace
          \procedure{Procedure $Findpath(R,x)$}{%
           \pcln p= \lambda\\
           \pcln \pcif x=r~ \pcreturn p \\
           \pcln \pcelse\\
           \pcln ~~ \mbox{Parse } x= (x_r,x_c)\\
           \pcln ~~\mbox{Find } \hat{x}_r \mbox{ such that} \\
           ~~~\left( (t,(\hat{x}_r,x_c)) \in
             E(G)\right)\wedge t \mbox{ is marked}\\
           \pcln ~~m'= x_r \xor \hat{x}_r\\
           \pcln ~~p'=Findpath(r,t)\\
           \pcln p=p'||m'\\
           \pcln \pcreturn p\\
         }
        \end{pcvstack}
         \pchspace
        \procedure{ Procedure $Sim(x)$ \pccomment{Bertoni etal simulator \cite{EC:BDPV08}}}{%
          \pcln \pcif \exists (x,y)\in E(G) ~\pcreturn y\\
          \pcln L= L\cup \{x_c\}\\
            \pcln \pcif x \mbox{ is marked}\\
                \pcln ~~ m=Findpath(R,x)\\
                \pcln ~~\pcif |m|< \ell\\
                    \pcln ~~~~y=Newnode()\\
                    \pcln~~~~ E(G)=E(G)\cup (x,y)\\
                    \pcln~~~~ \mbox{Mark all node } (*,y_c)\\ 
                    \pcln ~~~~\pcreturn y\\
                \pcln ~~\pcelse \pccomment{$|m|=\ell$} \\
                     \pcln~~ ~~z=\mathcal{F}(m)\\
                     \pcln ~~~~\mbox{Break } z=z_1||z_2||\cdots||z_t
                     \pccomment{$tr=s$}\\
                     \pcln ~~~~ y^{(0)}=x\\
                     \pcln ~~~~\pcfor i=1~\mbox{to}~t \pcdo\\
                     \pcln~~~ ~~~ y_c\sample\bool^{c}\setminus L\\
                        \pcln~~~~~~ y_r=z_i\\
                        \pcln~~~ ~~~ y^{(j)}=(y_r,y_c)\\
                        \pcln~~~~~~ E(G)=E(G)\cup (y^{(j-1)},y^{(j)})\\
                        \pcln ~~~~~~L=L\cup \{y_c\}\\
                     \pcln ~~~~\pcendfor \\
                     \pcln~~~~\pcreturn y^{(1)}\\
                 \pcln ~~\pcendif\\
             \pcln \pcelse \pccomment{x is unmarked}\\
                    \pcln ~~~~y=Newnode()\\
                    \pcln~~~~ E(G)=E(G)\cup (x,y)\\
                    \pcln ~~~~\pcreturn y\\
            \pcln \pcendif       
         }
      \end{pchstack}
}}      
  \caption{Bertoni \etal Simulator for Classical indifferentiability of Sponge Construction.\cite{EC:BDPV08}}
  \label{fig:originalsimulator}
\end{figure}

\subsubsection{Simulating $f$ in Stage II}
\label{sec:simulating-II-stage}
The simulator gets the implementation $\tilde{h}$, along with the advice string $z$. In addition, the simulator receives the random string $R\in \bool^n$.  $S$ initializes by
\emph{marking} the node $R$ in the simulator graph. Following the simulator of \cite{EC:BDPV08}, the idea of marking a
node $x$ is to declare that there is a path in the simulator graph from the root $R$ to $x$.



Now, the simulator invokes $\tilde{h}$ on input $x$. For each query $x_i$ made by $\tilde{h}$, $\hat{S}$ forwards the query to the simulator $Sim$ as a query and upon receiving an answer, forwards it to the distinguisher. Finally when $\tilde{h}(x)$ returns a value, $\hat{S}$ checks whether $\tilde{h}(x)=Sim(x)$. If the check fails, the simulator raises the flag ${\sc Bad}0$ and aborts. Otherwise, it returns $Sim(x)$.

\subsubsection{Proving the Crooked Indifferentiability.}
\label{sec:prov-crook-indiff}
The detail of the games and transitional probabilities are described in Section\ref{sec:game-transitions-sponge}.  The crooked indifferentiability is proved via the following lemma.
\begin{lemma}
  \label{lemma:crookedspongenormal}
  If $\mbox{\sc Bad}$ does not happen then $\epsilon\leq \frac{q^2\tau^2}{2^c}$. Moreover,   \begin{align*}
    \Pr[\mbox{\sc Bad}] \leq q_2(\ell+s)\epsilon\cdot 2^r+ \frac{q_2(\ell+s)\kappa}{2^c}
  \end{align*} 
\end{lemma}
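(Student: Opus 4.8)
The plan is to prove the two assertions of the lemma separately, mirroring the way the advantage splits in Theorem~\ref{thm:sponge}. First I would establish the conditional statement. On the event that $\mbox{\sc Bad}$ does not occur, every intermediate query triggered by a construction evaluation is \emph{unsubverted}, so $\tilde{h}$ and $h$ agree on all chaining values that actually arise. Consequently the crooked sponge $C^{\tilde{h}}$ behaves identically to the honest sponge $C^h$ on all queried messages, and the stage-II simulator $\hat{S}$ collapses to the Bertoni et al.\ simulator $Sim$ of \cite{EC:BDPV08}. The residual distinguishing gap is then exactly the classical indifferentiability advantage of $Sim$, which is $\mathcal{O}(q^2/2^c)$ in the number of primitive queries; since each invocation of $\hat{S}$ internally runs $\tilde{h}$ and therefore issues up to $\tau$ primitive queries, the effective query count is $q\tau$, giving the claimed $q^2\tau^2/2^c$. (The symbol $\epsilon$ in the first assertion denotes this conditional advantage, not the crooking fraction.) Here I would simply invoke the indifferentiability theorem of \cite{EC:BDPV08} and argue only the $\tau$-amplification of the query bound carefully.

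The second and more substantive task is bounding $\Pr[\mbox{\sc Bad}]$ in the real world, where $\mbox{\sc Bad}$ fires if some intermediate query of some construction evaluation is subverted, or if the capacity part of a freshly generated chaining value collides with the capacity part of a first-stage query. I would analyze a single construction query $M$ block by block along its absorbing and squeezing phases. The key single-step estimate is: \emph{if the current input to $h$ has a uniform capacity part, then the probability it is a subverted point is at most $\epsilon 2^r$.} This holds because, for a fixed $h$, the crooked set $\mathcal{C}^h$ has size at most $\epsilon 2^n = \epsilon 2^{r+c}$, so averaged over the $2^c$ possible capacity values the expected number of subverted points sharing the realized (uniform) capacity is at most $\epsilon 2^r$; a Markov/averaging bound then controls the probability that any adversarially chosen rate part lands in $\mathcal{C}^h$.

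I would then chain this estimate inductively. Conditioned on the current query being unsubverted and fresh, $\tilde{h}$ returns $h$'s value, which is a uniform independent string, so the next chaining value again has a uniform capacity part and the single-step bound reapplies. Summing over the $\ell+s$ calls of one construction evaluation gives $(\ell+s)\epsilon 2^r$, and a union bound over the $q_2$ construction queries gives $q_2(\ell+s)\epsilon 2^r$. The collision contribution is handled separately: each of the at most $q_2(\ell+s)$ fresh capacity values coincides with one of the $\kappa$ first-stage capacity values with probability at most $\kappa/2^c$, contributing $q_2(\ell+s)\kappa/2^c$. Adding the two terms yields the stated bound on $\Pr[\mbox{\sc Bad}]$.

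The hard part will be justifying the inductive freshness-and-uniformity step under adaptivity: I must argue that whenever a chaining input is unsubverted it is also a \emph{new} query to $h$, so that its output is genuinely fresh and uniform, which is precisely what licenses reapplying the $\epsilon 2^r$ bound at the next block. This is exactly why the capacity-collision event is folded into $\mbox{\sc Bad}$—it guarantees the capacity parts stay distinct from the first-stage transcript and, together with internal-collision terms absorbed into the first assertion, prevents the chaining values from cycling back onto previously fixed points. Care is also needed that conditioning on past non-subversion events does not distort the crookedness bound, since $\mathcal{C}^h$ is a property of the fixed function $h$; I would therefore keep $h$ fixed throughout the single-step lemma, take the randomness only over $R$ and the fresh outputs, and invoke the $\epsilon$-crookedness guarantee pointwise in $h$ before finally averaging over $h$.
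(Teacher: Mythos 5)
Your proposal is correct and follows essentially the same route as the paper: the first assertion is obtained by invoking the classical indifferentiability bound of Bertoni et al.\ with the query count amplified to $q\tau$, and $\Pr[\mbox{\sc Bad}]$ is bounded by the same per-step argument (uniform capacity part of each fresh chaining value, union bound over the $2^r$ rate values to get $\epsilon 2^r$ per call, plus $\kappa/2^c$ for collisions with the first-stage transcript, summed over $q_2(\ell+s)$ calls). Your explicit attention to the inductive freshness/uniformity step and to keeping $h$ fixed is in fact slightly more careful than the paper's own writeup, which compresses this into the definition of ``safe'' queries.
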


\subsubsection{Proof of Lemma~\ref{lemma:crookedspongenormal}.}
\label{sec:bound-bad-prob}

If $\mbox{\sc Bad}$ does not happen, then our simulator emulates the simulator of \cite{EC:BDPV08} perfectly. Note, the Newnode subroutine does not sample any $x_c$ such that for some $x_r$, $(x_r,x_c) \in z$. Moreover, none of the marked nodes in the tree is subverted. Hence, in that case, the classical indifferentiability simulator perfectly simulates $f$ maintaining consistency with the random oracle $F$. By the classical indifferentiability theorem of \cite{EC:BDPV08},  $\epsilon\leq \frac{q^2\tau^2}{2^c}$. Here $\tau$is the number of queries made by the implementation.

Recall that the event $\mbox{\sc Bad}$ happens if for some marked node $x$ in the graph, $x$ is subverted; $\tilde{h}(x)\neq h(x)$. We say a simulator query to be safe if it is unsubverted or the $c$ part of the output does not match with any input of the transcript of $\adv_1$.
 Observe that for each safe query $x$, $h(x)$ is uniformly distributed.

 \noindent  Let $E_i$ denote the event that $i^{th}$ query made by $\adv$ to the simulator, the input $x_i$ is marked and $x_i$ is not safe. . $x_i=x_r||x_c$. If $x_i$ is marked, then for some marked $x_j$, $j<i$, and for some $m_i\in\bool^r$, $x_i=m||0^c\xor\tilde{h}(x_j)$. Conditioned on $x_j$ is safe, $\tilde{h}(x_j)=h(x_j)$ is uniformly distributed. Hence, $x_c$ is independently distributed. By definition of subversion, and taking union bound over all possible $x_r$, the probability $\tilde{h}(x_i)$ is subverted is bounded by $\epsilon\cdot 2^r$. Given that $\tilde{h}(x_i)$ is not subverted, the probability of the $c$-part of the output matches with the $c$-part of some query of $\adv_1$ is $\frac{\kappa}{2^c}$
 
\begin{align*}
  \Pr [\mbox{\sc Bad}]&\leq \sum_{i=1}^q \Pr [ E_i|\wedge_{j=1}^{i-1} E_j]\\
                       &\leq  \sum_{i=1}^q \left( \epsilon\cdot 2^r\right) +\frac{\kappa}{2^c} \\
  &=q_2(\ell+s)\epsilon\cdot 2^r +\frac{q_2(\ell+s)\kappa}{2^c} 
\end{align*}

\begin{figure}[!htb]
  \centering
  \fbox{\scalebox{0.85}{
      \begin{pchstack}
        \begin{pcvstack}
           Game $(h,C^{\tilde{h}})$ \\\\
           \procedure{$\mathcal{O}_h(x)$ }{%
             \pcln \pcreturn h(x) \\
             }
             \pcvspace
               \procedure{$\mathcal{O}_C(m)$}{
           \pcln x= (x_a,x_c)=(R_0,R_1) \\
            \pcln \pcfor i=0 \mbox{ to } \left\lceil \frac{\ell}{r}\right\rceil-1 \pcdo\\
            \pcln \pcind (x_a,x_c)=\tilde{h}(x_a\xor m_i,x_c)\\
            \pcln \pcendfor\\
            \pcln \pcfor i=0 \mbox{ to } \left\lceil \frac{s}{r}\right\rceil -1 \pcdo\\
            \pcln \pcind \mbox{Append } x_a \mbox{ to output}\\
            \pcln \pcind (x_a,x_c)=\tilde{h}(x_a,x_c)\\
            \pcln \pcendfor\\   
         }
         \end{pcvstack}
           \pchspace
           \begin{pcvstack}
             Game $({\bf G}_0)$ \\\\
           \procedure{$\mathcal{O}_h(x)$ ($i\in [\ell]$)}{%
             \pcln \pcreturn \mathcal{G}(1,x) \\
             }
             \pcvspace
               \procedure{$\mathcal{O}_C(m)$}{
           \pcln x= (x_a,x_c)=(R_0,R_1) \\
            \pcln \pcfor i=0 \mbox{ to } \left\lceil \frac{\ell}{r}\right\rceil-1 \pcdo\\
            \pcln \pcind (x_a,x_c)=\mathcal{G}(0,x_a\xor m_i,x_c)\\
            \pcln \pcendfor\\
            \pcln \pcfor i=0 \mbox{ to } \left\lceil \frac{s}{r}\right\rceil -1 \pcdo\\
            \pcln \pcind \mbox{Append } x_a \mbox{ to output}\\
            \pcln \pcind (x_a,x_c)=\mathcal{G}(0,x_a,x_c)\\
            \pcln \pcendfor\\   
          }
          \pcvspace
          \procedure{$\mathcal{G}(i,x)$}{%
            \pcln y=\tilde{y}=h(x)\\
            \pcln \pcif i=1~~\pcreturn y\\
            \pcln \pcelse\\
             \pcln \pcind \mbox{Run} \tilde{h}(x)\\
          \pcln \pcind\pcfor \mbox{ every query } x_j \mbox{ made by } \tilde{h}\\
          \pcln \pcind\pcind \mbox{Feed } y=h(x_j)\\
          \pcln \pcind\pcendfor\\
          \pcln \pcind\tilde{y}= \tilde{h}(x)\\
          \pcln \pcind\pcreturn \tilde{y}\\
          \pcln \pcendif\\
           } 
         \end{pcvstack}
          \end{pchstack}
    }}
  \caption{Game \pcnotionstyle{Real} and Game ${\bf G}_0$}
  \label{fig:gamereal}
\end{figure}

\subsection{Game Transitions}
\label{sec:game-transitions-sponge}
Our crooked-indifferentiability proof relies on four intermediate game, denoted by ${\bf G}_0,{\bf G}_1,{\bf G}_1,$ and ${\bf G}_2$. We start with the real game $(h, C^{\tilde{h}})$. There are two public interfaces for the adversary to query. The first one is $\mathcal{O}_h$, which can be used to interact with the function $h$. The other one is $\mathcal{O}_C$, which can be used to compute $C^{\tilde{h}}$. In the intermediate games we shall use additional subroutine $\mathcal{G}$.

\noindent \textbf{Game ${\bf G}_0$. } In this game, we introduce the subroutine $\mathcal{G}$ which acts as a common interface to $h,\tilde{h}$. We modify the  $\mathcal{O}_h$ and $\mathcal{O}_C$ subroutine. For queries to $h$, $\mathcal{G}$ is called with parameter $h$ whereas for $\tilde{h}$ the parameter value is set to be 0. These changes are ornamental and do not change the output for any of the query. Hence
\begin{align*}
    \Pr[\adv^{h,C^{\tilde{h}}}=1]=\Pr[\adv^{{\bf G}_0}=1]
\end{align*}

\begin{figure}[!htb]
  \centering
  \fbox{\scalebox{0.85}{
      \begin{pchstack}
          \begin{pcvstack}
             Game ${\bf G}_1a,\mbox{\gamechange{${\bf G}_1b$}}$ \\\\
           \procedure{$\mathcal{O}_h(x)$ }{%
             \pcln \pcreturn \mathcal{G}(1,x) \\
             }
             \pcvspace
               \procedure{$\mathcal{O}_C(m)$}{
           \pcln x= (x_a,x_c)=(R_0,R_1) \\
            \pcln \pcfor i=0 \mbox{ to } \left\lceil \frac{\ell}{r}\right\rceil-1 \pcdo\\
            \pcln \pcind (x_a,x_c)=\mathcal{G}(0,x_a\xor m_i,x_c)\\
            \pcln \pcendfor\\
            \pcln \pcfor i=0 \mbox{ to } \left\lceil \frac{s}{r}\right\rceil -1 \pcdo\\
            \pcln \pcind \mbox{Append } x_a \mbox{ to output}\\
            \pcln \pcind (x_a,x_c)=\mathcal{G}(0,x_a,x_c)\\
            \pcln \pcendfor\\   
          }
          \pcvspace
          \procedure{$\mathcal{G}(i,x)$}{%
             \pcln y=\tilde{y}=h(x)\\
            \pcln \pcif i=1~~\pcreturn y\\
             \pcln \mbox{Run} \tilde{h}(x)\\
          \pcln \pcfor \mbox{ every query } x_j \mbox{ made by } \tilde{h}\\
          \pcln \pcind \mbox{Feed } y=h(x_j)\\
          \pcln \pcendfor\\
          \pcln  \pcif{\tilde{h}(x)\neq h(x)}\\
          \pcln \pcind \mbox{\sc Bad}=1\\
          \pcln  \pcbox{\tilde{y}= \tilde{h}(x)}~~\gamechange{$\tilde{y}= h(x)$}\\
          \pcln \pcreturn \tilde{y}\\
          \pcln \pcendif\\
           } 
         \end{pcvstack}
         \pchspace
          \begin{pcvstack}
             Game $({\bf G}_2)$ \\\\
           \procedure{$\mathcal{O}_h(x)$ }{%
             \pcln \pcreturn \mathcal{G}(1,x) \\
             }
             \pcvspace
               \procedure{$\mathcal{O}_C(m)$}{
           \pcln x= (x_a,x_c)=(R_0,R_1) \\
            \pcln \pcfor i=0 \mbox{ to } \left\lceil \frac{\ell}{r}\right\rceil-1 \pcdo\\
            \pcln \pcind (x_a,x_c)=h(,x_a\xor m_i,x_c)\\
            \pcln \pcendfor\\
            \pcln \pcfor i=0 \mbox{ to } \left\lceil \frac{s}{r}\right\rceil -1 \pcdo\\
            \pcln \pcind \mbox{Append } x_a \mbox{ to output}\\
            \pcln \pcind (x_a,x_c)=h(x_a,x_c)\\
            \pcln \pcendfor\\   
          }
          \pcvspace
          \procedure{$\mathcal{G}(i,x)$}{%
            \pcln y=\tilde{y}=h(x)\\
            \pcln \pcif i=1~~\pcreturn y\\
             \pcln \mbox{Run} \tilde{h}(x)\\
          \pcln \pcfor \mbox{ every query } x_j \mbox{ made by } \tilde{h}\\
          \pcln \pcind \mbox{Feed } y=h(x_j)\\
          \pcln \pcendfor\\
          \pcln  \pcif{\tilde{h}(x)\neq h(x)}\\
          \pcln \pcind \mbox{\sc Bad}=1\\
          \pcln  \tilde{y}= h(x)  \\
          \pcln \pcreturn \tilde{y}\\
          \pcln \pcendif\\
           } 
         \end{pcvstack}
         \end{pchstack}
    }}
  \caption{Game ${\bf G}_1a, {\bf G}_1b,{\bf G}_2$. In the game ${\bf G}_1b$ the highlighted statement will be executed instead of the boxed statement.}
  \label{fig:game12}
\end{figure}

\noindent \textbf{Game ${\bf G}_1a$.} In this game, we modify the subroutine $\mathcal{G}$. When queried with $(0,x)$ for intended value of $\tilde{h}(x)$, $\mathcal{G}$ checks whether $\tilde{h}(x)=h(x)$. If the equality does not hold $\mathcal{G}$ sets the {\sc Bad} flag. However, it still returns $\tilde{h}(x)$. As the output of any query does not change,
\begin{align*}
  \Pr[\adv^{{\bf G}_0}=1]=\Pr[\adv^{{\bf G}_1a}=1]
\end{align*}
\noindent \textbf{Game ${\bf G}_1b$.} In this game, when {\sc Bad} flag is set, $h(x)$ is returned. Everything else remain unchanged. Using the fundamental lemma of game playing proof,
\begin{align*}
 \left| \Pr[\adv^{{\bf G}_1a}=1]-\Pr[\adv^{{\bf G}_1b}=1]\right|\leq \Pr[\mbox{\sc Bad}]
\end{align*}
\noindent \textbf{Game ${\bf G}_2$.}
We note that in game ${\bf G}_1$, all the $\mathcal{G}(x)$ queries made by $O_c$ is answered with $h(x)$. Hence, in Game ${\bf G}_2$, we give $O_c$ direct access to $h$. Output distribution of the game remains exactly same after this change. Hence,
\begin{align*}
  \Pr[\adv^{{\bf G}_1b}=1]=\Pr[\adv^{{\bf G}_2}=1]
\end{align*}
\noindent \textbf{Game ${\bf G}_3$.}
We replace the oracles $(h,C^h)$ by $(S,\mathcal{F}^S)$ where $S$ is the simulator of Bertoni \etal ~\cite{EC:BDPV08}. By the results in \cite{EC:BDPV08},
\begin{align*}
 \left| \Pr[\adv^{{\bf G}_2}=1]-\Pr[\adv^{{\bf G}_3}=1]\right|\leq \frac{q^2\tau^2}{2^c}. 
\end{align*}
 
Note that the numerator in the right hand side is the square of total number of queries made to the simulator, which in Game ${\bf G}_3$, $q\tau$ as the implementation makes $\tau $ many queries for each invocation. 

Finally, we observe that $\mathcal{G}$ works identically with the second stage simulator of Figure\ref{fig:simulator-sponge}. So As we initialize the simulator of \cite{EC:BDPV08} with $L$ from the stage I, the two games are identical. Hence we get, 
\begin{align*}
  \Pr[\adv^{{\bf G}_3}=1]=\Pr[\adv^{(S^{\mathcal{F}},\mathcal{F})}=1]
\end{align*}
Collecting all the probabilities, we get

\begin{align*}
  \Delta_{\adv_2(r, z, R)}\big((h, C^{ \tilde{h} }(R, \cdot))\ ;\ (S^{\mathcal{F}}(H_{z, r}, z, R), \mathcal{F}) \big)&\leq \Pr[\mbox{\sc Bad}] + \frac{q^2\tau^2}{2^c}\\&\leq \frac{q^2\tau^2+q_2(\ell+s)\kappa}{2^c}+ q_2(\ell+s)\epsilon2^r.  
\end{align*}

\begin{figure}[!htb]
  \centering
  \fbox{\scalebox{0.85}{
      \begin{pchstack}
         \begin{pcvstack}
           Game $({\bf G}_3)$ \\\\
           
          \procedure{$\mathcal{O}_h(x)$ }{%
             \pcln \pcreturn \mathcal{G}(1,x) \\
             }
             \pcvspace
               \procedure{$\mathcal{O}_C(m)$}{
          \pcln\pcreturn\mathcal{F}(m)\\
        }
      \end{pcvstack}
      \pchspace
      \begin{pcvstack}
          \procedure{$\mathcal{G}(i,x)$}{%
            \pcln y=\tilde{y}=h(x)\\
            \pcln \pcif i=1~~\pcreturn y\\
             \pcln \mbox{Run} \tilde{h}(x)\\
          \pcln \pcfor \mbox{ every query } x_j \mbox{ made by } \tilde{h}\\
          \pcln \pcind \mbox{Feed } y=Sim(x_j)\\
          \pcln \pcendfor\\
          \pcln  \pcif{\tilde{h}(x)\neq Sim(x)}\\
          \pcln \pcind \mbox{\sc Bad}=1\\
          \pcln  \tilde{y}= Sim(x)  \\
          \pcln \pcreturn \tilde{y}\\
          \pcln \pcendif\\
           } 
         \end{pcvstack}
         \end{pchstack}
    }}
  \caption{Game ${\bf G}_3$}
  \label{fig:game3}
\end{figure}

\section{Conclusion}
\label{sec:conclusion}

In this paper, we revisited the recently introduced crooked indifferentiability notion. We showed that the proof of crooked indifferentiability of enveloped XOR construction in \cite{C:RTYZ18} is incomplete. We developed new technique to prove crooked indifferentiability of the same construction. We also show that the sponge construction with randomized initial value is also crooked indifferentiable secure hash function.


\bibliographystyle{splncs04} 
\bibliography{abbrev3,crypto,sponge}

\begin{thebibliography}{10}
\providecommand{\url}[1]{\texttt{#1}}
\providecommand{\urlprefix}{URL }
\providecommand{\doi}[1]{https://doi.org/#1}

\bibitem{ACNS:AFMV19}
Ateniese, G., Francati, D., Magri, B., Venturi, D.: Public immunization against
  complete subversion without random oracles. In: Deng, R.H.,
  {Gauthier-Uma{\~n}a}, V., Ochoa, M., Yung, M. (eds.) ACNS 19. {LNCS}, vol.
  11464, pp. 465--485. Springer, Heidelberg (Jun 2019).
  \doi{10.1007/978-3-030-21568-2_23}

\bibitem{CCS:AteMagVen15}
Ateniese, G., Magri, B., Venturi, D.: Subversion-resilient signature schemes.
  In: Ray, I., Li, N., Kruegel, C. (eds.) ACM CCS 2015. pp. 364--375. {ACM}
  Press (Oct 2015). \doi{10.1145/2810103.2813635}

\bibitem{EC:BelHoa15}
Bellare, M., Hoang, V.T.: Resisting randomness subversion: Fast deterministic
  and hedged public-key encryption in the standard model. In: Oswald, E.,
  Fischlin, M. (eds.) EUROCRYPT~2015, Part~II. {LNCS}, vol.~9057, pp. 627--656.
  Springer, Heidelberg (Apr 2015). \doi{10.1007/978-3-662-46803-6_21}

\bibitem{C:BelPatRog14}
Bellare, M., Paterson, K.G., Rogaway, P.: Security of symmetric encryption
  against mass surveillance. In: Garay, J.A., Gennaro, R. (eds.) CRYPTO~2014,
  Part~I. {LNCS}, vol.~8616, pp. 1--19. Springer, Heidelberg (Aug 2014).
  \doi{10.1007/978-3-662-44371-2_1}

\bibitem{BertoniDPA07}
Bertoni, G., Daemen, J., Peeters, M., Assche, G.: Sponge functions. ECRYPT Hash
  Workshop 2007  (01 2007)

\bibitem{eurocrypt/BertoniDPA08}
Bertoni, G., Daemen, J., Peeters, M., Assche, G.V.: On the indifferentiability
  of the sponge construction. In: Advances in Cryptology - {EUROCRYPT} 2008,
  27th Annual International Conference on the Theory and Applications of
  Cryptographic Techniques, Istanbul, Turkey, April 13-17, 2008. Proceedings.
  pp. 181--197 (2008), \url{https://doi.org/10.1007/978-3-540-78967-3\_11}

\bibitem{EC:BDPV08}
Bertoni, G., Daemen, J., Peeters, M., {Van Assche}, G.: On the
  indifferentiability of the sponge construction. In: Smart, N.P. (ed.)
  EUROCRYPT~2008. {LNCS}, vol.~4965, pp. 181--197. Springer, Heidelberg (Apr
  2008). \doi{10.1007/978-3-540-78967-3_11}

\bibitem{PKC:CRTYZZ19}
Chow, S.S.M., Russell, A., Tang, Q., Yung, M., Zhao, Y., Zhou, H.S.: Let a
  non-barking watchdog bite: Cliptographic signatures with an offline watchdog.
  In: Lin, D., Sako, K. (eds.) PKC~2019, Part~I. {LNCS}, vol. 11442, pp.
  221--251. Springer, Heidelberg (Apr 2019). \doi{10.1007/978-3-030-17253-4_8}

\bibitem{C:CDMP05}
Coron, J.S., Dodis, Y., Malinaud, C., Puniya, P.: {Merkle-Damg{\aa}rd}
  revisited: How to construct a hash function. In: Shoup, V. (ed.) CRYPTO~2005.
  {LNCS}, vol.~3621, pp. 430--448. Springer, Heidelberg (Aug 2005).
  \doi{10.1007/11535218_26}

\bibitem{FSE:DegFarPoe15}
Degabriele, J.P., Farshim, P., Poettering, B.: A more cautious approach to
  security against mass surveillance. In: Leander, G. (ed.) FSE~2015. {LNCS},
  vol.~9054, pp. 579--598. Springer, Heidelberg (Mar 2015).
  \doi{10.1007/978-3-662-48116-5_28}

\bibitem{C:DPSW16}
Degabriele, J.P., Paterson, K.G., Schuldt, J.C.N., Woodage, J.: Backdoors in
  pseudorandom number generators: Possibility and impossibility results. In:
  Robshaw, M., Katz, J. (eds.) CRYPTO~2016, Part~I. {LNCS}, vol.~9814, pp.
  403--432. Springer, Heidelberg (Aug 2016). \doi{10.1007/978-3-662-53018-4_15}

\bibitem{EC:DGGJR15}
Dodis, Y., Ganesh, C., Golovnev, A., Juels, A., Ristenpart, T.: A formal
  treatment of backdoored pseudorandom generators. In: Oswald, E., Fischlin, M.
  (eds.) EUROCRYPT~2015, Part~I. {LNCS}, vol.~9056, pp. 101--126. Springer,
  Heidelberg (Apr 2015). \doi{10.1007/978-3-662-46800-5_5}

\bibitem{TCC:MauRenHol04}
Maurer, U.M., Renner, R., Holenstein, C.: Indifferentiability, impossibility
  results on reductions, and applications to the random oracle methodology. In:
  Naor, M. (ed.) TCC~2004. {LNCS}, vol.~2951, pp. 21--39. Springer, Heidelberg
  (Feb 2004). \doi{10.1007/978-3-540-24638-1_2}

\bibitem{EC:MirSte15}
Mironov, I., {Stephens-Davidowitz}, N.: Cryptographic reverse firewalls. In:
  Oswald, E., Fischlin, M. (eds.) EUROCRYPT~2015, Part~II. {LNCS}, vol.~9057,
  pp. 657--686. Springer, Heidelberg (Apr 2015).
  \doi{10.1007/978-3-662-46803-6_22}

\bibitem{AC:RTYZ16}
Russell, A., Tang, Q., Yung, M., Zhou, H.S.: Cliptography: Clipping the power
  of kleptographic attacks. In: Cheon, J.H., Takagi, T. (eds.) ASIACRYPT~2016,
  Part~II. {LNCS}, vol. 10032, pp. 34--64. Springer, Heidelberg (Dec 2016).
  \doi{10.1007/978-3-662-53890-6_2}

\bibitem{CCS:RTYZ17}
Russell, A., Tang, Q., Yung, M., Zhou, H.S.: Generic semantic security against
  a kleptographic adversary. In: Thuraisingham, B.M., Evans, D., Malkin, T.,
  Xu, D. (eds.) ACM CCS 2017. pp. 907--922. {ACM} Press (Oct~/~Nov 2017).
  \doi{10.1145/3133956.3133993}

\bibitem{C:RTYZ18}
Russell, A., Tang, Q., Yung, M., Zhou, H.S.: Correcting subverted random
  oracles. In: Shacham, H., Boldyreva, A. (eds.) CRYPTO~2018, Part~II. {LNCS},
  vol. 10992, pp. 241--271. Springer, Heidelberg (Aug 2018).
  \doi{10.1007/978-3-319-96881-0_9}

\bibitem{C:YouYun96}
Young, A., Yung, M.: The dark side of ``black-box'' cryptography, or: Should we
  trust capstone? In: Koblitz, N. (ed.) CRYPTO'96. {LNCS}, vol.~1109, pp.
  89--103. Springer, Heidelberg (Aug 1996). \doi{10.1007/3-540-68697-5_8}

\bibitem{EC:YouYun97}
Young, A., Yung, M.: Kleptography: Using cryptography against cryptography. In:
  Fumy, W. (ed.) EUROCRYPT'97. {LNCS}, vol.~1233, pp. 62--74. Springer,
  Heidelberg (May 1997). \doi{10.1007/3-540-69053-0_6}

\end{thebibliography}

\section*{Appendix}

\section{Leftout Definitions}
\label{sec:leftout-definitions}

\noindent\textsc{Detection Algorithm}. Given an implementation one may check the correctness of the algorithm by comparing the outputs of the implementation with a known correct algorithm. More precisely, we sample $\alpha_1, \ldots, \alpha_t \sample \bin^m$ and then for all $0 \leq i \leq l$, we check whether $\tilde{h}(i, \alpha) \eq h(i, \alpha)$ holds. If it does not hold, the implementation would not be used. It is easy to see that for $\epsilon$-crooked implementation the subversion would not be detected with probability at least $(1- \epsilon)^t$. So for negligible $\epsilon$ this probability would be still close to one for all polynomial function $t$
and so the implementation can be survived for further use. \smallskip

\subsection{Classical Indifferentiability}
IV-based oracle construction $C^{\mathcal{O}}(\cdot, \cdot)$ 
first fixes an initial value $R$ (chosen randomly from an initial value space). Afterwards, on input $M$, it interacts with the oracle $\mathcal{O}$ and finally it returns an output, denoted as $C^{\mathcal{O}}(R, M)$. When the initial value space is singleton (i.e., degenerated), we simply call $C$ an oracle construction. 
{\em An (initial value based) oracle construction $C$ is called $\mathcal{F}$-compatible} if the domains and ranges of $C$ and $\mathcal{F}$ (an ideal primitive) are same. Now we state the definition of indifferentiability of an oracle construction as stated in \cite{C:CDMP05,TCC:MauRenHol04} in our terminologies. 

\begin{definition}[Indifferentiability]
\label{def:indiff} Let  $\mathcal{F}$ be an ideal primitive and $C^P$ be a $\mathcal{F}$-compatible oracle construction. $C$ is said to be {\em $((q_{P}, q_C, q_{\mathrm{sim}}), \varepsilon)$
indifferentiable} from an ideal primitive $\mathcal{F}$ if there exists
a $q_{\mathrm{sim}}$-query algorithm $S^{\mathcal{F}}$ (called simulator) such that for any $(q_P, q_C)$-query algorithm $\adv$, it holds that
$$ \Delta_{\adv}\left((P, C^{P}(\cdot))\ ;\ (S^\mathcal{F}(\cdot), \mathcal{F})\right) < \varepsilon.$$ 
\end{definition}
In the above definition one may include the complexity (time, query etc.) of the adversary and simulator. However, for information theoretic security analysis, we may ignore the time complexity of the simulator as well as the adversary.\footnote{\small One can easily extend the concrete setup to an asymptotic setup. Let $\langle \mathcal{F}_n, P_n \rangle_{n \in \NN}$ be a sequence of primitives and $C(n)$ be a polynomial time $\mathcal{F}_n$-compatible oracle algorithm. $C^{P_n}(n)$ is said to be (computationally) indifferentiable from
	$\mathcal{F}_n$ if there exists a polynomial time simulator $S^{\mathcal{F}_n}$ such that for all polynomial time oracle algorithm $\mathcal{A}$, 
	$\Delta_\adv\left((P_n, C^{P_n}(n))\ ;\ (S^{\mathcal{F}_n}, \mathcal{F}_n)\right) = \negl(n)$.} A popular indifferentiability treatment for hash function considers $\mathcal{F}$ to be a $n$-bit random oracle
which returns independent and uniform $n$-bit strings for every distinct queries. However, the hash function $C^P$ can be defined through different types of primitives $P$ (a random oracle, or a random permutation $\pi_n$, chosen uniformly from the set of all permutations over $\bool^n$). \smallskip

\begin{figure}[h]
\begin{center}
\begin{tikzpicture}[auto, node distance=1cm, >=latex']
\begin{scope}
\tikzstyle{Attacker} = [draw, fill=black!05, rectangle, 
    minimum height=1.7em, minimum width=8em, drop shadow,thick]
\tikzstyle{Box} = [draw, fill=black!05, rectangle, 
    minimum height=1.7em, minimum width= 1.7em, drop shadow,thick]
\tikzstyle{to} = [->,thick]
\tikzstyle{line}= [-,thick]
\tikzstyle{dotto} = [->,dotted, thick]

\node [Box, name=F] {$P$};
\node [Box, right=of F, name=C] {$C$}; 
\node [Box, right=of C, name=S] {$S$}; 
\node [Box, right=of S, name=G] {$\mathcal{F}$}; 
\draw [draw,to] (C) -- (F);
\draw [draw, to] (S) -- (G);

\coordinate (mid) at ($0.5*(C.east) + 0.5*(S.west)$);

\draw [draw, line] (mid) -- ++(0,0.75);
\draw [draw, line] (mid) -- ++(0,-0.75);

\node [Attacker, below of=mid, node distance=2cm, name=D] {$\mathcal{A}$};

\coordinate (dleftmid) at ($0.5*(D.north west) + 0.5*(D.north)$);
\coordinate (drightmid) at ($0.5*(D.north east) + 0.5*(D.north)$);

\draw (dleftmid) to[bend right, thick, dotted] ($0.5*(F.south)+0.5*(dleftmid)$);
\draw ($0.5*(F.south)+0.5*(dleftmid)$) to[->, bend left, thick, dotted] (F.south);
\draw (dleftmid) to[bend left, thick, dotted] ($0.5*(S.south)+0.5*(dleftmid)$);
\draw ($0.5*(S.south)+0.5*(dleftmid)$) to[->, bend right, thick, dotted] (S.south);

\draw (drightmid) to[bend right, thick, dotted] ($0.5*(C.south)+0.5*(drightmid)$);
\draw ($0.5*(C.south)+0.5*(drightmid)$) to[->, bend left, thick, dotted] (C.south);
\draw (drightmid) to[bend left, thick, dotted] ($0.5*(G.south)+0.5*(drightmid)$);
\draw ($0.5*(G.south)+0.5*(drightmid)$) to[->, bend right, thick, dotted] (G.south);

\end{scope}
\end{tikzpicture}
\end{center}
\caption{The distinguishing game of $\adv$ in the indifferentiability security game.} 
\label{fig:indiff}
\end{figure}

\subsection{Markov Inequality}

\begin{lemma}
  Let $X$ be a non-negative random variable and $a>0$ be a real number. Then it holds that
  \begin{align*}
    \Pr[X \geq a] \leq \frac{\ex (X)}{a}
  \end{align*}
\end{lemma}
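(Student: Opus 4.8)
The plan is to prove this standard tail bound by dominating the indicator of the event $\{X \geq a\}$ by the random variable $X$ itself and then passing to expectations. First I would introduce the indicator random variable $\mathbf{1}_{\{X \geq a\}}$, which takes the value $1$ on the event $\{X \geq a\}$ and $0$ on its complement. Since $\Pr[X \geq a] = \ex(\mathbf{1}_{\{X \geq a\}})$, it suffices to control this expectation.

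The key step is to establish the pointwise inequality $a \cdot \mathbf{1}_{\{X \geq a\}} \leq X$ holding everywhere on the sample space, which I would verify by a short two-case argument. On the event $\{X \geq a\}$ the left-hand side equals $a$ while the right-hand side is at least $a$ by definition of the event; on the complementary event $\{X < a\}$ the left-hand side is $0$ while the right-hand side is non-negative. It is precisely in this second case that the hypothesis $X \geq 0$ is used, so I would flag that non-negativity is exactly what makes the domination valid.

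With the pointwise domination in hand, I would take expectations of both sides. By monotonicity of expectation together with linearity (pulling out the positive constant $a$), this gives $a \cdot \Pr[X \geq a] = a \cdot \ex(\mathbf{1}_{\{X \geq a\}}) \leq \ex(X)$, and dividing through by $a > 0$ yields the claimed bound $\Pr[X \geq a] \leq \ex(X)/a$. Since the argument is entirely elementary, there is no substantive obstacle; the only point deserving care is the case analysis for the pointwise inequality, where the non-negativity of $X$ is essential and where one relies on the standard facts that expectation is both monotone and linear.
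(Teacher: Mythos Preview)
Your proof is correct and is the standard indicator-domination argument for Markov's inequality. The paper itself does not supply a proof of this lemma; it merely records the statement as a well-known fact and then illustrates how it is applied elsewhere in the analysis, so there is nothing to compare against beyond noting that your argument is exactly the textbook one.
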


A simple application of Markov inequality (which is used repeatedly in this paper) is the following. Consider a joint distribution of random variables $X$ and $Y$. Suppose $E$ is an event for which $\Pr((X, Y) \in E) \leq \epsilon$. Let  $f(x) :\eq \Pr((X, Y) \in E | X \eq x)$ and $E_1 :\eq \{x: f(x) \geq \delta\}$. It follows from the definition that $\ex(f(X)) \eq \Pr(E)$. Now, we use Markov's inequality  
\begin{align*}
\Pr(E_1) & \eq \Pr(f(X) \geq \delta) \\
& \leq \ex(f(X))/\delta  \\
 & \eq \epsilon/\delta. 
\end{align*}
Note that when $X$ and $Y$ are independent, $f(x)=\Pr((x, Y) \in E)$




\section{Leftout Proofs}
\label{sec:leftout-proofs}
\subsection{Proof of Lemma~\ref{lemma:Dtilde}}
 For any $\epsilon$-crooked implementation $H$ and every $h \in \textsf{H}|_z$, $0 \leq i \leq l$, we have $\Pr_x(\tilde{h}(i, x) \neq h(i, x)) \leq \epsilon$ where $x \sample \bin^n$. So, for $\alpha \sample \mathcal{D}$, 
	\begin{align*}
	\ex_\alpha(d(\alpha, h)) & \eq  \Pr_\alpha(\tilde{h}(\alpha) \neq h(\alpha) \vee \alpha \in \mathcal{D}(z)) \\
	& \leq  \epsilon \pl q_1 2^{-n}.
	\end{align*}
Now, we fix any $\alpha :\eq (i, x)$ and $1 \leq j \leq \tau$. For any function $g \in \textsf{H}|_z$, let $\mathcal{S}_{\alpha, g} :\eq \{(f,\beta) \in \textsf{H}_z \times \bin^n:\ f|_{ \mathcal{Q}^f_j(\alpha) \rightarrow \beta}=g  \}$. We shall use the following Claim which we prove later. \medskip
	
\begin{Claim}
  \label{claim:sizefb}
For a function $g$, we have $|\{(f,\beta):\ f|_{ \mathcal{Q}^f_j(\alpha) \rightarrow \beta}=g  \}|=N$.
\end{Claim}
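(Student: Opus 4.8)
The plan is to exhibit the map $(f,\beta)\mapsto f|_{\mathcal{Q}^f_j(\alpha)\to\beta}$ as exactly $N$-to-one, with $N=2^n$. Throughout I read $\mathcal{Q}^f_j(\alpha)$ as the single point $\tilde\alpha_j$ probed by the $j$-th query of $\tilde f(\alpha)$, and I use that the implementation queries $\tau$ \emph{distinct} points, all lying outside $\mathcal{D}(z)$. First I would reformulate membership in the fiber over a fixed $g\in\mathsf{H}|_z$: a pair $(f,\beta)$ satisfies $f|_{\tilde\alpha_j\to\beta}=g$ iff $f$ agrees with $g$ at every point except possibly $p_f:=\tilde\alpha_j(f)$, and $\beta=g(p_f)$. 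Since $\beta$ is then forced by $f$, counting pairs $(f,\beta)$ reduces to counting the admissible functions $f$.

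The heart of the argument is to show that the reprogrammed location is the same whether computed from $f$ or from $g$, so that it is in fact a fixed point $p:=\tilde\alpha_j(g)$ determined by $g$ alone. I would prove by induction on $i=1,\dots,j$ that if $f$ and $g$ differ only at $p_f=\tilde\alpha_j(f)$, then $\tilde f(\alpha)$ and $\tilde g(\alpha)$ make identical first $i$ queries with identical responses. The base case holds because the first query equals $\alpha$ for any oracle. For the inductive step, the decisive fact is that the $j$-th query point $p_f$ is distinct from the first $j-1$ query points of $\tilde f(\alpha)$; hence on those points $f$ and $g$ return the same value, the two computations stay in lockstep, and the next query point coincides. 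Carrying the induction to $i=j$ gives $\tilde\alpha_j(g)=p_f$, that is, every admissible $f$ has $\tilde\alpha_j(f)=p$.

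With the location pinned to the fixed value $p$, I would finish by counting: the admissible $f$ are precisely the functions obtained from $g$ by resetting its value at $p$ arbitrarily, giving $2^n$ choices (one per value of $f(p)$, including $f(p)=g(p)$). The same induction, read in the forward direction, confirms that each such $f$ indeed produces $\tilde\alpha_j(f)=p$, since $p$ is never among the first $j-1$ queries and so altering $f(p)$ cannot change them. Because $p=\tilde\alpha_j\notin\mathcal{D}(z)$, each such $f$ still lies in $\mathsf{H}|_z$, and for each the forced value $\beta=g(p)$ is the same. Hence $|\mathcal{S}_{\alpha,g}|=2^n$, independent of $g$, which is the desired $N$.

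The one genuinely delicate step is the induction in the second paragraph, because $\tilde\alpha_j$ is itself a function of the oracle: reprogramming $f$ at $\tilde\alpha_j(f)$ could conceivably move that very point. The resolution I would stress is that the location of the $j$-th query depends only on the answers to the first $j-1$ queries, which are insensitive to the value at the ($j$-th, hence distinct) reprogrammed point; this breaks the apparent circularity and is exactly what forces the fiber count to be uniform. This uniformity is what lets Lemma~\ref{lemma:Dtilde} replace the resampled function $h_{\tilde\alpha_j\to\beta}$ by a uniformly random $g\sample\mathsf{H}|_z$ and thereby bound the expectation of $d(\alpha,\cdot)$ by $\epsilon+q_12^{-n}$.
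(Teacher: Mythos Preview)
Your proposal is correct and follows essentially the same approach as the paper's proof: both argue that the first $j-1$ query/response pairs of $\tilde f(\alpha)$ coincide with those of $\tilde g(\alpha)$ (since $f$ and $g$ differ only at the $j$-th, distinct query point), so the reprogrammed location is pinned to $p=\tilde\alpha_j(g)$ and one $n$-bit value is free, yielding $N=2^n$. Your version is in fact more careful than the paper's: you make the induction explicit and correctly identify $f(p)$ as the free parameter with $\beta=g(p)$ forced, whereas the paper's terse argument states ``$f(\mathcal{Q}^f_j(\alpha))=\beta$ (for any choice of $\beta$)'', which reverses the roles of $\beta$ and $f(p)$ relative to the notation $h_{\alpha\to\beta}$---a harmless slip for the count but one you avoid.
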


\noindent Now for each $j\in \{1,2,\cdots,\ell\}$,
\begin{align*}
	\mathsf{Ex}_{\alpha, h}D^j(\alpha,h)&=\mathsf{Ex}_{\alpha, h}\mathsf{Ex}_\beta \big( d(\alpha, h|_{\tilde{\alpha}_j\rightarrow \beta}) \big) \\ &=\sum_{\alpha,h,\beta}\Pr(h)\Pr(\alpha)\Pr(\beta)\cdot d(\alpha,h|_{\tilde{\alpha}_j\rightarrow\beta})\\
	&= 2^{-n} \sum_{(h, \beta) \in \mathcal{S}_{\alpha, g}} \sum_{\alpha,g}\Pr(g)\Pr(\alpha)\cdot d(\alpha, g)\\
	&= \sum_{\alpha,g}\Pr(g)\Pr(\alpha)\cdot d(\alpha, g)\\	&=\mathsf{Ex}_{\alpha,g}d(\alpha,g)\\
	&\leq \mathsf{Ex}_g (\epsilon \pl q_1 2^{-n}) \\
	& \leq (\epsilon \pl q_1 2^{-n}) \hskip50pt  \qed \medskip
	\end{align*}

\subsection{Proof of Lemma~\ref{lemma:point-function-good}}
\label{sec:proof-lemma-pointfuncition}

\begin{proof}
A pair can be bad in two ways and we bound each case separately. We first bound that for a random $\alpha$ and $h$, either $d(\alpha, h) \eq 1$ or ${D}^j(\alpha, h) > \epsilon_1^{1/2}$. The probability of the first event is clearly bounded by $\epsilon$ and whereas the probability of the second event is bounded by $\epsilon_1^{1/4}$ (Applying Markov inequality and the Lemma \ref{lemma:Dtilde}). So a random $\alpha'$ is queried by some $\alpha$ satisfying the above event holds  with probability at most $\tau(\epsilon \pl \epsilon_1^{1/4}) \leq 2\tau \epsilon^{1/4}$.

By using simple averaging argument, average number of $\alpha'$ for which the number of $\alpha$ that queries $ \alpha'$ is at least $1/\epsilon^{1/4}$ is at most $\tau \epsilon_1^{1/4}$. By adding this two cases we complete our proof.  \qed
\end{proof}
        
\subsection{Proof of Claim~\ref{claim:sizefb}} Suppose $j \eq 1$ and so $\mathcal{Q}^f_j(\alpha) \eq \alpha$. Now, the function $f$ agrees with $g$ except that the output of $f$ at ${\alpha}$ can be any $n$-bit string, which should be $\beta$. Now assume $j > 1$. So, $f(\mathcal{Q}^f_k(\alpha)) \eq g(\mathcal{Q}^f_k(\alpha))$ for all $k < j$. For $k \eq j$, $f(\mathcal{Q}^f_j(\alpha)) \eq \beta$ (for any choice of $\beta \in \bin^n$). For all other inputs, $f$ agrees with $g$. So the claim follows. \qed
\subsection{Proof of Lemma \ref{lemma:goodRh}}
\label{sec:proof-lemma-refl-goodrh}
  From Lemma~\ref{lemma:robusth}, we know that $\Pr_h(h \mbox{ is not robust})\leq \epsilon_2^{1/2} $. Now, fix a robust $h$. Then
\begin{align*}
\Pr_{R}((R, h) \not\in \mathcal{G}^*) & \leq \sum_{m \in \bin^n} \prod_{i \eq 1}^{l} \Pr_{R_i}( ((i, m \oplus R_i), h )\mbox{ is not good})
 \leq 2^n \times \epsilon_2^{l/2}  \leq 1/2^n.
\end{align*}
\noindent Hence, we have
\begin{align*}
  \Pr_{R,h}((R, h) \not\in \mathcal{G}^*) &\leq \Pr_h(h \mbox{ is not good})+  \Pr_{R}((R, h) \not\in \mathcal{G}^*|h \mbox{ is good}) 
                                         \leq \epsilon_2^{1/2}+ 1/2^n.  \qed
\end{align*}

\subsection{Proof of Lemma \ref{lemma:index-independent-beta}}
\label{sec:proof-lemma-refl}

\noindent We start with the following observation.
\begin{observation}
  \label{obs:queryset}
  \begin{align*}
    Q_{\twoheadrightarrow\alpha_i}^h=Q_{\twoheadrightarrow\alpha_i}^{h_\beta}
  \end{align*}
\end{observation}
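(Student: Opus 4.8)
\textbf{Proof proposal for Observation~\ref{obs:queryset}.} The plan is to reduce the claimed set equality to the pointwise equivalence
\[
\alpha \twoheadrightarrow_h \alpha_i \iff \alpha \twoheadrightarrow_{h_\beta} \alpha_i, \qquad \text{for every fixed } \alpha,
\]
and to establish this equivalence by comparing the two computations $\tilde{H}^h(\alpha)$ and $\tilde{H}^{h_\beta}(\alpha)$ in lockstep. The single structural fact I would exploit is that $h_\beta = h_{\alpha_i \to \beta}$ agrees with $h$ on every input except $\alpha_i$, together with the standing convention that the implementation $\tilde{H}$ is deterministic: the $j$th query $\tilde{\alpha}_j$ of $\tilde{h}(\alpha)$ is a deterministic function of the previous query–answer pairs.

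The core is a short induction on the query index. Running $\tilde{H}^h(\alpha)$ and $\tilde{H}^{h_\beta}(\alpha)$ simultaneously, I claim that so long as no query has yet hit $\alpha_i$, the two transcripts are identical. Indeed, if the first $j-1$ queries coincide and none of them equals $\alpha_i$, then $h$ and $h_\beta$ return identical answers to all of them (they differ only at $\alpha_i$), so by determinism the $j$th query produced is the same under both oracles. The induction can only terminate in one of two ways: the computation halts without ever querying $\alpha_i$, or it reaches a first index $k$ with $\tilde{\alpha}_k = \alpha_i$. Because the common prefix up to (but not including) that point is forced to be identical, the event ``$\alpha_i$ is queried during $\tilde{h}(\alpha)$'' occurs under $h$ precisely when it occurs under $h_\beta$. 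This gives both inclusions at once and hence $\alpha \twoheadrightarrow_h \alpha_i \iff \alpha \twoheadrightarrow_{h_\beta} \alpha_i$. Taking the union over all $\alpha$ yields $Q_{\twoheadrightarrow\alpha_i}^h = Q_{\twoheadrightarrow\alpha_i}^{h_\beta}$.

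I would emphasize that this argument is completely insensitive to the value $\beta$: the value $h_\beta(\alpha_i)=\beta$ only influences the computation \emph{after} the first query to $\alpha_i$, which is irrelevant for deciding membership in $Q_{\twoheadrightarrow\alpha_i}$. Thus the observation holds for \emph{every} $\beta$, not merely for $\beta \in \mathcal{S}$; the set $\mathcal{S}$ concerns the agreement of \emph{outputs} (the indicator $d(\alpha,\cdot)$), which is a strictly finer property and is handled separately in the surrounding proof of Lemma~\ref{lemma:index-independent-beta}.

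The only real care-point, and hence the ``main obstacle,'' is bookkeeping the determinism correctly: one must phrase the induction in terms of the first divergence of the two query streams and check that the streams cannot separate before a query to $\alpha_i$ is issued. Once that prefix-agreement invariant is stated cleanly, the rest is immediate. A minor edge case to mention is that by the standing convention $\tilde{h}(\alpha)$ issues $\alpha$ as its own first query; for $\alpha \neq \alpha_i$ this initial query is at a point where $h$ and $h_\beta$ agree, so the base of the induction is unaffected (and $\alpha_i$ itself lies trivially in both sets since $\tilde h(\alpha_i)$ queries $\alpha_i$ under either oracle).
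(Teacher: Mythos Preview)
Your proposal is correct and follows essentially the same approach as the paper: both argue that the query sequence of $\tilde{H}(\alpha)$ under $h$ and under $h_\beta$ agrees up to the first occurrence of $\alpha_i$, since $h$ and $h_\beta$ coincide everywhere else and the implementation is deterministic. Your version is in fact a bit more complete: the paper only spells out the inclusion $Q_{\twoheadrightarrow\alpha_i}^h \subseteq Q_{\twoheadrightarrow\alpha_i}^{h_\beta}$ and leaves the reverse direction implicit, whereas your lockstep induction yields both inclusions simultaneously, and your remark that the argument is independent of whether $\beta\in\mathcal{S}$ is a valid observation the paper does not make explicit.
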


\begin{proof}
Let $\alpha'\twoheadrightarrow_h\alpha_i$. Suppose $\alpha_i$ is the $j^{th}$ query in the computation of $\tilde{h}(\alpha')$. Clearly,  $j^{th}$ query in the computation of $\tilde{h}(\alpha')$ depends on the query and responses of first $j-1$ queries made by $\tilde{h}(\alpha')$. Let $\tilde{\alpha}_1,\cdots,\tilde{\alpha}_{j-1}$ be those queries. As $\alpha_i\notin \{\tilde{\alpha}_1,\cdots,\tilde{\alpha}_{j-1}\}$, we get $h(\tilde{\alpha}_k)=h_\beta(\tilde{\alpha}_k)$. As all the previous query responses are same, $\alpha'\twoheadrightarrow_{h_\beta}\alpha_i$. 
\qed
\end{proof}

We are now ready to prove the required three statements. Assume that $i$ is the index of interest for $(R,h)$ and $m$. by definition $\alpha_i\notin B_h$. 
\begin{Claim}
  \label{claim:Chtildesame}
  $\alpha_i\notin B_{h_\beta}$ if and only if $\alpha_i\notin B_{h}$.
\end{Claim}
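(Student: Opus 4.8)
The plan is to unfold both sides of the biconditional through Definition~\ref{def:good-pair}: $\alpha_i\notin B_h$ (resp. $\alpha_i\notin B_{h_\beta}$) means precisely that the pair $(\alpha_i,h)$ (resp. $(\alpha_i,h_\beta)$) is good, i.e. that (i) every $\alpha\twoheadrightarrow\alpha_i$ is robust for the corresponding function and (ii) at most $1/\epsilon_1^{1/4}$ points query $\alpha_i$. So I would prove the claim by checking that each of these two conditions holds for $h$ exactly when it holds for $h_\beta$. Condition (ii) is immediate from Observation~\ref{obs:queryset}: since $Q_{\twoheadrightarrow\alpha_i}^h=Q_{\twoheadrightarrow\alpha_i}^{h_\beta}$, the two counts are equal, so (ii) transfers verbatim. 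It therefore remains to show that, on this common query set, a point $\alpha$ is robust for $h$ if and only if it is robust for $h_\beta$.

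Robustness has two parts. For the unsubverted-ness part, $d(\alpha,\cdot)=0$, I would use $\beta\in\mathcal{S}$ directly. Fix $\alpha\twoheadrightarrow_h\alpha_i$ and let $j_0$ be the unique index with $\tilde{\alpha}_{j_0}=\alpha_i$. By definition of $\mathcal{S}$ we have $d(\alpha,h_{\tilde{\alpha}_{j_0}\to\beta})=d(\alpha,h_\beta)=0$, while $d(\alpha,h)=0$ holds because $(\alpha_i,h)$ is good; this is exactly the agreement recorded in Observation~\ref{obs:h-hbeta}, so the first robustness condition holds simultaneously for $h$ and $h_\beta$.

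The main obstacle is the second robustness condition, $D^j(\alpha,\cdot)\le\epsilon_1^{1/2}$ for every $j$, because $D^j$ resamples the output of the $j$-th query of the subverted computation, and the runs of $\tilde{h}(\alpha)$ and $\tilde{h_\beta}(\alpha)$ can diverge once $\alpha_i$ has been queried. Write $q_j$ for the $j$-th query made by $\tilde{h_\beta}(\alpha)$. I would split on the query index. Up to and including index $j_0$ the two runs make identical queries, since their responses first differ only at $\alpha_i$; in particular $q_j=\tilde{\alpha}_j$ for $j\le j_0$. At $j=j_0$ the two quantities are literally equal, as $(h_\beta)_{\alpha_i\to\beta'}=h_{\alpha_i\to\beta'}$ for every $\beta'$, whence $D^{j_0}(\alpha,h_\beta)=\mathsf{Ex}_{\beta'}\,d(\alpha,h_{\alpha_i\to\beta'})=D^{j_0}(\alpha,h)$. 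For the remaining indices the resampled functions $(h_\beta)_{q_j\to\beta'}$ and $h_{\tilde{\alpha}_j\to\beta'}$ differ only at the single point $\alpha_i$, so I would bound $|D^j(\alpha,h_\beta)-D^j(\alpha,h)|$ by the $\beta'$-probability that the resampled run on $\alpha$ actually touches $\alpha_i$, and argue---using $d(\alpha,h_\beta)=0$ together with the single-point agreement of $h$ and $h_\beta$ off $\alpha_i$---that this correction never pushes $D^j$ across the threshold $\epsilon_1^{1/2}$. Showing that this threshold form of robustness is stable under flipping the one value at $\alpha_i$ is the delicate heart of the argument; once it is in place, every step is an equivalence (or runs symmetrically in $h$ and $h_\beta$), which delivers the full biconditional.
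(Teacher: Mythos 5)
Your decomposition matches the paper's: you transfer condition (ii) of Definition~\ref{def:good-pair} via Observation~\ref{obs:queryset}, you transfer the unsubverted-ness of each $\alpha\twoheadrightarrow\alpha_i$ using $\beta\in\mathcal{S}$, and you note that the index $j_0$ at which $\alpha_i$ is queried is the same in both runs. The case $j=j_0$ of the second robustness condition is also handled exactly as in the paper, via $(h_\beta)_{\alpha_i\to\beta'}=h_{\alpha_i\to\beta'}$.

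The gap is in your treatment of the indices $j\neq j_0$. The paper closes this case by asserting the \emph{exact} equality $D^j(\alpha,h)=D^j(\alpha,h_\beta)$ for every $j$ (its displayed justification writes $D^j$ as a resampling at $\alpha_i$ for every $j$); your plan instead bounds $\left|D^j(\alpha,h_\beta)-D^j(\alpha,h)\right|$ by some nonzero quantity and then claims the correction ``never pushes $D^j$ across the threshold $\epsilon_1^{1/2}$.'' That final step cannot work as stated: robustness is the non-strict inequality $D^j(\alpha,h)\le\epsilon_1^{1/2}$, which may hold with equality or with arbitrarily little slack, so no nonzero additive perturbation bound can preserve it --- you need equality, not approximate equality, and you never produce it. A second problem is the premise of your perturbation bound: for $j>j_0$ you assert that $(h_\beta)_{q_j\to\beta'}$ and $h_{\tilde{\alpha}_j\to\beta'}$ ``differ only at the single point $\alpha_i$,'' but this presupposes $q_j=\tilde{\alpha}_j$; once the two runs have received different answers to their common $j_0$-th query ($\beta$ versus $h(\alpha_i)$), the later queries of $\tilde{h_\beta}(\alpha)$ and $\tilde{h}(\alpha)$ may diverge, after which the two resampled functions differ at up to three points and your single-point comparison breaks down. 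So the step you yourself flag as the ``delicate heart'' is precisely the one that is missing, and the route you sketch for it would fail; what the paper uses (and what the argument needs) is the stronger claim that the $D^j$ values are unchanged, not merely perturbed by a controlled amount.
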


\begin{proof}
  As $\beta\in S$, $d(\alpha_i,h_\beta)=d(\alpha_i,h)$. By Observation \ref{obs:queryset},$ Q_{\twoheadrightarrow\alpha_i}^h=Q_{\twoheadrightarrow\alpha_i}^{h_\beta}$. So, only things that require proofs are the following.  First, for each $k$ it should hold that, $\alpha_i$ is the $k^{th}$ query of $\tilde{h}(\alpha)$ if and only if $\alpha_i$ is also the $k^{th}$ query of $\tilde{h_{\beta}}(\alpha)$. Moreover, it needs to hold that ${D}^k(\alpha, h)={D}^k(\alpha, h_\beta)$. The first statement follows from the following observation. As all the previous $k-1$ query responses are same in both cases, the $k^{th}$ query made by the implementation on $\alpha$ will be same. Hence $\alpha_i$ is the $k^{th}$ query of $\tilde{h}(\alpha)$, if and only if $\alpha_i$ is also the $k^{th}$ query of $\tilde{h_{\beta}}(\alpha)$. The second statement follows from the definition.
  \begin{align*}
    {D}^k(\alpha, h)=\mathsf{Ex}_{\beta'}(d(\alpha,h_{\alpha_i\rightarrow \beta'}))={D}^k(\alpha, h_\beta)
  \end{align*}
  \qed
\end{proof}

The next claim proves the second statement
\begin{Claim}
  For all $j<i$, $\alpha_j\not\twoheadrightarrow_{h_\beta} \alpha_i$ where $\alpha_j=(j,m\xor R_j)$.
\end{Claim}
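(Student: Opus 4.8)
The plan is to obtain the claim as an immediate consequence of Observation~\ref{obs:queryset}, which I have already established just above. Recall that this observation asserts the set equality $Q_{\twoheadrightarrow\alpha_i}^h = Q_{\twoheadrightarrow\alpha_i}^{h_\beta}$, i.e., a point queries $\alpha_i$ in the computation of $\tilde{h}$ exactly when it queries $\alpha_i$ in the computation of $\tilde{h_\beta}$. The mechanism behind this equality is that $h$ and $h_\beta$ agree everywhere except at $\alpha_i$: if $\tilde{h}(\alpha')$ reaches $\alpha_i$ as its $k$-th query, its first $k-1$ queries are all distinct from $\alpha_i$, so $h$ and $h_\beta$ return identical answers on them, and $\tilde{h_\beta}(\alpha')$ therefore evolves identically through its $k$-th query; the reverse inclusion is symmetric, since resampling $h_\beta$ at $\alpha_i$ recovers $h$. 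I will rely on the full equality, so that both inclusions are available.

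With this in hand I would argue as follows. Since $i$ is the index of interest for $(m, R, h)$, its defining property guarantees that $\alpha_j \not\twoheadrightarrow_h \alpha_i$ for every $j < i$; equivalently, $\alpha_j \notin Q_{\twoheadrightarrow\alpha_i}^h$ for all such $j$. Applying Observation~\ref{obs:queryset} to rewrite $Q_{\twoheadrightarrow\alpha_i}^h$ as $Q_{\twoheadrightarrow\alpha_i}^{h_\beta}$, I conclude $\alpha_j \notin Q_{\twoheadrightarrow\alpha_i}^{h_\beta}$, which is precisely $\alpha_j \not\twoheadrightarrow_{h_\beta} \alpha_i$. As $j < i$ was arbitrary, the claim follows, completing the verification of the second statement of Lemma~\ref{lemma:index-independent-beta}.

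I do not anticipate any real obstacle, since the substantive work has been absorbed into Observation~\ref{obs:queryset}; the only point demanding care is the bridge between the two functions. The condition ``$\alpha_j \not\twoheadrightarrow \alpha_i$'' in the definition of the index of interest is phrased with respect to $h$, whereas the claim is about $h_\beta$, and Observation~\ref{obs:queryset} is exactly the statement that licenses this transfer. I would also remark that this argument uses nothing about $\beta$ beyond the fact that $h_\beta = h_{\alpha_i \to \beta}$ differs from $h$ only at $\alpha_i$; hence the conclusion holds for every $\beta \in \bin^n$, and the restriction $\beta \in \mathcal{S}$ is needed only for the companion statement $\alpha_i \notin B_{h_\beta}$, not here.
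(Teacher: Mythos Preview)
Your proposal is correct and follows essentially the same argument as the paper: both invoke Observation~\ref{obs:queryset} to transfer the property $\alpha_j \notin Q_{\twoheadrightarrow\alpha_i}^h$ (which holds because $i$ is the index of interest) to $\alpha_j \notin Q_{\twoheadrightarrow\alpha_i}^{h_\beta}$. Your additional remark that the argument requires nothing about $\beta$ beyond $h_\beta = h_{\alpha_i \to \beta}$ is a valid observation not made explicit in the paper.
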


\begin{proof}
  Fix a $j<i$. We know $\alpha_j\not\twoheadrightarrow_{h} \alpha_i$. This implies $\alpha_j\notin  Q_{\rightarrow\alpha_i}^h$. By Observation \ref{obs:queryset}, $\alpha_j\notin  Q_{\rightarrow\alpha_i}^{h_\beta}$. Hence, $\alpha_j\not\twoheadrightarrow_{h_\beta} \alpha_i$. \qed
\end{proof}

The final step is to prove that $i$ is the minimum such index  even for $(R,h_\beta)$ that satisfies first two points. For that we prove the following,

\begin{Claim}
  If there exists and index $i'<i$ such that $\alpha_{i'}\notin B_{h_\beta}$ and $\forall~j<i' \alpha_j\not\twoheadrightarrow_{h_\beta}\alpha_{i'}$, then it holds that $\alpha_{i'}\notin B_{h}$ and $\forall~j<i'$,$ \alpha_j\not\twoheadrightarrow_{h}\alpha_{i'}$
\end{Claim}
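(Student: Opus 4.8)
The plan is to read this final Claim as exactly the minimality (third) assertion of Lemma~\ref{lemma:index-independent-beta}: once it is proved, the index of interest $i$ for $(m,R,h)$ is automatically the smallest index satisfying the first two conditions for $(m,R,h_\beta)$. Indeed, were there some $i'<i$ satisfying both conditions for $h_\beta$, the Claim would transport them to $h$, contradicting that $i$ is by definition the \emph{smallest} such index for $h$. So I would prove the Claim as stated and then close by this contradiction. I would treat the two conclusions, $\alpha_{i'}\notin B_h$ and $\forall j<i',\ \alpha_j\not\twoheadrightarrow_h\alpha_{i'}$, separately, reusing the two tools already set up for $\alpha_i$, namely the query-set invariance of Observation~\ref{obs:queryset} and the off-$\alpha_i$ output invariance of Observation~\ref{obs:h-hbeta}.

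For the query condition, fix $j<i'$. Since $i'<i$ and $i$ is the index of interest for $(m,R,h)$, its second defining property gives $\alpha_j\not\twoheadrightarrow_h\alpha_i$; that is, the computation of $\tilde h(\alpha_j)$ never queries $\alpha_i$, the unique point at which $h$ and $h_\beta$ disagree. Hence $\tilde h(\alpha_j)$ and $\tilde h_\beta(\alpha_j)$ receive identical answers at every step and issue identical query sequences, so $\alpha_j\twoheadrightarrow_h\alpha_{i'}$ iff $\alpha_j\twoheadrightarrow_{h_\beta}\alpha_{i'}$. The hypothesis $\alpha_j\not\twoheadrightarrow_{h_\beta}\alpha_{i'}$ then yields $\alpha_j\not\twoheadrightarrow_h\alpha_{i'}$. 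This direction is routine and uses nothing beyond the minimality of $i$.

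The substantive part is transporting goodness, from $\alpha_{i'}\notin B_{h_\beta}$ to $\alpha_{i'}\notin B_h$. Following the template of Claim~\ref{claim:Chtildesame}, I would show that the two ingredients of Definition~\ref{def:good-pair} agree for $h$ and $h_\beta$ at $\alpha_{i'}$: that $Q_{\twoheadrightarrow\alpha_{i'}}^h=Q_{\twoheadrightarrow\alpha_{i'}}^{h_\beta}$, and that each $\alpha$ in this common set is robust for $h$ iff for $h_\beta$ (equal $d$-value from Observation~\ref{obs:h-hbeta}, equal $D^k$-values as in Claim~\ref{claim:Chtildesame}). For points $\alpha$ whose computation never reaches $\alpha_i$ this is immediate, exactly as in the query step. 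The main obstacle is the points $\alpha\in Q_{\twoheadrightarrow\alpha_i}^h$ that query \emph{both} $\alpha_i$ and $\alpha_{i'}$: once such an $\alpha$ queries $\alpha_i$, the answers $h(\alpha_i)$ and $\beta$ differ, so a priori its remaining query sequence, and whether it subsequently queries $\alpha_{i'}$, could change. Here I would lean on the defining property of $\mathcal S$: for $\beta\in\mathcal S$ every $\alpha\in Q_{\twoheadrightarrow\alpha_i}^h$ has $d(\alpha,h_\beta)=0$, so it stays unsubverted and $\tilde h_\beta(\alpha)=h_\beta(\alpha)=\tilde h(\alpha)$ by Observation~\ref{obs:h-hbeta}; combined with the count bound $|Q_{\twoheadrightarrow\alpha_i}^h|\le 1/\epsilon_1^{1/4}$ from goodness of $(\alpha_i,h)$, I would argue these few, controlled points cannot corrupt the robust/count profile of $\alpha_{i'}$. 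Pinning down this last step, that resampling $h(\alpha_i)$ to any $\beta\in\mathcal S$ leaves the query-set and robustness profile of $\alpha_{i'}$ intact even through points that touch $\alpha_i$, is where the quantification in the definition of $\mathcal S$ over \emph{all} resampling positions and all $\alpha\in Q_{\twoheadrightarrow\alpha_i}^h$ does the real work. With both conclusions transported, the contradiction with the minimality of $i$ for $h$ finishes the proof.
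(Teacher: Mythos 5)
Your proposal takes essentially the same route as the paper: the query condition is transported exactly as in the paper's proof (since $j<i'<i$ and $i$ is the index of interest for $(m,R,h)$, we have $\alpha_j\not\twoheadrightarrow_h\alpha_i$, so the computation of $\tilde h(\alpha_j)$ is identical under $h$ and $h_\beta$ and hence queries $\alpha_{i'}$ under one iff under the other), and the goodness condition is transported via the same mechanism as Claim~\ref{claim:Chtildesame}, which is precisely what the paper invokes. The only difference is that you explicitly flag, without fully discharging, the subtlety that Claim~\ref{claim:Chtildesame} and Observation~\ref{obs:queryset} are stated for $\alpha_i$ (the resampled point) rather than for $\alpha_{i'}$, so that points querying both $\alpha_i$ and $\alpha_{i'}$ need separate care; the paper's own proof silently elides this by citing Claim~\ref{claim:Chtildesame} directly, so your version is, if anything, the more candid account of where the remaining work lies.
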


\begin{proof}
  Using Claim \ref{claim:Chtildesame}, we get
  \begin{align*}
    \alpha_{i'}\notin B_{h_\beta}\implies\alpha_{i'}\notin B_{h}
  \end{align*}

  For the second part, we show the contrapositive. Suppose for some $j<i'$, $\alpha_j\twoheadrightarrow_{h}\alpha_{i'}$. Hence $\alpha_j\in Q^h_{\rightarrow \alpha_{i'}}$. Moreover from assumption $\alpha_j\not\twoheadrightarrow_{h}\alpha_{i}$. All the queries made by $\tilde{h}(\alpha_j)$ and the responses remain same in $\tilde{h_\beta}(\alpha_j)$. This implies $\alpha_j\twoheadrightarrow_{h_\beta}\alpha_{i'}$.
\qed  
\end{proof}

\subsection{Proof of Proposition \ref{prop:key}}
\label{sec:proof-prop-key}

\begin{proof}
  We let $B_j$ denote the event of the choice of $h(\alpha_i)$ such that $(0, \tilde{g}_R^{h_\beta}(m) )\in \mathcal{D}(L_0) \cup C^{h_\beta} \wedge (R, h_\beta, F) \vdash \tau$ holds.

  \begin{align*}
    |B_j| \leq \epsilon 2^n \pl (q_1 \pl i).
  \end{align*}
  
  Fix a function $h$. $E_j=\tau_j\wedge i\mbox{ is the index of interest for }(m,R,h)\wedge \{h_{\alpha_i\rightarrow\beta}\}_{\beta\in S}\}$.
  As there are $|S|$ many choices of $h(\alpha_i)$ in $E_j$, for all $y\in S$
  \begin{align*}
    \Pr[h(\alpha_i)=y|E_j]\leq \frac{1}{|S|}
  \end{align*}

  Hence

  \begin{align*}
    \Pr[B_j|E_j]\leq \frac{|B_j|}{|S|}\leq \frac{2|B_j|}{2^n} \leq 2\epsilon \pl \frac{2(q_1 \pl i)}{2^n}
  \end{align*}

In the second inequality we use that $|S|\geq (1-\epsilon_1^{1/2})2^n> 2^{n-1}$.
Taking sum over all candidate $h$ and thus all choice of $i$, we get the proposition.
\qed
\end{proof}

\end{document}